\documentclass[11pt,letterpaper]{article}
\usepackage[T1]{fontenc}
\usepackage{lmodern}
\usepackage[margin=1in]{geometry}
\usepackage{graphicx}

\usepackage{amsmath,amssymb,amsthm,mathtools}
\usepackage{aliascnt}
\usepackage{booktabs}
\usepackage{enumitem}
\usepackage[ruled,vlined,linesnumbered]{algorithm2e}
\SetAlFnt{\footnotesize}
\SetAlCapFnt{\small}
\SetAlCapNameFnt{\small}
\SetAlgoNlRelativeSize{0}
\SetNlSty{scriptsize}{}{}

\usepackage{placeins}
\usepackage{microtype}
\usepackage{xurl}
\usepackage{xcolor}
\usepackage{tikz}
\definecolor{mydarkblue}{rgb}{0,0.08,0.45}
\usepackage[numbers,sort&compress]{natbib}
\usepackage[colorlinks=true,linkcolor=mydarkblue,citecolor=mydarkblue,urlcolor=mydarkblue]{hyperref}
\usepackage[capitalize]{cleveref}
\crefname{algocf}{algorithm}{algorithms}
\Crefname{algocf}{Algorithm}{Algorithms}

\newtheorem{theorem}{Theorem}[section]
\newaliascnt{lemma}{theorem}
\newtheorem{lemma}[lemma]{Lemma}
\aliascntresetthe{lemma}
\newaliascnt{corollary}{theorem}
\newtheorem{corollary}[corollary]{Corollary}
\aliascntresetthe{corollary}
\newaliascnt{proposition}{theorem}
\newtheorem{proposition}[proposition]{Proposition}
\aliascntresetthe{proposition}
\newaliascnt{conjecture}{theorem}

\aliascntresetthe{conjecture}
\newtheorem*{question}{Question}
\newaliascnt{claim}{theorem}

\aliascntresetthe{claim}
\crefname{claim}{claim}{claims}
\Crefname{claim}{Claim}{Claims}
\newif\ifgainrestate
\newcommand{\gainlabel}[1]{\ifgainrestate\tag{\ref{#1}}\else\label{#1}\fi}
\theoremstyle{definition}
\newaliascnt{definition}{theorem}
\newtheorem{definition}[definition]{Definition}
\aliascntresetthe{definition}
\newaliascnt{problem}{theorem}

\aliascntresetthe{problem}
\crefname{problem}{problem}{problems}
\Crefname{problem}{Problem}{Problems}
\newaliascnt{remark}{theorem}
\newtheorem{remark}[remark]{Remark}
\aliascntresetthe{remark}
\usepackage{thmtools,thm-restate}

\newcommand{\ind}[1]{\mathbf{1}_{#1}}
\DeclareMathOperator*{\E}{\mathsf{E}}
\let\Pr\relax
\DeclareMathOperator*{\Pr}{\mathsf{Pr}}
\DeclareMathOperator{\Var}{\mathsf{Var}}
\DeclareMathOperator{\Cov}{Cov}
\DeclareMathOperator{\rank}{rank}

\DeclareMathOperator{\im}{im}

\newcommand{\loss}{\mathcal{L}}
\newcommand{\R}{\mathbb R}

\newcommand{\xcol}{A_h}   \newcommand{\Bmat}{A_M}   \newcommand{\Sig}{\Sigma}

\newcommand{\Info}{\mathcal I}
\newcommand{\Vinfo}{\mathcal V}
\newcommand{\ahat}{a}

\newcommand{\oracle}{\psi}

\DeclareMathOperator{\tr}{tr}
\DeclareMathOperator{\diag}{diag}

\title{\bfseries No Escape from the Quadratic Barrier:\\
Adaptive Attacks on Linear $\ell_2$ Sketches}
\author{
  Edith Cohen\\
  \small Google Research\\
  \small Tel Aviv University\\
  \small \href{mailto:edith@cohenwang.com}{\texttt{edith@cohenwang.com}}
  \and
  Uri Stemmer\\
  \small Tel Aviv University\\
  \small Google Research\\
  \small \href{mailto:u@uri.co.il}{\texttt{u@uri.co.il}}
}
\date{}
\hypersetup{
  pdftitle={No Escape from the Quadratic Barrier: Adaptive Attacks on Linear l2 Sketches},
  pdfauthor={Edith Cohen and Uri Stemmer}
}

\begin{document}
\maketitle

\begin{abstract}
Linear sketches estimate properties of high-dimensional inputs from a fixed low-dimensional projection. At constant accuracy, randomized Euclidean-norm sketches, notably the Johnson–Lindenstrauss transform, support exponentially many \emph{nonadaptive} queries in sketch dimension $k$. For adaptive black-box queries, known privacy-based defenses guarantee $\widetilde\Omega(k^2)$ and it remained open whether stronger guarantees were possible. Prior attacks used more queries, restricted the responder, or provided weaker failure guarantees.
 We show that the quadratic guarantee is optimal up to logarithmic factors, even with a computationally unbounded responder that knows the matrix, attack algorithm, history, and query distributions and may change its randomized rule after every query. For every fixed hidden $A\in\mathbb{R}^{k\times n}$ with $n>40k$ and target accuracy $\alpha \in (0,1]$, our black-box attack uses $O(k+\alpha^2k^2\log(2+\alpha^2k))$ queries in $\widetilde O(1+\alpha^2k)$ adaptive rounds to force, with constant probability, a round in which a constant fraction of answers have $\Omega(\alpha)$ relative error. Our results clarify the limits of defending a fixed compressive linear representation by adapting the prediction rule.

\end{abstract}

\section{Introduction}
\label{sec:introduction}

A system that is reliable on independently chosen inputs may fail under
adaptive interaction, where new queries are formed after observing answers that may reveal internal parameters.
In machine learning, a model can have small error on its test distribution
while searches guided by its outputs find failures
\citep{PapernotMGJCS17,IlyasEAL18}. Defenses against language-model attacks can be bypassed through adaptive interaction
\citep{NasrEtAl25}.  The robustness of a system is studied by quantifying how many interactions
are needed to find a failure, and by how much a responder can delay this
process by controlling what its answers reveal.

We study this question for the fundamental task of estimating the Euclidean norm of a
vector from a \emph{linear sketch}.  A fixed matrix $A\in\R^{k\times n}$
compresses each input $v\in\R^n$ to $k$ linear measurements $Av$.
A \emph{responder} uses this sketch  to estimate $\|v\|_2$.
The \emph{black-box} attacker chooses inputs and sees the returned estimates, but never sees
$A$ or the sketches.
Note that when $k<n$, inputs with different norms can have the same sketch, so errors exist. The aim of the attacker is to force failures \emph{efficiently}.

\paragraph{The quadratic benchmark.}
At constant relative accuracy, classical randomized constructions,
including the Johnson--Lindenstrauss transform and the AMS sketch
\citep{JohnsonLindenstrauss84,AlonMS99}, support exponentially many \emph{nonadaptive}
queries in the sketch dimension $k$.  Under adaptivity, the immediate guarantee is $\widetilde{\Omega}(k)$, by using a fresh sketch for each query. Defenses
based on differential privacy or subsampling support $\widetilde\Omega(k^2)$ queries
\citep{HassidimKMMS20,Blanc23}, by using a randomized estimator with JL or AMS sketches.
These defenses draw on adaptive data analysis: limiting what each answer reveals allows repeated use of the same randomness \citep{DworkFHPRR15,BassilyNSSSU16}.
The quadratic guarantee is attainable but its optimality remained open:
\begin{question}
Can a different choice of a linear sketch and responder support substantially more adaptive queries at the same sketch dimension?
\end{question}
Prior attacks either required polynomially more queries, targeted particular sketching matrices or estimators, restricted the responder, or established weaker failure guarantees
\citep{HardtWoodruff13,BenEliezerJWY20,CohenNSS23,AhmadianCS25}.

\subsection{Our results: attacks}

We show that the quadratic guarantee is also a limit. We design an attack algorithm that applies to any matrix with $n>40k$ and forces errors even from powerful unrestricted responders.

\paragraph{Informed responder model.}
The responder has unbounded computation, memory, and private randomness. It knows $A$, the attack algorithm,
and the attacker's complete execution history and current state. Before each query, it knows the input distribution. It may change its randomized rule adaptively after every query. Its only information restriction is that it observes each fresh input through the sketch. After it returns a response, the
input itself and the rest of that query's execution are revealed. Note that a stronger responder model only strengthens the attack guarantee. \Cref{def:interaction} formalizes this model.

\paragraph{Norm gap task.}
Our attacks work even when the responder only has to distinguish
$\|v\|_2\le L$ from $\|v\|_2\ge U$, for fixed public thresholds
with $U/L=1+\Theta(\alpha)$; either binary answer is permitted
between the thresholds. This task, also considered by \citep{HardtWoodruff13,AhmadianCS25}, is easier than norm estimation with relative error $\Omega(\alpha)$: thresholding a sufficiently accurate norm estimate gives a correct gap answer. Thus errors on the norm-gap task transfer to errors in norm estimation.

\begin{theorem}[A universal attack]
\label{thm:norm-operational}
There is an attack algorithm and a positive absolute constant
$c_{\rm frac}$ such that, for every $A\in\R^{k\times n}$ with $n>40k$,
every $\alpha\in(0,1]$, and every informed responder, the attack uses
\[
  O\bigl(k+\alpha^2k^2\log(2+\alpha^2k)\bigr)
\]
queries and returns, with probability at least $0.9$, a round $B$ of
$\Theta(k)$ consecutive queries generated independently from one announced
distribution. For binary norm gap with $U/L=1+\Theta(\alpha)$, at least
$c_{\rm frac}|B|$ answers are incorrect; for norm estimation, at least
$c_{\rm frac}|B|$ estimates have relative error $\Omega(\alpha)$.

All queries are supported on a common set of $40k+1$ coordinates, which
may be prescribed arbitrarily before the interaction. The attack uses $O(1+\alpha^2k\log(2+\alpha^2k))$ adaptive rounds; working
memory is $\widetilde O(k+\alpha^2k^2)$ words and arithmetic cost is
$\widetilde O(k^2+\alpha^4k^4)$.
\end{theorem}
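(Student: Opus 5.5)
We restrict attention to the prescribed support $S$ of $40k+1$ coordinates and write $A_S\in\R^{k\times(40k+1)}$. Its kernel $K=\ker A_S$ has dimension at least $39k+1$, and the row space $R=\im A_S^\top$ has dimension at most $k$. The responder sees only $A_Sv$, so the component of $v$ in $K$ is invisible to it. The plan is a Gaussian ``signal-plus-noise'' attack in the style of Hardt–Woodruff and Ahmadian–Cohen–Stemmer, sharpened so that the number of queries is $\alpha^2k^2$ rather than a polynomially larger count.

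\textbf{Step 1: rounds and distributions.} In each round we announce a distribution $v=\Sigma^{1/2}g$ with $g$ standard Gaussian on $S$ and covariance $\Sigma=\sigma^2 I+\beta P$, and draw $\Theta(k)$ i.i.d.\ queries from it. Here $P$ is the attacker's current estimate of a projection onto directions correlated with $R$. The norm is $\|v\|_2^2\approx\sigma^2(40k+1)+\beta\,\tr P$, with fluctuations of order $\sigma^2\sqrt{k}$. We scale $\sigma,\beta$ so that these fluctuations straddle the thresholds $L,U$ at relative scale $\alpha$. Each query is then a gap instance with constant probability.

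\textbf{Step 2: what leaks.} A correct responder must correlate its answer with $\|v\|_2^2$. It only sees $A_Sv$, so its answer can only be correlated with $\|\Pi_R v\|^2$, and is conditionally independent of the kernel part given the sketch. We would prove an information/covariance lemma: if, in a round of $m=\Theta(k)$ queries, at most a $c_{\rm frac}$ fraction of answers are wrong, then the statistic
\[
M=\sum_{t}\bigl(\ahat_t-\E\ahat_t\bigr)\bigl(v_tv_t^\top-\Sigma\bigr)
\]
has expected inner product with $\Pi_R$ at least $\Omega(\alpha\, m/\sqrt{k})$ in a normalized sense. The argument goes as follows. Conditioned on the sketch, the kernel part of $v$ is Gaussian and independent of the answer, so its contribution to $M$ is mean-zero noise. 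Correctness, meanwhile, forces a Hermite-degree-2 correlation with $\|\Pi_R v\|^2$.

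\textbf{Step 3: accumulation.} We accumulate $M$ over rounds and use a matrix-concentration bound (Gaussian chaos / matrix Bernstein) on the noise part. The noise in directions of $K$ has operator norm about $\sqrt{Tk}$ after $T$ total queries. The signal along $R$ grows like $T\alpha/\sqrt{k}$, so it dominates once $T\gtrsim k^2\alpha^{-2}$ per unit of progress. We would tune the rounds by progressively boosting $\beta$ on the learned subspace; this is the intended source of the $\alpha^2k^2\log(2+\alpha^2k)$ bound, with roughly $\alpha^2k\log$ rounds each of $\Theta(k)$ queries. After enough rounds, the top-$k$ eigenspace $P$ of the accumulated $M$ captures a constant fraction of $\Pi_R$.

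\textbf{Step 4: the final round.} Once $P$ captures that constant fraction, the attacker announces a distribution that puts the norm variation entirely inside an (approximate) kernel direction, for instance by adding a random-magnitude component in $P^\perp\cap$ a subspace nearly orthogonal to $R$, while keeping the sketch distribution essentially independent of the norm. A total-variation argument then shows that any responder errs on a constant fraction of $\Theta(k)$ i.i.d.\ queries. Either some intermediate round already has many errors, or progress continues. A stopping rule that returns the first round with error rate at least $c_{\rm frac}$ gives the $0.9$ guarantee, using a martingale argument over the adaptive responder.

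The main obstacle is Step 2 against a fully adaptive, informed responder. We must show that \emph{every} randomized rule that is correct on most queries leaks signal in a direction the attacker can detect, with no assumption on the responder's estimator. We would handle it via a Gaussian integration-by-parts (Stein) identity. The quantity $\E[\ahat\,(v^\top Xv-\tr X\Sigma)]$ equals a second derivative of the answer's conditional expectation as a function of the sketch. Correctness on the gap forces this to have $\Omega(\alpha/\sqrt k)$ trace against $\Pi_R$. Resource bounds follow from storing the $(40k+1)\times(40k+1)$ accumulator.
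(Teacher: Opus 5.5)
Your proposal has a genuine gap, and your own accounting shows it. In Step~3 the drift $T\alpha/\sqrt k$ beats the fluctuation $\sqrt{Tk}$ only once $T\gtrsim k^2/\alpha^2$. That scales as $\alpha^{-2}$, the opposite of the target $\alpha^2k^2\log(2+\alpha^2k)$, so the sentence calling this ``the intended source'' of the bound does not follow.

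The deeper problem is the goal of Steps~3--4 itself. Learning a rank-$k$ projection $P$ that captures a constant fraction of $\Pi_R$, beyond the trivial overlap $\tr(P\Pi_R)\approx k/40$ of a random $P$, is too expensive. The attack must work for every $A$, so take $A$ Gaussian; then the row space is uniformly random, and a fixed $P$ achieves nontrivial overlap with probability $e^{-\Omega(k^2)}$. For the binary norm gap each answer is one bit, so a union bound over the at most $2^Q$ answer strings forces $Q=\Omega(k^2)$. The theorem allows only $O(k)$ queries at $\alpha=\Theta(k^{-1/2})$, and $o(k^2)$ whenever $\alpha^2\log k\to0$.

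Three further steps fail as written:
\begin{enumerate}
\item[(i)] In Step~1 the norm fluctuations you describe are at relative scale $k^{-1/2}$. For $\alpha\gg k^{-1/2}$ almost every query therefore lies between $L$ and $U$, where any answer is correct.
\item[(ii)] Step~2 at best bounds a trace against $\Pi_R$. An informed, adaptive responder can route its correlation through a few directions of $R$ (decoys), so the top eigenspace of $M$ need not approximate $\Pi_R$, and Step~4 has nothing to stand on.
\item[(iii)] The $(40k+1)^2$ accumulator exceeds the $\widetilde O(k+\alpha^2k^2)$ memory bound at small $\alpha$; for example, at $\alpha=\Theta(k^{-1/2})$ the bound is $\widetilde O(k)$.
\end{enumerate}

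The paper never learns the row space, or even a single hidden direction. It places a scalar signal $w$ of spread $\Theta(\alpha)$ on a random coordinate $h$ and keeps the noise norm concentrated ($\tr\Sig^2=O(\alpha^2)$). The norm tasks then reduce to a signal gap observed through the scalar Gaussian channel $T_\Sig(y)=w+N(0,c^2\Vinfo(\Sig))$.

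The gain lemma gives a \emph{first-order} leak, $\E[\oracle(y)(T_\Sig(y)-w)]\ge c\Vinfo(\Sig)\,c_gc^2/\alpha$, which grows as $\alpha$ shrinks. Because the estimation error is linear in the noise, the signed sum $\sum_j b_jz_j$ carries a useful component along the whitened signal-estimation direction $h$. Hardness only requires raising one scalar variance by the factor $m\lambda=\Theta(\alpha^2k)$, via a spike along the unknown minimum-norm representation $z_*$; this is where the $\alpha^2k$ dependence comes from.

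Decoys are neutralized by the comparator-separation inequality $z^\top X_\star z\ge(\Vinfo(\Sig_\star)/\Vinfo(\Sig))\langle h,z\rangle^2$ on $\im(\Sig^{1/2}\Bmat^\top)$. It makes the normalized signed sum a separating rank-one measurement no matter how large its other components are. Termination follows because the potential $D(\Sig_\star\|\Sig_t)\le\lambda\log(m\lambda)$ drops by $\Omega(q_0)$ on a constant fraction of accepted rounds. The attacker never identifies the hard distribution; it simply runs out of accepted rounds.
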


The linear support size is optimal up to a constant factor: if $A$ is such that every $k$ columns are linearly independent, every input supported on a known set of at most $k$ coordinates can be recovered exactly from its sketch, allowing exact norm answers that do not reveal anything about $A$.

\paragraph{Accuracy vs. robustness}
The attack is more effective when higher accuracy is required.
At constant accuracy, our bounds match known defenses  of $\tilde{\Omega}(k^2)$ up to logarithmic factors.  At high accuracy $\alpha=\Theta(k^{-1/2})$, the attack uses $O(k)$ queries
in $O(1)$ rounds.
At very high accuracy ($\alpha < C k^{-1/2}$), the initial distribution already forces a constant fraction of errors.

\begin{figure}[t]
\centering
\includegraphics[width=0.64\linewidth]{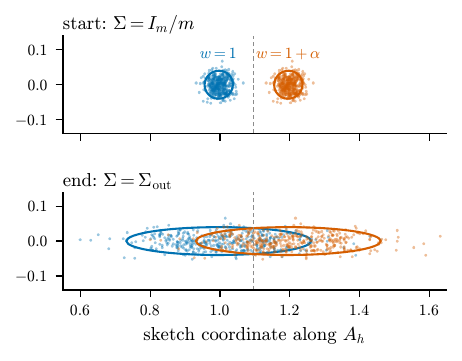}
\caption{The covariance attack in sketch space for queries with signal $w=1$ (blue) and $w=1+\alpha$
(orange). Sketches are projected on the signal column $\xcol$ and an orthogonal direction. Illustration shows $2\sigma$ ellipses of the sketch noise. The dashed line is the threshold $1+\alpha/2$.  At the start the
input noise is isotropic, and the sketch noise may separate the two signals (if there is overlap we are done).  At the
end, if the responder does not err earlier on an easy distribution,
the sketch noise is stretched along $\xcol$
and the two signals overlap. }
\label{fig:start-end}
\end{figure}

\paragraph{Attack structure.}
Each query combines a scalar signal with Gaussian noise.
The signal distribution remains fixed, while the attacker changes
the noise covariance between rounds. The attack keeps the noise norm
concentrated, so accurate norm answers distinguish low from high
signal values. Our analysis proceeds through this auxiliary
\emph{signal-gap} task.

The noise covariance determines how accurately the signal can be
estimated from the sketch. Higher signal-estimation variance
makes the two signal regimes harder to distinguish, even while the
input noise norm remains concentrated.
Starting from isotropic covariance, the attack seeks a
\emph{universally hard} query distribution\footnote{We call a
distribution \emph{universally hard} for a task if every response
rule has expected loss at least a fixed positive constant, even
knowing $A$ and the distribution.}
by changing the noise covariance (see \cref{fig:start-end}).

Each round also checks accuracy. If its empirical signal-gap loss
exceeds the validation threshold, the attack returns the round and
its query distribution. Otherwise, correlations between the answers
and the sampled noise guide a covariance update.
The analysis bounds the number of rounds that can pass validation,
forcing an observed round with many errors.
The returned distribution need not be universally hard, since the
responder can err on an easier distribution.
But if the responder maintains sufficiently small error whenever
a smaller constant error is achievable, the returned distribution
is universally hard.

\paragraph{Sufficient-statistic responders.}
For each fixed distribution, a scalar
\emph{sufficient statistic} contains all the information about the signal available in the sketch.
A \emph{sufficient-statistic responder} is restricted so that, conditional on the disclosed history, its mean binary answer depends on the current sketch only through this statistic (formalized in \Cref{sec:sufficient-model}). The class includes the Bayes-optimal signal-gap predictor, as well as predictors computed from the statistic followed by output noise, rounding, or clipping. It provides an idealized model of repeatedly fitting a predictor to the current query distribution while keeping the compressive map fixed. In this interpretation, accurate predictions supply feedback that guides the attacker's search for a hard noise distribution, even as the predictor is adapted to each new distribution.
\Cref{app:sufficient-readouts} develops the learning interpretation and its implications for output defenses.
We obtain a more efficient attack for informed responders with this additional restriction.

\begin{theorem}[Attack on sufficient-statistic responders; informal]
\label{thm:sufficient-informal}
There is an attack algorithm such that, for every
$A\in\R^{k\times n}$ with $n>40k$, every
$\alpha\in(0,1]$, and every sufficient-statistic norm responder, with probability at least $0.9$ the attack returns a round of $\Omega(k)$ consecutive queries, generated independently from one distribution, for which at least a constant fraction of norm estimates have relative error $\Omega(\alpha)$. The attack uses
$O(k+\alpha^2k^2)$ queries, $O(\log(2+\alpha^2k))$ adaptive rounds, $O(k\log(2+\alpha^2k))$ words of working memory,
and $\widetilde O(k^2+\alpha^2k^3)$ arithmetic operations.
All queries use any prescribed set of $40k+1$ coordinates.
On the same success event, if the threshold is reached
before a round is rejected, the final query distribution is universally hard for norm tasks.
\end{theorem}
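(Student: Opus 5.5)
The plan is to run the covariance attack of \cref{fig:start-end}, with the sufficient-statistic restriction used to replace the slow, many-round update by a few large multiplicative steps. Fix the prescribed support $S$ of size $40k+1$. Inside it, single out a signal coordinate $h$ with column $\xcol$, and let the remaining $40k$ coordinates carry Gaussian noise $g\sim N(0,\Sig)$. Each query is $v=w e_h+g$, where the signal $w\in\{1,1+\alpha\}$ is drawn uniformly and $\tr\Sig$ is normalized so that $\|g\|_2$ concentrates at a known value. Then $\|v\|_2$ separates the two signal values by a factor $1+\Theta(\alpha)$, and the norm-gap and norm-estimation tasks reduce to the signal-gap task with $O(\alpha)$ slack from noise-norm fluctuations. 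For fixed $\Sig$, the sketch is $Av=w\xcol+\Bmat g$ with $\Bmat$ the noise block. The sufficient statistic is the scalar $T=\xcol^\top(\Bmat\Sig\Bmat^\top)^{-1}Av$, and the signal is identifiable with signal-to-noise ratio $\Info(\Sig)=\alpha^2\,\xcol^\top(\Bmat\Sig\Bmat^\top)^{-1}\xcol$. The target is to drive $\Info$ down to a constant. At that point the two signal classes overlap in $T$, and every rule has constant loss, which is the universal-hardness claim.

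Each round draws $\Theta(k)$, or later more, i.i.d.\ queries from the current $\Sig$ and first validates. If the empirical signal-gap error exceeds a threshold, the attack returns the round. A Chernoff bound gives a constant error fraction on the returned round, and a union bound over the $O(\log(2+\alpha^2k))$ rounds keeps total failure probability below $0.1$. Otherwise the answers are accurate. Because the responder is a sufficient-statistic responder, its conditional mean answer is a function $f(T)$. Correlating the answers with the centered noise therefore gives an unbiased estimate of a vector parallel to $\Sig\Bmat^\top(\Bmat\Sig\Bmat^\top)^{-1}\xcol$, i.e.\ the direction $u$ in noise space that the statistic reads. This uses Stein's lemma: $\E[f(T)g]=\E[f'(T)]\,\Sig\nabla_g T$. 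Accuracy forces $\E f'(T)$ to be bounded below by $\Omega(1/\alpha)$ relative to the spread of $T$, so the correlation has a non-negligible signal.

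The update inflates the variance of $\Sig$ along the estimated $u$ by a large factor while compensating the trace so that the noise norm stays concentrated. Since $\Bmat u$ is aligned with $(\Bmat\Sig\Bmat^\top)^{-1}\xcol$ up to estimation error, this raises the effective noise in $T$. The quantity $\Info$ should then drop by a constant factor per round, starting from $\Info(I)=O(\alpha^2k)$ after normalization. Note that $n>40k$ ensures the isotropic start has $\Bmat\Bmat^\top$ well conditioned, so $\Info$ is at most $O(\alpha^2k)$ and decreasing it to $O(1)$ takes $O(\log(2+\alpha^2k))$ rounds. The round sizes must be chosen so that the estimation error of $u$, which has dimension $40k$, is small compared with its norm. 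The $k$-dimensional sketch geometry means only the projection through $\Bmat$ matters, so roughly $k/\mathrm{snr}$ samples suffice. With per-round snr of order $1/(\alpha^2 k)$ relative to the signal budget, the sizes should sum geometrically to $O(k+\alpha^2k^2)$, with no log factor. This is the gain over \cref{thm:norm-operational}, where $f$ could depend on the full sketch and only small steps were safe.

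The main obstacle is the quantitative step that turns passed validation into a guaranteed constant-factor decrease of $\Info$. It must show three things: that low error forces $|\E f'(T)|$ large, that the noisy estimate of $u$ yields a multiplicative drop despite errors in directions orthogonal to it, and that the trace compensation does not undo the gain. I would first try to track $\Info$ via the Sherman--Morrison formula for a rank-one update of $\Bmat\Sig\Bmat^\top$. I would then bound the cross terms using the Gaussian concentration of the empirical correlation vector in the $k$-dimensional image of $\Bmat$.
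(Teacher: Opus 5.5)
\emph{Verdict: the proposal has genuine gaps.} Your alignment observation is correct; it is \Cref{prop:sufficient-phasewise-alignment}. The step that makes it useful fails, however, because with the two-point signal $w\in\{1,1+\alpha\}$ accuracy does \emph{not} force correlation.

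Write $T=w+s\xi$, with $\xi\sim N(0,1)$ the whitened noise component along the statistic's direction. The Bayes rule thresholds $T$ at $1+\alpha/2$. It is a sufficient-statistic responder with loss $\Phi(-\alpha/(2s))$, yet its drift is only $\E[f(T)\xi]=\sqrt{2/\pi}\,e^{-\alpha^2/(8s^2)}$. The attack operates exactly in the regime $s\ll\alpha$: initially $s^2=\Theta(1/k)$ while $\alpha\ge C_{\min}/\sqrt k$, and at constant $\alpha$ the drift is $e^{-\Omega(k)}$. Direction estimation would therefore need exponentially many queries.

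Your heuristic that accuracy forces $\E f'(T)\gtrsim1/\alpha$ only shows that $f'$ is large somewhere in the gap $(1,1+\alpha)$. There the two-point prior gives $T$ exponentially small density, and Stein's identity weights $f'$ by the law of $T$. The missing idea is the paper's trapezoidal signal law (\Cref{def:signal-density}), whose ramps of width $\ell=10\alpha/c$ lie inside the \emph{labeled} regimes. With $F(w)=\E f(w+s\xi)$, integration by parts gives
$\E[f(T)(T-w)]=-s^2\int\nu'F=(C_Ws^2/\ell)\bigl(\int_{1+\alpha}^bF-\int_a^1F\bigr)$.
Low loss forces the bracket to be at least $\ell/2$, which yields the gain $\gtrsim cs^2/\alpha$ of \Cref{lem:gain}.

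Two further ingredients are missing.

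First, initialization. The claim that $n>40k$ makes $\Bmat\Bmat^\top$ well conditioned is false for arbitrary $A$.
\begin{itemize}
\item $\Bmat\Bmat^\top$ may be singular.
\item For a fixed choice of $h$, the column $\xcol$ can lie outside $\im(\Bmat)$, for example if it is the only column with a component along some direction. Then the sketch reveals $w$ exactly under every noise covariance.
\item Alternatively, $\xcol$ may admit only huge-norm representations, so that no norm-admissible covariance is hard.
\end{itemize}
The paper instead draws $h$ uniformly from $S$ and uses the leverage-score identity (\Cref{lem:leverage-representation,cor:random}). Since leverage scores sum to at most $k$, with probability $0.95$ either $\xcol=0$ or $\Vinfo(I_m/m)\ge1/(40k)$.

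Second, the step you flag as the main obstacle is the heart of the proof, and your plan also omits what it must deliver beyond variance growth.
\begin{itemize}
\item Fixing the trace does not keep the noise norm concentrated; that requires $\tr(\Sig^2)\lesssim\alpha^2/c^4$. Inflating by a large factor along imperfect directions can raise $\lambda_{\max}$ faster than the variance and break admissibility before hardness is reached.
\item $\Info$ is unobservable, and the responder may err deliberately on an easy distribution. So a rejected round certifies nothing about universal hardness, and your plan cannot give the theorem's final clause.
\end{itemize}
The paper resolves both issues as follows.
\begin{itemize}
\item The update doubles the whitened variance along $p=\Sig^{1/2}\widehat h$.
\item The variational formula gives $\Vinfo(\Sig^+)/\Vinfo(\Sig)\ge2/\bigl((1+\widehat h^\top\Sig\widehat h)(2-\chi)\bigr)$, with $\chi$ the squared cosine to the true direction, and $\lambda_{\max}(\Sig^+)\le2\lambda_{\max}(\Sig)/(1+\widehat h^\top\Sig\widehat h)$ (\Cref{lem:natural-update-identity}).
\item Angular targets with $\sum_t\varepsilon_t\le1/4$ keep $\Vinfo(\Sig_t)/\lambda_{\max}(\Sig_t)\ge\frac34m\Vinfo(I_m/m)$ (\Cref{lem:sufficient-floor-mass}).
\end{itemize}
That invariant does three things. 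It bounds the number of accepted rounds by $O(\log(2+\alpha^2k))$. It gives $\tr(\Sig_t^2)\le1/m+14\bar\rho^2$. And it makes the observable event $\lambda_{\max}\ge\bar\rho$ certify the hardness margin of \Cref{cor:gain-hardness}.
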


\paragraph{Analyzing attack progress.}
\label{sec:techniques}
We establish the existence of a comparator covariance whose
signal-estimation variance makes the query distribution universally hard.
The key property exploited by both attacks is that accurate answers
force a component of the mean signed noise---the expected product
of the generated noise vector and the binary answer---along the
hidden signal-estimation direction; we extend the correlation
bound of \citet{AhmadianCS25} to general covariances.
For sufficient-statistic responders, the mean signed noise is aligned with the hidden direction, allowing direct multiplicative growth of the signal-estimation variance.
For general responders, the challenge is that other components
of the mean signed noise can obscure the useful direction.
We extract measurements that separate the current covariance
from the hard comparator and use them in matrix
multiplicative-weights updates, with progress measured by a
relative-entropy potential.

\begin{table}[t]
\centering
\footnotesize
\setlength{\tabcolsep}{2pt}
\renewcommand{\arraystretch}{1.2}
\begin{tabular}{@{}p{.13\linewidth}p{.135\linewidth}p{.225\linewidth}cccc@{}}
\toprule
Result & Responder & Guaranteed outcome & Queries & Rounds & Support
& Operations \\
\midrule
\raggedright Hardt--Woodruff \citep{HardtWoodruff13}
 & \raggedright Fixed within round
 & \raggedright Constant error rate (on queried fixed estimator)
 & $\mathrm{poly}(k)$ & $\widetilde O(k)$ & $O(k)$ & $\mathrm{poly}(k)$ \\[3pt]
\raggedright ACS \citep{AhmadianCS25}
 & \raggedright Informed
 & \raggedright Constant error rate OR failure of dist-optimal est
 & $\widetilde O(k^2)$ & $1$ & $\widetilde O(k^2)$ & $\widetilde O(k^4)$ \\[3pt]
\raggedright This work: general
 & \raggedright Informed
 & \raggedright Constant error rate (in round)
 & $O(k^2\log k)$ & $O(k\log k)$ & $\Theta(k)$ & $\widetilde O(k^4)$ \\[3pt]
\raggedright sufficient statistic
 & \raggedright Informed; scalar statistic
 & \raggedright Constant error rate (in round)
 & $O(k^2)$ & $O(\log k)$ & $\Theta(k)$ & $\widetilde O(k^3)$ \\
\midrule
\multicolumn{7}{@{}l}{\emph{Robustness benchmark:} AMS/JL with robustified estimator}\\[2pt]
\raggedright  \citep{HassidimKMMS20,Blanc23}
 & \raggedright Fixed
 & \raggedright Accurate answers
 & $\widetilde\Omega(k^2)$ & --- & --- & --- \\
\bottomrule
\end{tabular}
\caption{Comparison at constant accuracy and success probability. Sketch
 dimension $k$. Rounds are of attack adaptivity. Support is the size of a common
 coordinate set. Arithmetic-operation counts assume unit-cost sampling.}
\label{tab:comparison}
\end{table}

\subsection{Lower bound for informed responders}
Our attacks use $\tilde{O}(k+\alpha^2k^2)$ queries but known defenses \citep{HassidimKMMS20,Blanc23} guarantee only $\widetilde\Omega(\alpha^4k^2)$ adaptive queries, leaving a gap for higher accuracy. The distinction is that the known defenses use responders without knowledge of the query distributions. We show that
for a broad family of signal--noise query distributions, a Gaussian JL sketch can withstand $\Omega(\alpha^2k^2)$ adaptive queries when paired with a responder that is provided with the noise mean and covariance.

\begin{theorem}[Matching lower bound; informal]
\label{thm:covariance-lower-bound-informal}
Fix a target error fraction $\delta\in(0,1)$.
For a sufficiently large constant $C$ and $C/\sqrt{k}\le\alpha\le1$, consider attacks using the signal--noise query family specified in
\Cref{app:covariance-lower-bound}, with sufficiently concentrated noise energy.
For either norm task, any such attack that forces, with probability at least $0.9$, a $\delta$ fraction of errors in an observed interval of $\Omega(k)$ consecutive queries against every sketching matrix and every informed responder requires $\Omega(\alpha^2k^2)$ queries.
The constants may depend on $\delta$.

Subject to the family's moment and concentration conditions, the noise law is arbitrary and may depend on the signal; the distribution may change after every answer.
The lower bound is witnessed by a Gaussian JL matrix sampled once and kept fixed, together with a responder that beyond the matrix, current sketch, and task parameters, uses only the signal direction and noise mean and covariance. \end{theorem}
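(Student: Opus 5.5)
The plan is to exhibit a Gaussian JL matrix $A$ with i.i.d.\ $N(0,1/k)$ entries, sampled once, together with a responder that is privacy-style but exploits knowledge of the noise mean and covariance. Against this pair, every attack from the signal--noise family needs $\Omega(\alpha^2k^2)$ queries to force a $\delta$ fraction of errors in an interval of $\Omega(k)$ consecutive queries.

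Each query has the form $v=w\,e_h+z$, where the signal direction $h$ is disclosed to the responder, $w\in\{1,1+\alpha\}$, and $z$ has announced mean $\mu$ and covariance $\Sig$ with concentrated energy $\|z\|^2\approx \tr\Sig+\|\mu\|^2$. Since $\|v\|^2\approx w^2+2w\langle e_h,z\rangle+\|z\|^2$, the norm task reduces to the signal-gap task. It therefore suffices to estimate $w$ to accuracy $c\alpha$.

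The responder forms the best linear unbiased estimate of $w$ from $Av$ given $(\mu,\Sig)$:
\[
\hat w=\frac{\xcol^\top(A\Sig A^\top)^{+}(Av-A\mu)}{\xcol^\top(A\Sig A^\top)^{+}\xcol}.
\]
Its variance is $\sigma^2(\Sig)=1/\bigl(\xcol^\top(A\Sig A^\top)^+\xcol\bigr)$. The responder then thresholds $\hat w$ at $1+\alpha/2$, but with a randomized, differentially private-style perturbation: it adds noise of scale about $\alpha$ to the threshold, or answers by a noisy majority over a fresh random subset of rows. This limits how much each answer reveals about the hidden "bad" directions of $A$.

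The key steps are as follows.

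\begin{enumerate}[label=(\roman*)]
\item Show that $\hat w$ errs only if $\sigma^2(\Sig)\gtrsim\alpha^2$. Writing $\Sig$ in terms of its energy $\tr\Sig$, the variance $\sigma^2(\Sig)$ is small unless $\Sig$ places weight of order $\alpha^2$ on the pullback of $\xcol$. For a Gaussian $A$ this requires $\Sig$ to be correlated with the random $k$-dimensional row space: at energy $\tr\Sig\approx 1$, a covariance independent of $A$ gives $\sigma^2\approx \tr\Sig/k$, which is about $1/k\ll\alpha^2$ once $\alpha\ge C/\sqrt k$.
\item Quantify how much information about the row space $\im A^\top$, relative to $h$, the attacker needs to align $\Sig$ enough to push $\sigma^2$ to order $\alpha^2$. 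I would measure alignment by a potential such as $\Phi=\tr(P\Sig)$, where $P$ is the projector onto the relevant $k$-dimensional subspace. Reaching $\Phi-\tr\Sig/k\gtrsim\alpha^2$ should require learning about $\alpha^2 k$ "directions' worth" of the subspace, each at resolution about $1/k$.
\item Bound the information revealed per answer. Because the responder's threshold noise has scale $\alpha$ while the answer depends on $A$ only through $\hat w$, whose fluctuation around $w$ has scale $\sigma\lesssim 1/\sqrt k$ while the attack has not yet succeeded, each answer changes the posterior over $A$ by a KL amount of order $1/(\alpha^2k)$ in the relevant direction. Equivalently, the expected correlation between $z$ and the answer (the mean signed noise) is $O(1/(\alpha\sqrt k))$ per query. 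This is the converse of the correlation bound the attack exploits.
\item Combine through a chain-rule or advanced-composition argument. The accumulated alignment after $T$ queries is $O\bigl(\sqrt{T}/(\alpha k)\bigr)$ on the scale of $\Phi$, or $T/(\alpha^2k^2)$ in KL terms. Forcing $\Phi\gtrsim\alpha^2$ beyond the baseline therefore requires $T=\Omega(\alpha^2k^2)$ before the first interval with a $\delta$ fraction of errors. An interval of $\Omega(k)$ queries with a $\delta$ error fraction requires $\sigma^2\gtrsim c_\delta\alpha^2$ throughout, via a Chernoff bound over the interval, with constants depending on $\delta$.
\end{enumerate}

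I expect step (iii) to be the main obstacle. The attacker may choose the noise law adaptively and let it depend on the signal, so $\mu$ and $\Sig$ can differ between the two signal values. The per-answer leakage must be bounded uniformly over the moment and concentration conditions, not merely for Gaussian noise.

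My first attempt would be to make the responder's answer a smooth function, for example a logistic of $(\hat w-1-\alpha/2)/\alpha$, so that the answer probability is Lipschitz in the $A$-dependent part of $\hat w$. I would then bound the Fisher information about $A$ carried by one answer by $O(\mathrm{Var}(\hat w)/\alpha^2)$. Summing this over queries and applying a van Trees or Fano-type inequality against the prior on $A$ should close the argument.
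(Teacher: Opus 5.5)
\paragraph{Gap: the conversion of per-answer leakage into $\Omega(\alpha^2k^2)$ queries is missing.}
Step (iii) has the right mechanism. Output noise of scale $\Theta(\alpha)$ masks an $A$-dependent estimation error of variance $O(1/k)$, so each answer costs $O(1/(\alpha^2k))$ nats. Steps (ii) and (iv), which must turn this into a query lower bound, are heuristic, and their accounting does not close:
\begin{itemize}
\item Summing your per-answer bound gives divergence $T/(\alpha^2k)$, not $T/(\alpha^2k^2)$. An $O(1)$ information budget would therefore only give $T=\Omega(\alpha^2k)$.
\item The alignment version, $\sqrt T/(\alpha k)$ against a target of $\alpha^2$, gives only $T=\Omega(\alpha^6k^2)$.
\item You never show that $\tr(P\Sig)$ controls the estimation variance, nor what neighboring relation a composition theorem would use.
\end{itemize}

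\paragraph{Where the paper gets the missing factor of $k$.}
It comes from the prior on $A$. Write $\nu$ for the estimation variance (your $\sigma^2(\Sig)$). \Cref{lem:lb-fixed-law} shows that for every fixed announcement, $\Pr_A[\nu\ge4e/k]\le2^{-k}$. The proof uses the independence of $x=As$ and $K=A\Gamma A^\top$, a Gaussian quadratic-form MGF bound, and $\E\det K\le k^{-k}$. The paper then compares the real interaction with an ideal one in which each answer is $w_i+\xi_i$. That answer is computed from the attacker's own fresh draw, hence is independent of $A$ (\Cref{lem:lb-first-passage}). Three facts combine:
\begin{itemize}
\item In the ideal interaction, a union bound caps the probability of ever announcing a bad law by $N2^{-k}$.
\item The KL chain rule, stopped at the first bad announcement $\tau$, bounds the divergence by $D\le N(4e/k)/(2\sigma^2)$, with output-noise scale $\sigma=\alpha/C_1$.
\item Data processing on $\{\tau\le N\}$ gives $\Pr[\tau\le N]\le(D+\log2)/\log(1/(N2^{-k}))=O(N/(\alpha^2k^2))+O(1/k)$.
\end{itemize}
At $N=c_0\alpha^2k^2$ the divergence is itself $\Theta(k)$. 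It loses only because finding a bad law requires $\Omega(k)$ nats.

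\paragraph{Your threshold is at the wrong scale.}
The leakage bound in (iii) needs $\nu=O(1/k)$, but (i) declares success only at $\nu\gtrsim\alpha^2$. In between, an answer can leak $\Theta(\nu/\alpha^2)$, which is $\Theta(1)$ near the top, and nothing in your plan controls that regime. The paper stops at $\nu\ge4e/k$, only a constant factor above typical. This is why the exponential tail is the key lemma. Below that threshold, three ingredients make every answer correct with conditional probability at least $1-\delta/4$:
\begin{itemize}
\item $\E[(o-w)^2]\le4e/k+\sigma^2$;
\item $\alpha^2k\ge C$;
\item the energy promise, with both tasks answered from $\widehat r=\sqrt{o^2+\tr\Gamma+\|\mu\|_2^2}$.
\end{itemize}
A Chernoff bound for $e_i\ind{i<\tau}$, plus a union bound over the at most $Q_0^2$ intervals of length at least $\ell k$, then finishes. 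You neither need nor can claim that an erroneous interval has large variance ``throughout.''

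\paragraph{The obstacle you anticipate in (iii) is not real.}
With additive Gaussian noise on the continuous statistic, the per-query divergence is exactly $\E[(T-w)^2]/(2\sigma^2)=\nu/(2\sigma^2)$. This holds because $T-w=\nu x^\top K^\dagger A(u-\mu)$ is linear in $u-\mu$. So only the announced mean and covariance enter, regardless of higher moments or dependence on $w$.

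\paragraph{Setup details.}
In this family $u\perp s$, so $\|v\|_2^2$ has no cross term. The signal law is arbitrary on $[1/2,2]$, not two-point.
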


Taking $\delta$ to be the constant error fraction guaranteed by
our attacks, this establishes optimal query complexity for the
sufficient-statistic attack within this family and accuracy regime.
The general attack is optimal up to its logarithmic factor.

The responder adds noise to a distribution-informed linear estimator. For the analysis, we use KL divergence to bound what its answers reveal about the hidden Gaussian matrix, showing that finding a query law with large signal-estimation variance requires $\Omega(\alpha^2k^2)$ queries.
\Cref{app:covariance-lower-bound} gives the precise statement and
proof.

\subsection{Comparison with prior work}
\label{sec:comparison}
\Cref{tab:comparison} summarizes a comparison of our results with directly related prior work.
A broader related work discussion is included in \Cref{app:related-work}.

The problem was first formalized by
\citet*{HardtWoodruff13}. They gave polynomial-query attacks for arbitrary sketching matrices $A$,
using $\widetilde O(k)$ adaptive rounds and common query support of size $O(k)$ at constant relative accuracy. Their responder uses a randomized estimator
that may change between rounds, but remains fixed throughout each round.
Their attack returns a distribution
on which \emph{that estimator} has a constant error rate.  Our attack forces errors from a much stronger informed responder that additionally knows each
query distribution and may change its rule after every query.

\citet*{AhmadianCS25} (ACS) obtained a $\widetilde O(k^2)$-query attack
against arbitrary $A$ (with $n=\widetilde\Omega(k^2)$) and informed responders.
It samples all queries i.i.d.\ from one distribution in a single nonadaptive round. The attack guarantees that either the responder makes errors on a constant fraction of queries, or the
attack constructs an input on which the estimator optimal for the query
distribution fails (but the latter need not be an error by the actual
responder).  They left open the existence of a quadratic-query attack with a constant error rate and with support reduced from
$\widetilde O(k^2)$ to $\widetilde O(k)$ coordinates. As observed by \citep{AhmadianCS25}, at constant accuracy,
$\widetilde\Omega(k)$ adaptive rounds are necessary to force
errors against arbitrary sketches and responders using polynomially
many queries.\footnote{Indeed, a dimension-$k$ sketch can contain $\widetilde\Omega(k)$ independent JL or AMS blocks, each with $O(\log k)$ rows. Answering each round using a fresh block preserves the nonadaptive accuracy guarantee for all queries in that round.}

\paragraph{Organization.}
\Cref{sec:interaction-model} defines the interaction model and attack objective.
\Cref{sec:general-main} presents the general attack.
\Cref{sec:general-preview} previews the analysis.
\Cref{sec:sufficient-summary} gives the more efficient attack for
sufficient-statistic responders. \Cref{sec:conclusion} concludes.
\Cref{app:related-work} places our work in a broader context of related works.
The full analyses appear in
\Cref{sec:general-analysis,sec:sufficient-analysis}; the lower bound in \Cref{app:covariance-lower-bound}; subsequent appendices
give supporting proofs and details.

\section{Interaction model and attack objectives}
\label{sec:interaction-model}

Fix a matrix $A\in\R^{k\times n}$ and a publicly specified task with response space $\mathcal Y$. The task determines which answers are correct for each input.

The interaction is between an \emph{attacker} and a \emph{responder}. In query $i$, the attacker chooses an input $v_i\in\R^n$, the responder receives the \emph{sketch} $Av_i$ and returns an answer $a_i\in\mathcal Y$ that the attacker observes.

The attacker never observes $A$ or the sketches; it observes only its own execution and the returned answers from the responder.
Its objective is
to cause the responder to return incorrect answers.

\subsection{The informed responder model}
\label{sec:informed-interaction}

Our attacks are analyzed against an \emph{informed responder}, which has unbounded computation, memory, and randomness. It knows $A$ and the attacker's algorithm and its parameters, and the complete execution history.
The only information temporarily withheld from the responder at the beginning of query $i$ is the fresh randomness that the attacker uses to sample its current input $v_i$. This information is shared with the responder, but only after the answer $a_i$ is determined.

Let $\mathsf{Tr}_i$ denote the complete execution of both parties through the completion of query $i$, including their random choices, inputs, and responses. This transcript is available to the responder.
Our attack queries follow a particular structure:
\begin{definition}[Adaptive interaction query]
\label{def:interaction}

Query $i$ proceeds as follows.
\begin{enumerate}[label=(\roman*),leftmargin=1.7em]
\item The attacker announces a Borel distribution $\mathcal D_i$ on
$\R^n$, chosen measurably from its own execution transcript so far and fresh randomness.  It discloses $\mathcal D_i$ and the execution history of this query.
\item Using $A$, $\mathsf{Tr}_{i-1}$, and the newly disclosed information, the responder privately selects a measurable map
$\oracle_i:\R^k\to\mathcal Y$, possibly at random.
\item The attacker draws $v_i\sim\mathcal D_i$ using fresh randomness.
The responder receives the sketch $y_i=Av_i$ and returns an answer $a_i=\oracle_i(y_i)$, which the attacker observes. After the answer, the attacker discloses the remainder of its
execution for this query. \end{enumerate}
The attacker may stop and return an output based on its information and fresh randomness.
\end{definition}

\subsection{Attack structure and objective}
 Our attacks proceed in \emph{rounds}, each containing $\Omega(k)$ queries. At the start of each round, the attacker fixes a query distribution and the number of queries, then draws the queries independently from that distribution. Thus, conditional on these choices and the preceding history, the queries in a round are i.i.d. The responder may still change its rule after every query. We use $i$ for the global query index, $t$ for the round index, and $j$ for the query index within a round.

The attack's goal is to return a round, together with its query distribution, in which a constant fraction of the answers were incorrect for the task. We seek this guarantee with constant probability using few queries and rounds.

\subsection{Norm tasks}
\label{sec:attack-success}

\paragraph{Norm estimation.}
The responder returns an estimate
$\widehat r_i\in\mathcal Y=[0,\infty)$ of $\|v_i\|_2$.
For an accuracy parameter $\alpha>0$ and a fixed absolute
constant $c_{\rm err}>0$, the loss is
\begin{equation}\label{eq:relative-norm-loss}
  e_i=\ind{|\widehat r_i-\|v_i\|_2|>
       c_{\rm err}\alpha\|v_i\|_2}.
\end{equation}
Thus an answer is correct when its relative error is at most
$c_{\rm err}\alpha$.

\paragraph{Binary norm gap.}
Fix public thresholds $0<L<U$. The responder returns an answer
in $\mathcal Y=\{-1,+1\}$: the correct answer is $-1$ when
$\|v\|_2\le L$ and $+1$ when $\|v\|_2\ge U$, with either answer
permitted between the thresholds. The loss is
\begin{equation}\label{eq:norm-gap-loss}
 e_{\rm ng}(b,v)
 :=\ind{b=+1,\ \|v\|_2\le L}
   +\ind{b=-1,\ \|v\|_2\ge U}.
\end{equation}
Our thresholds satisfy $U/L=1+\Theta(\alpha)$.
When gap responses are derived from norm estimates by thresholding at $\sqrt{(L^2+U^2)/2}$, every incorrect
gap answer therefore certifies $\Omega(\alpha)$ relative error
in the norm estimate on the same input.

\begin{definition}[Realized loss and empirical accuracy]
\label{def:target-loss}
For a fixed task, let $e_i\in[0,1]$ denote the realized loss on
query $i$, a measurable function of its input and response.
For $\delta\in(0,1)$, the responder is
\emph{$\delta$-accurate} on a round with query index set $B$ if
\[
  \frac1{|B|}\sum_{i\in B}e_i\le\delta.
\]
\end{definition}

For both norm tasks, the loss is binary, so empirical accuracy means that at most a $\delta$ fraction of the answers in the round are incorrect.

\section{The covariance attack}
\label{sec:general-main}
\label{sec:model-main}
\label{sec:general-attack}
\label{sec:general-guarantee}
\label{sec:general-search-space}

This section presents the general attack, given in
\Cref{alg:general-responder-discovery}. \Cref{sec:general-preview}
previews its analysis; the full proof is in \Cref{sec:general-analysis}.
The interaction follows the disclosure
protocol of \Cref{def:interaction}. The algorithm uses up to $T_{\max}$ rounds with $r$ queries each.
In each round, the attacker draws queries from a fixed signal--noise
distribution and converts the norm-task responses into signal-gap answers
$b_j\in[-1,1]$. It accumulates the signed noise $S_t$ and the loss sum
$E_t$, using a fixed validation level $\delta_0$.
If $E_t>r\delta_0/2$, it returns the round; otherwise it uses $S_t$
to update the noise covariance, which is initially isotropic. The following
subsections specify the query law, response conversion, and covariance
update. The updates are parametrized by the cap $\lambda$.

The parameters have the following scales:
\[
 \begin{gathered}
 \lambda=\Theta(\alpha^2),\qquad r=\Theta(k),\\
 T_{\max}=\Theta\bigl(1+\alpha^2k\log(2+\alpha^2k)\bigr).
 \end{gathered}
\]

\paragraph{Accuracy regimes.}
Fix absolute constants $C_{\min}\ge1$ and sufficiently small
$\alpha_0\in(0,1]$. First replace $\alpha$ by
$\min\{\alpha,\alpha_0\}$; this changes only absolute constants in the
guarantees. If the resulting parameter is below $C_{\min}/\sqrt{k}$,
both attacks return their first isotropic round of $\Theta(k)$ queries
(\Cref{cor:isotropic-small-accuracy}). The pseudocode and
covariance-update analysis below concern the remaining regime
$\alpha\in [C_{\min}/\sqrt{k},\alpha_0]$.

\subsection{Query distribution}
\label{sec:gaussian-query-family}

Fix any prescribed set $S$ of $m+1\le n$ coordinates, where $m=40k$.  Choose the signal coordinate
$h$ uniformly from $S$ and let $M=S\setminus\{h\}$ carry the noise.
Both $h$ and $M$ remain fixed throughout the attack.  Query entries are $0$
outside $S$, so we may discard the unused columns of $A$ and, for the
analysis, relabel the coordinates as $n=m+1$ and $M=[n]\setminus\{h\}$.
\Cref{sec:random-signal-coordinate}
justifies this initialization.

We also fix a noise multiplier $c>0$ (a universal constant) and a signal law $W$ supported near
$1$, specified in \cref{def:signal-density}.  For any
$\Sig\succeq0$ with $\tr\Sig=1$, define the query law $\mathcal D_\Sig$ by
\begin{equation}\label{eq:query-model}
  w\sim W,\qquad u\sim N(0,\Sig),\qquad
  v=w e_h+c\,\iota_Mu ,
\end{equation}
where $w$ and $u$ are independent and
$\iota_M:\R^m\to\R^n$ embeds a vector on the coordinates $M$.
Writing $\xcol=A_{\bullet h}$ and $\Bmat=A_{\bullet M}$, the sketch takes
the form
\begin{equation}\label{eq:query-sketch}
  y=Av=w\xcol+c\Bmat u .
\end{equation}

After initialization, the attack changes only the covariance of the query distribution.
In round $t$ it announces $\mathcal D_{\Sig_t}$ and uses fresh,
independent draws of $(w,u)$ for every query in that round.

\subsection{The signal-gap task}
\label{sec:signal-gap-task}
The auxiliary task is to distinguish $w\le1$ from $w\ge1+\alpha$.
A response $b\in[-1,1]$ incurs loss
\begin{equation}\label{eq:signal-gap-loss-function}
  \loss(b,w)
  =\frac{1+b}{2}\ind{w\le1}
   +\frac{1-b}{2}\ind{w\ge1+\alpha}.
\end{equation}
The answers $-1$ and $+1$ correspond to the low and high regimes;
intermediate values hedge between them.  The gap $1<w<1+\alpha$ is
unlabeled and incurs no loss.  The use of $\mathcal Y = [-1,1]$ facilitates averaging of randomized and changing discrete response rules.

\begin{definition}[Signal distribution]\label{def:signal-density}
Put $\ell=10\alpha/c$.  The signal law $W=W_{\alpha,c}$ is sampled as
\[
 \begin{gathered}
 w=1-\ell+U_1+U_2,\\
 U_1\sim\operatorname{Unif}[0,\ell],\quad
 U_2\sim\operatorname{Unif}[0,\ell+\alpha],
 \end{gathered}
\]
with $U_1,U_2$ independent.
\end{definition}
Its density is trapezoidal: the two ramps occupy the labeled regimes,
and the plateau occupies the unlabeled gap.
\Cref{app:gain-proof} gives the density and explains its role
in extracting correlation from accurate answers.
We take $c>0$ to be a small absolute constant and $0<\alpha\le c/20$,
so the signal support stays at or above $1/2$.

\subsection{From norm answers to signal-gap answers} \label{sec:norm2signal}
The norm and the signal relate as follows:
\begin{equation}\label{eq:query-norm}
 \|v\|_2^2=w^2+c^2\|u\|_2^2
 =w^2+c^2+c^2(\|u\|_2^2-1).
\end{equation}
Since $\E\|u\|_2^2=\tr\Sig=1$, concentration of the noise norm makes
the squared query norm close to $w^2+c^2$.  We call a trace-one
covariance \emph{norm-admissible} at budget $\rho$ when
$\tr(\Sig^2)\le\rho$.

For fixed $c$, we use a sufficiently small constant multiple of
$\alpha^2/c^4$ as the admissibility budget.  For binary norm-gap queries,
fix the public thresholds
\begin{equation}\label{eq:norm-gap-thresholds}
 L=\sqrt{1+c^2+\alpha/2},\qquad
 U=\sqrt{1+c^2+3\alpha/2}
\end{equation}
and use the returned bit as the signal-gap response.  These thresholds
have $U/L=1+\Theta(\alpha)$.  For norm estimates $\widehat r$, use
\begin{equation}\label{eq:threshold-response}
 b:=
 \begin{cases}
  +1, & \widehat r^{\,2}\ge1+c^2+\alpha,\\
  -1, & \text{otherwise,}
 \end{cases}
\end{equation}
In either case, the signal-gap loss is at most the error indicator
for the chosen norm task plus the indicator of
$c^2|\|u\|_2^2-1|>\alpha/2$.
\Cref{sec:norm-reduction} proves this pointwise reduction and bounds
the concentration-failure probability.  Consequently, a round with large
signal-gap loss also has many norm-task errors, except for these rare
concentration failures.  The attacker computes the signal-gap loss from
its own $w$ and the returned or thresholded answer.

\subsection{Covariance constraints}

For a cap parameter $\lambda\in(1/m,1/2]$, the attack keeps the
covariance in
\begin{equation}\label{eq:capped-covariance-body}
 \mathcal K_\lambda
 :=\left\{\Sig:\tr\Sig=1,\quad
 \frac{1-\lambda}{m-1}I_m\preceq\Sig
 \right\}.
\end{equation}
The initial covariance $\Sig_0=I_m/m$ belongs to
$\mathcal K_\lambda$.
The eigenvalue floor ensures invertibility and, together with
$\tr\Sig=1$, implies the cap
\[
 \lambda_{\max}(\Sig)
 \le 1-(m-1)\frac{1-\lambda}{m-1}=\lambda.
\]
The same constraints give
\begin{equation}\label{eq:universal-spike-admissibility}
 \tr(\Sig^2)
 \le \lambda^2+\frac{(1-\lambda)^2}{m-1}
 \le \lambda^2+\frac1{m-1}.
\end{equation}
This follows since concentrating all eigenvalue mass above the floor
in one eigenvalue maximizes the sum of squares, giving one
eigenvalue $\lambda$ and the remaining $m-1$ eigenvalues
equal to $(1-\lambda)/(m-1)$.
Thus every $\Sig\in\mathcal K_\lambda$ is norm-admissible
with budget $\rho=\lambda^2+1/(m-1)$. The parameter choices make this
budget small enough for the norm reduction (\Cref{sec:end-to-end}).

\newcommand{\algcommentfont}[1]{\footnotesize\itshape\textcolor{mydarkblue}{#1}}
\SetCommentSty{algcommentfont}
\SetFuncSty{texttt}
\SetKwFunction{RunRound}{Round}
\SetKwProg{RoundProcedure}{Procedure}{:}{}

\begin{algorithm}[t]
\footnotesize
\SetKwInOut{Input}{Input}
\SetKwInOut{Output}{Output}
\DontPrintSemicolon
\caption{Covariance attack}
\label{alg:general-responder-discovery}
\Input{norm task; cap $\lambda$; round size $r$; validation level $\delta_0=1/20$;
round budget $T_{\max}$}
\Output{a rejected round and its covariance, or $\bot$}
$\Sig_0\leftarrow I_m/m$; $q_0\leftarrow(1-\lambda)/(m-1)$\;
\For{$t=0,\ldots,T_{\max}-1$}{
  $(S_t,E_t)\leftarrow\RunRound(\Sig_t,r)$\;
  \If(\tcp*[h]{check round accuracy}){$E_t>r\delta_0/2$}{\Return{round $t$ and $\Sig_t$}\;}
  $\widehat q_t\leftarrow S_t/\|S_t\|_2$
    \textup{(any unit vector if $S_t=0$)}\;
  Form $Z_t$ using \eqref{eq:matrix-cut-definition} \tcp{rank-one measurement}
  Update $\Sig_{t+1}$ by \eqref{eq:matrix-mw-unprojected}--\eqref{eq:matrix-mw-projection}
    with $\eta=1/2$\;
}
\Return{$\bot$}\;
\BlankLine
\RoundProcedure{\RunRound{$\Sig,r$}}{
  Announce $\mathcal D_\Sig$ (\S~\ref{sec:gaussian-query-family}) and round size $r$\;
  $R\leftarrow\Sig^{1/2}$; $S\leftarrow0$; $E\leftarrow0$\;
  \For(\tcp*[h]{queries}){$j=1,\ldots,r$}{
    Draw $w_j\sim W_{\alpha,c}$,
      $z_j\sim N(0,I_m)$ \;
    Query $v_j=w_je_h+c\,\iota_MRz_j$\tcp*{$v_j\sim\mathcal D_\Sig$ (\S~\ref{sec:gaussian-query-family})}
    Obtain $b_j$ \tcp*{signal-gap answer $b_j$ from norm task response (\S~\ref{sec:norm2signal})}
    $S\leftarrow S+b_jz_j$ \tcp*{accumulate signed noise}
    $E\leftarrow E+\loss(b_j,w_j)$
      \tcp*{accumulate signal-gap loss \eqref{eq:signal-gap-loss-function}}
  }
  \Return{$(S,E)$}\;
}
\BlankLine
\end{algorithm}

\subsection{Covariance update} \label{sec:cov-update}
Let $(S_t,E_t)=\texttt{Round}(\Sig_t,r)$ and $R_t=\Sig_t^{1/2}$.
If the round passes validation, the attacker forms
\begin{equation}\label{eq:matrix-cut-definition}
\begin{gathered}
 \widehat q_t=\frac{S_t}{\|S_t\|_2},\\
 a_t=R_t^{-\top}\widehat q_t,\qquad Z_t=q_0a_ta_t^\top,\\
 \text{where } q_0 = \frac{1-\lambda}{m-1}.
 \end{gathered}
\end{equation}
If $S_t=0$, choose any unit vector for $\widehat q_t$.

For positive definite trace-one matrices, define the relative entropy
\begin{equation}\label{eq:quantum-relative-entropy}
 D(P\|Q)=\tr\bigl(P(\log P-\log Q)\bigr).
\end{equation}
We use a matrix multiplicative-weights update
\citep{TsudaRW05,AroraKale16} with step size $\eta=1/2$,
followed by a relative-entropy projection:
\begin{align}
 \widehat\Sig_{t+1}
 &=\frac{\exp(\log\Sig_t+\eta Z_t)}
         {\tr\exp(\log\Sig_t+\eta Z_t)},
 \label{eq:matrix-mw-unprojected}\\
 \Sig_{t+1}
 &=\operatorname*{argmin}_{\Sig'\in\mathcal K_{\lambda}}
       D(\Sig'\|\widehat\Sig_{t+1}).
 \label{eq:matrix-mw-projection}
\end{align}
The exponential step \eqref{eq:matrix-mw-unprojected} increases weight according to $Z_t$. The projection \eqref{eq:matrix-mw-projection} ensures that the covariance remains in $\mathcal K_\lambda$. The projection is computed by rescaling the eigenvalues of $\widehat\Sig_{t+1}$ and clipping them from below at $q_0$, with the scale chosen so that they sum to one (\Cref{sec:general-computation}).

\section{Analysis preview}
\label{sec:general-preview}

We show that, with high constant probability, some round must fail
validation before the round budget is exhausted.

We first express the signal-estimation variance in terms of the query covariance $\Sig$. We then show that successful initialization guarantees the existence of a \emph{hard comparator covariance} $\Sig_\star$. We measure progress by the relative entropy $D(\Sig_\star\|\Sig_t)$ between this comparator and the current covariance. Accurate rounds produce sufficiently many potential-decreasing updates. Its bounded initial value and nonnegativity, together with a concentration argument, bound the number of rounds that can pass validation.

\subsection{The scalar channel}
\label{sec:scalar-channel-main}

Recall that the sketch has the form $y=w\xcol+c\Bmat u$, where
$u\sim N(0,\Sig)$ is independent of $w$. Put
\begin{equation}\label{eq:Ksigma}
 K_\Sig=\Bmat\Sig\Bmat^\top.
\end{equation}
Suppose $\xcol\in\im(K_\Sig)\setminus\{0\}$. Define the
\emph{normalized signal-estimation variance} and the scalar statistic by
\begin{equation}\label{eq:scalar-statistic}
 \begin{aligned}
  \Vinfo(\Sig)&=(\xcol^\top K_\Sig^\dagger\xcol)^{-1},\\
  T_\Sig(y)&=\Vinfo(\Sig)\xcol^\top K_\Sig^\dagger y,
 \end{aligned}
\end{equation}
where $\dagger$ denotes the Moore--Penrose pseudoinverse.
The statistic is sufficient for $w$: conditional on $T_\Sig(y)$,
the remaining sketch contains no additional information about the
signal. Moreover,
\begin{equation}\label{eq:estimator-noise}
 T_\Sig(y)=w+X,\qquad
 X\sim N(0,c^2\Vinfo(\Sig)),
\end{equation}
with $X$ independent of $w$.
Thus $c^2\Vinfo(\Sig)$ is the actual signal-estimation variance, and a larger value makes the two signal regimes harder to distinguish.
In particular, if $c^2\Vinfo(\Sig)\ge\alpha^2$, every responder incurs a constant expected signal-gap loss (\Cref{lem:large-variance-hardness}) and the distribution
is \emph{universally hard} for the signal-gap task.

If $\xcol=0$, then the sketch is independent of $w$ and $\Vinfo(\Sig)=+\infty$. If $\xcol\notin\im(K_\Sig)$, then the sketch contains a noiseless signal component and $\Vinfo(\Sig)=0$.

\subsection{Successful initialization}
With high constant probability, random-coordinate initialization (\Cref{sec:gaussian-query-family}) gives
either $\xcol=0$, in which case the sketch carries no signal information
and the isotropic query distribution is already hard for the signal-gap task, or $\xcol\in\im(\Bmat)\setminus\{0\}$ and
$\Vinfo(I_m/m)\ge1/(40k)$  (\Cref{cor:random}).

\subsection{Existence of a hard comparator}
We measure progress relative to a fixed
\emph{comparator covariance} $\Sig_\star\in\mathcal K_\lambda$,
constructed in the analysis using the hidden matrix $A$.
It has eigenvalue $\lambda$ along a suitable hidden direction
and eigenvalue $q_0$ in every orthogonal direction.
This covariance spike amplifies the isotropic variance by
$m\lambda=\Theta(\alpha^2k)$:
\[
  \Vinfo(\Sig_\star)=m\lambda\,\Vinfo(I_m/m).
\]
Together with the initialization bound, a sufficiently large
constant in $\lambda=\Theta(\alpha^2)$ makes
$c^2\Vinfo(\Sig_\star)$ a sufficiently large multiple of
$\alpha^2$, so the comparator is universally hard for the
signal-gap task. It is also norm-admissible because it belongs
to $\mathcal K_\lambda$
(\Cref{eq:universal-spike-admissibility,sec:end-to-end}).

\subsection{Accuracy forces correlation}
\label{sec:gain-main}
An accurate response must correlate with the signal-estimation error $T_\Sig(y)-w$. The following lemma extends the isotropic gain bound of \citet[Lemma~4.9]{AhmadianCS25} to general covariances;
its proof appears in \Cref{app:gain-proof}.
Write $\loss_\Sig(\oracle)=\E[\loss(\oracle(y),w)]$ for the expected
signal-gap loss under $\mathcal D_\Sig$.

\begin{restatable}[General-covariance gain lemma]{lemma}{gainlemma}
\label{lem:gain}
Let $\alpha>0$, $0<c\le1$, $\delta_0=1/20$, and $c_g=1/22$.
For the query law $\mathcal D_\Sig$ with signal law $W_{\alpha,c}$, let
$\Sig\succeq0$ satisfy $\tr\Sig=1$ and $0<\Vinfo(\Sig)<\infty$.
Every measurable response rule $\oracle:\R^k\to[-1,1]$ with
$\loss_\Sig(\oracle)\le\delta_0$ satisfies
\begin{equation}\gainlabel{eq:scalar-gain}
 \begin{aligned}
  \E[\oracle(y)(T_\Sig(y)-w)]&\ge c\Vinfo(\Sig)g,\\
  g&:=c_g\frac{c^2}{\alpha}.
 \end{aligned}
\end{equation}
\end{restatable}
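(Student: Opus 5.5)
The plan is to reduce the statement to a one-dimensional Gaussian channel. Then a Gaussian integration by parts in the signal variable turns the left-hand side of \eqref{eq:scalar-gain} into a difference of average responses on the two ramps of $W_{\alpha,c}$.

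\emph{Step 1 (reduction to the scalar channel).} Write $s^2=c^2\Vinfo(\Sig)$, so that $T:=T_\Sig(y)=w+X$ with $X\sim N(0,s^2)$ independent of $w$, by \eqref{eq:estimator-noise}. Decompose the sketch noise as $c\Bmat u=\xcol X+N'$, where $N'$ is the residual of $c\Bmat u$ after regressing on $X$. Since $(X,N')$ is jointly Gaussian and uncorrelated, $N'$ is independent of $(w,X)$, and $y$ is a measurable function of $(T,N')$. Hence $\phi(t):=\E[\oracle(y)\mid T=t]$ satisfies $\E[\oracle(y)\mid T,w]=\phi(T)$, so $\phi$ takes values in $[-1,1]$. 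Because $\loss(b,w)$ is affine in $b$, both sides of the lemma depend on $\oracle$ only through $\phi$. Concretely, $\E[\oracle(y)(T-w)]=\E[\phi(w+X)X]$, and $\loss_\Sig(\oracle)=\E[\loss(\phi(w+X),w)]$. It therefore suffices to prove the scalar claim for an arbitrary measurable $\phi:\R\to[-1,1]$.

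\emph{Step 2 (Stein-type identity).} Let $f$ be the trapezoidal density of $W$, with $h=1/(\ell(\ell+\alpha))$. It has slope $+h$ on the low ramp $I_-=[1-\ell,1]$, slope $-h$ on the high ramp $I_+=[1+\alpha,1+\alpha+\ell]$, and it is flat on the gap $[1,1+\alpha]$. Using $(t-w)\gamma_s(t-w)=s^2\partial_w\gamma_s(t-w)$ and integrating by parts in $w$ (boundary terms vanish since $f$ is continuous and compactly supported), we get
\[
 \E[\phi(w+X)X]=-s^2\!\int\phi(t)\,(f'*\gamma_s)(t)\,dt
 =\frac{s^2}{\ell+\alpha}\bigl(\mu_+-\mu_-\bigr),
\]
where $\mu_\pm=\frac1\ell\int_{I_\pm}\widetilde\phi(w)\,dw$ and $\widetilde\phi(w)=\E[\phi(w+X)]$. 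Since $\ell=10\alpha/c\ge10\alpha$, we have $s^2/(\ell+\alpha)\ge s^2c/(11\alpha)=c\Vinfo(\Sig)\cdot c^2/(11\alpha)$. It thus suffices to show $\mu_+-\mu_-\ge1/2$, which yields $c_g=1/22$.

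\emph{Step 3 (accuracy implies $\mu_+-\mu_-\ge1/2$; the main obstacle).} The loss is
\[
 \int_{I_-}\tfrac{1+\widetilde\phi}{2}f+\int_{I_+}\tfrac{1-\widetilde\phi}{2}f\le\delta_0 .
\]
This controls $\widetilde\phi$ only against the triangular weights $f$ on each ramp, whereas $\mu_\pm$ are uniform averages. Near the outer ends of the ramps the density $f$ vanishes, so errors there are cheap in loss. This mismatch is where I expect the real work.

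My first attempt would be to split each ramp at its midpoint. On the inner halves, $f\ge h\ell/2$, and the loss bound forces $\widetilde\phi\approx-1$ on the inner half of $I_-$ and $\approx+1$ on the inner half of $I_+$, in average. On the outer halves I would use the crude bound $|\widetilde\phi|\le1$. I would then check numerically whether $\delta_0=1/20$ and the ramp masses $\ell/(2(\ell+\alpha))\ge5/11$ still leave $\mu_+-\mu_-\ge1/2$.

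If this loses too much, the fallback is to use the structure $f'=h(\ind{I_-}-\ind{I_+})$ more cleverly. One option is to redo the integration by parts against a piecewise-linear test weight adapted to $f$, so that the resulting averages are $f$-weighted. Another is to exploit that $\widetilde\phi$ is a Gaussian smoothing of $\phi$, to transfer control from the inner halves to the outer ones. Either way, the remaining quantities are explicit constants to be verified.
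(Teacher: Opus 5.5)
Your Steps 1 and 2 are correct and coincide with the paper's proof. The sufficiency reduction goes through $\phi(T)$, and the Gaussian integration by parts gives the exact identity $\E[\phi(w+X)X]=\frac{s^2}{\ell+\alpha}(\mu_+-\mu_-)$, so everything hinges on $\mu_+-\mu_-\ge1/2$.

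The gap is in Step 3, and your concrete first attempt fails for a structural reason, not a numerical one. Put $H_-=(1+\widetilde\phi)/2$ on $I_-$ and $H_+=(1-\widetilde\phi)/2$ on $I_+$, both in $[0,1]$. Then, exactly,
\[
 \ell(\mu_+-\mu_-)=2\ell-2\Bigl(\int_{I_-}H_-+\int_{I_+}H_+\Bigr).
\]
So the target is equivalent to $\int_{I_-}H_-+\int_{I_+}H_+\le3\ell/4$. The crude bound $H_\pm\le1$ on the two outer halves already uses up $\ell>3\ell/4$. With $C_W=1/(\ell+\alpha)$, your midpoint split therefore yields only $\mu_+-\mu_-\ge-4\delta/(C_W\ell)$, which is vacuous for every $\delta_0$.

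The missing step is to concede less at the outer ends: discard only the outer quarter of each ramp.
\begin{itemize}
\item On the outer quarters, $H_\pm\le1$ costs at most $\ell/2$ in total.
\item On the remaining three quarters the density is at least $C_W/4$, so the loss bound caps their contribution at $4\delta/C_W$.
\end{itemize}
Hence $\mu_+-\mu_-\ge1-8\delta/(C_W\ell)\ge1/2$. The last inequality holds because $C_W\ell=10/(10+c)\ge10/11$, which gives $\delta\le\delta_0=1/20\le C_W\ell/16$. This is exactly the paper's argument. More generally, an outer cut of fraction $x$ closes the argument whenever $2x+\delta_0/(C_W\ell\,x)\le3/4$; the midpoint $x=1/2$ is excluded because $2x=1$ already.

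Neither of your fallbacks is needed. The estimate uses only $0\le H_\pm\le1$, so Gaussian smoothness of $\widetilde\phi$ plays no role. Also, the Step 2 identity is exact, so there is no freedom to integrate by parts against a different test weight; the uniform ramp averages are forced by $f'$.
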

Since the estimation error is a linear function of the generated
Gaussian noise, the gain bound guarantees a useful component of
the mean signed noise along the hidden signal-estimation direction.
The attacker accumulates this correlation through
$S_t=\sum_j b_jz_j$.

If the conditional expected losses are small on average over the
round, the useful component accumulates linearly with the number
of queries, while concentration controls the sampling fluctuations.
This remains true when the responder changes its rule after every
query: we analyze each contribution conditional on the preceding
history. Other components of the mean signed noise may also
accumulate, so $S_t$ need not approximate the hidden direction.
The next step shows that it nevertheless yields a separating
measurement.

\subsection{Correlation yields separating measurements}
The rank-one measurement $Z_t$ computed in
\eqref{eq:matrix-cut-definition} always satisfies
\[
  0\preceq Z_t\preceq I_m,
  \qquad
  \langle\Sig_t,Z_t\rangle=q_0,
\]
where $\langle P,Q\rangle=\tr(PQ)$.
We show that, conditional on the history before the round, with
probability bounded below by an absolute constant, either validation
rejects the round or
\[
  \langle\Sig_\star,Z_t\rangle\ge\kappa q_0
\]
for a sufficiently large absolute constant $\kappa$.
Since $Z_t=q_0a_ta_t^\top$, this means that the comparator assigns
at least $\kappa$ times as much noise variance as the current
covariance along the direction $a_t$.

On a high-probability validation event, passing the test implies
small average conditional loss, so the correlation guarantee
from the preceding subsection applies. A geometric bound converts
the useful component of the accumulated mean signed noise into
separation from the comparator, even when other components are
large. With $r=\Theta(k)$ queries, martingale bounds ensure that
sampling fluctuations preserve this separation with constant
probability. Thus the observed sum $S_t$ supplies the measurement
needed for the covariance update
(\Cref{sec:general-phase}).

\subsection{Entropy bounds the number of rounds}
The potential $D(\Sig_\star\|\Sig_t)$ is nonnegative and initially
at most $\lambda\log(m\lambda)$.
After an accepted round, a separating measurement makes the
covariance update decrease this potential by a sufficiently large
constant multiple of $q_0$. Any other update can increase it by
at most a smaller constant multiple of $q_0$.

The preceding separation guarantee, together with a concentration
bound, implies that with high constant probability, if all rounds
passed validation, a constant fraction of their measurements
would be separating. The constants ensure that their decreases
outweigh the possible increases from the remaining updates,
giving a total decrease of $\Omega(Tq_0)$ over $T$ rounds.
Since the potential cannot become negative, this rules out more than
\[
 O\!\left(1+\frac{\lambda}{q_0}\log(m\lambda)\right)
 =O\bigl(1+\alpha^2k\log(2+\alpha^2k)\bigr)
\]
accepted rounds. Thus some round must fail validation within the
stated round budget (\Cref{sec:general-potential}).

Note that this argument requires cumulative progress in relative entropy.
The signal-estimation variance may still decrease on individual updates, and the attack need not reach the comparator.

\section{Sufficient-statistic attack}
\label{sec:sufficient-summary}
\label{sec:sufficient-model}

We define the sufficient-statistic responder class using
$T_\Sig$ from \Cref{sec:scalar-channel-main}.
As in the general analysis, consider a successful initialization with
$\xcol\in\im(\Bmat)\setminus\{0\}$ and a positive definite covariance.
In the zero-signal case, the isotropic query distribution is already
hard for the signal-gap task, without any restriction on the responder.

Let $\mathcal H_{j-1}$ denote the disclosed information before query $j$
of round $t$, including the current query law but excluding the
responder's private execution.
The averaged response rule is
\[
 \bar\oracle_{t,j}(y)
 =\E[\oracle_{t,j}(y)\mid\mathcal H_{j-1}],
\]
where the expectation averages over private responder state and
randomization. The conditioning is formalized in
\Cref{sec:probability-conventions}.

\begin{definition}[Sufficient-statistic responder]
\label{def:phasewise-sufficient}
The responder is \emph{sufficient-statistic} if, in every round $t$
and query $j$,
\[
 \bar\oracle_{t,j}(y)
 =\varphi_{t,j}\bigl(T_{\Sig_t}(y)\bigr)
\]
for an $\mathcal H_{j-1}$-measurable function
$\varphi_{t,j}:\R\to[-1,1]$, almost surely under the current sketch law.
The responder may change its randomized rule after every query;
only the dependence of its conditional mean on the current sketch
is restricted.
\end{definition}

For norm estimation, this condition applies to the thresholded answer
in \eqref{eq:threshold-response}. For the binary norm-gap task, it
applies to the returned bit.

The class includes the Bayes-optimal signal-gap predictor
(\Cref{prop:sufficient-bayes}) and any responder that computes its
answer from $T_\Sig$ and fresh randomness.

\subsection{The attack}
\label{sec:sufficient-attack}
\label{sec:sufficient-guarantee}
\label{sec:sufficient-preview}

The attack, described in \Cref{alg:sufficient-statistic-discovery}, uses the same query family and signal-gap validation test
as \Cref{alg:general-responder-discovery}, starting at $\Sig_0=I_m/m$.
For $\alpha\in [C_{\min}/\sqrt{k},\alpha_0]$, the attack uses an
eigenvalue stopping threshold $\bar\rho$. It uses at most $T_{\max}$
covariance-update rounds, followed by one final validation round when
the top eigenvalue reaches or exceeds this threshold.
The parameters have scales
\[
 \begin{gathered}
 \bar\rho=\Theta(\alpha^2),\qquad
 T_{\max}=O(\log(2+\alpha^2k)),\\
 r_t=\Theta\bigl(k+\alpha^2k^2\vartheta^t\bigr)
 \; (0\le t<T_{\max}),\qquad
 r_{\rm fin}=\Theta(k),
 \end{gathered}
\]
where $\vartheta\in(0,1)$ is a fixed absolute constant.
The first round uses $\Theta(k+\alpha^2k^2)$ queries;
the geometrically decreasing term allows smaller later rounds,
while every round retains $\Omega(k)$ queries.
The total query budget is
\[
 \sum_{t=0}^{T_{\max}-1}r_t+r_{\rm fin}
 =O(k+\alpha^2k^2).
\]
The precise schedule is given in
\eqref{eq:sufficient-phase-length}, with its norm-task
specialization in \Cref{sec:sufficient-norm}.

The rounds are as in \Cref{alg:general-responder-discovery}, with $(S_t,E_t)=\texttt{Round}(\Sig_t,r_t)$ at round $t$.
If $E_t>r_t\delta_0/2$, the attack returns the round.
Otherwise it normalizes the signed sum to
$\widehat h_t=S_t/\|S_t\|_2$ and updates the covariance by
\begin{equation}\label{eq:natural-update}
 \mathcal N(\Sig,\widehat h)
 =\frac{\Sig+pp^\top}{1+\|p\|_2^2},
 \qquad p=\Sig^{1/2}\widehat h.
\end{equation}
If $S_t=0$, any unit vector may be used for $\widehat h_t$.
The vector $\widehat h_t$ is a direction in whitened coordinates;
$p$ maps it back to the noise coordinates.
The rank-one addition increases noise along $p$ before the denominator
renormalizes the covariance to trace one.
The update preserves positive definiteness and trace one.

After each update, the algorithm checks whether the largest
eigenvalue has reached the fixed threshold
$\bar\rho\in(1/m,1/2]$.
Initially, $\Sig_0=I_m/m$, so
$\lambda_{\max}(\Sig_0)=1/m<\bar\rho$.
Each update satisfies
$\lambda_{\max}(\Sig_{t+1})\le2\lambda_{\max}(\Sig_t)$:
the rank-one addition can at most double the largest eigenvalue,
and trace normalization can only reduce it.
When an update first reaches the threshold, the attack stops
updating and runs one final validation round at the resulting
covariance $\Sig_{\rm out}$. Since the preceding covariance was
below the threshold,
\begin{equation}\label{eq:sufficient-threshold-crossing}
 \bar\rho\le\lambda_{\max}(\Sig_{\rm out})<2\bar\rho.
\end{equation}

\begin{algorithm}[t]
\footnotesize
\SetKwInOut{Input}{Input}
\SetKwInOut{Output}{Output}
\DontPrintSemicolon
\caption{Sufficient-statistic attack}
\label{alg:sufficient-statistic-discovery}
\Input{norm task; threshold $\bar\rho$;
round lengths $r_0,\ldots,r_{T_{\max}-1},r_{\rm fin}$;
validation level $\delta_0=1/20$}
\Output{a rejected round and its covariance, or $\bot$}
$\Sig_0\leftarrow I_m/m$\;
\For{$t=0,\ldots,T_{\max}-1$}{
  $(S_t,E_t)\leftarrow\RunRound(\Sig_t,r_t)$ \tcp*{in \cref{alg:general-responder-discovery}}
  \If{$E_t>r_t\delta_0/2$}{
    \Return{round $t$ and $\Sig_t$}\;
  }
  $\widehat h_t\leftarrow S_t/\|S_t\|_2$
  \textup{(any unit vector if $S_t=0$)}\;
  $\Sig_{t+1}\leftarrow\mathcal N(\Sig_t,\widehat h_t)$ \tcp*{see \eqref{eq:natural-update}}
  \If(\tcp*[h]{threshold reached}){$\lambda_{\max}(\Sig_{t+1})\ge\bar\rho$}{
    $\Sig_{\rm out}\leftarrow\Sig_{t+1}$\;
    $(S,E)\leftarrow\RunRound(\Sig_{\rm out},r_{\rm fin})$ \tcp*{in \cref{alg:general-responder-discovery}}
    \If{$E>r_{\rm fin}\delta_0/2$}{
      \Return{the final round and $\Sig_{\rm out}$}\;
    }
    \Return{$\bot$}\;
  }
}
\Return{$\bot$}\;
\end{algorithm}

\subsection{Guarantee and analysis preview}

With probability at least $0.9$, the attack returns a round containing
a constant fraction of errors for the original norm task, as claimed
in \Cref{thm:sufficient-informal}.
On the same success event, if the eigenvalue threshold is reached,
the final distribution is hard for every responder for that norm task.

\paragraph{Alignment permits direction estimation.}
For a sufficient-statistic responder, the conditional mean signed
noise is parallel to the signal-estimation direction.
Gaussian noise orthogonal to this direction is independent of the
statistic, so its correlation with the answer vanishes.
Combining this alignment with the gain lemma, validation, and
concentration shows that normalizing $S_t$ estimates the direction
accurately enough for the covariance update
(\Cref{sec:sufficient-direction}).

\paragraph{Variance growth justifies the stopping test.}
On the success event, each covariance update increases
$\Vinfo(\Sig_t)$ by a fixed factor greater than one.
Controlling the accumulated direction-estimation errors also keeps
the variance within a constant factor of the largest possible
at the current top eigenvalue:
\[
 \Vinfo(\Sig_t)
 =\Theta\!\left(
   m\lambda_{\max}(\Sig_t)\Vinfo(I_m/m)
 \right).
\]
Before the threshold is reached, the upper bound limits variance
amplification to $O(m\bar\rho)$, allowing only
$O(\log(2+\alpha^2k))$ accepted updates.
At the threshold, the lower bound ensures that the variance is
large enough for universal signal-gap hardness.
Thus the observable eigenvalue test certifies hardness even though
the attacker cannot compute the signal-estimation variance
(\Cref{sec:sufficient-step}).

\paragraph{Query cost and norm admissibility.}
As the variance grows, the gain lemma guarantees a stronger
correlation, so fewer queries suffice to estimate the direction.
The resulting round schedule has total cost
$O(k+\alpha^2k^2)$, with the direction-estimation cost dominated
by the early rounds.
Control of the accumulated direction errors also gives
\[
 \tr(\Sig_t^2)\le 1/m+O(\bar\rho^2)
\]
throughout the successful execution, including the covariance
that crosses the threshold.
This maintains the norm admissibility needed for the norm-task
guarantees (\Cref{sec:sufficient-norm}).
\Cref{sec:sufficient-analysis} gives the full proof and an extension
to approximate alignment.

\FloatBarrier

\section{Conclusion}
\label{sec:conclusion}

At constant accuracy, we establish a near-quadratic limit on adaptive
reuse of any fixed linear norm sketch, even with an informed responder that changes its rule
after every query. The vulnerability that the attack exploits is that accurate answers expose useful correlations. We also present more efficient attacks for sufficient-statistic responses, which have aligned mean signed noise.

Our lower bound makes the accuracy dependence tight within a broad family of announced query distributions, using a responder that is given the signal direction and noise mean and covariance.
The main open question is where the optimal adaptive-query guarantee lies between $\alpha^4k^2$ and $\alpha^2k^2$, up to logarithmic factors, when query distributions are unrestricted or the responder lacks distributional information.
A further question is whether the logarithmic gap between our query bounds for general and sufficient-statistic responders is necessary.
\Cref{app:related-work} gives context and discusses implications for defenses.

\clearpage

\section*{Acknowledgments}

\paragraph{Edith Cohen:} Partially supported by the Israel Science Foundation (grant 1156/23).

\paragraph{Uri Stemmer:} Partially supported by the Israel Science Foundation (grant 1419/24), and the Blavatnik Research Foundation.

\section*{AI use statement}

This paper was developed from an initial proof sketch by the authors, through many iterations with the help of ChatGPT,
Claude, and Gemini.  The tools were used for formalizing statements, writing
proofs and text, improving proofs and bounds, and looking up relevant literature.  The authors are responsible for the content.

\renewcommand{\bibsection}{\section*{References}}
\bibliographystyle{plainnat}
\bibliography{highvariance_discovery}

@inproceedings{DworkFHPRR15,
  author    = {Cynthia Dwork and Vitaly Feldman and Moritz Hardt and Toniann Pitassi and Omer Reingold and Aaron Roth},
  title     = {Preserving statistical validity in adaptive data analysis},
  booktitle = {Proceedings of the 47th ACM Symposium on Theory of Computing (STOC)},
  pages     = {117--126},
  year      = {2015},
  url       = {https://doi.org/10.1145/2746539.2746580}
}

@inproceedings{BassilyNSSSU16,
  author    = {Raef Bassily and Kobbi Nissim and Adam Smith and Thomas Steinke and Uri Stemmer and Jonathan Ullman},
  title     = {Algorithmic stability for adaptive data analysis},
  booktitle = {Proceedings of the 48th ACM Symposium on Theory of Computing (STOC)},
  pages     = {1046--1059},
  year      = {2016},
  note      = {Full version in SIAM Journal on Computing, 50(3), 2021},
  url       = {https://doi.org/10.1145/2897518.2897566}
}

@inproceedings{HardtUllman14,
  author    = {Moritz Hardt and Jonathan Ullman},
  title     = {Preventing false discovery in interactive data analysis is hard},
  booktitle = {Proceedings of the 55th IEEE Symposium on Foundations of Computer Science (FOCS)},
  pages     = {454--463},
  year      = {2014},
  url       = {https://doi.org/10.1109/FOCS.2014.55}
}

@inproceedings{SteinkeUllman15,
  author    = {Thomas Steinke and Jonathan Ullman},
  title     = {Interactive fingerprinting codes and the hardness of preventing false discovery},
  booktitle = {Proceedings of the 28th Conference on Learning Theory (COLT)},
  series    = {PMLR},
  volume    = {40},
  pages     = {1588--1628},
  year      = {2015},
  url       = {https://proceedings.mlr.press/v40/Steinke15.html}
}

@inproceedings{Blanc23,
  author    = {Guy Blanc},
  title     = {Subsampling suffices for adaptive data analysis},
  booktitle = {Proceedings of the 55th ACM Symposium on Theory of Computing (STOC)},
  pages     = {999--1012},
  year      = {2023},
  url       = {https://doi.org/10.1145/3564246.3585226}
}

@inproceedings{HassidimKMMS20,
  author    = {Avinatan Hassidim and Haim Kaplan and Yishay Mansour and Yossi Matias and Uri Stemmer},
  title     = {Adversarially robust streaming algorithms via differential privacy},
  booktitle = {Advances in Neural Information Processing Systems 33 (NeurIPS)},
  pages     = {147--158},
  year      = {2020},
  note      = {Full version in Journal of the ACM, 69(6):42:1--42:14, 2022},
  url       = {https://arxiv.org/abs/2004.05975}
}

@inproceedings{BenEliezerJWY20,
  author    = {Omri Ben-Eliezer and Rajesh Jayaram and David P. Woodruff and Eylon Yogev},
  title     = {A framework for adversarially robust streaming algorithms},
  booktitle = {Proceedings of the 39th ACM Symposium on Principles of Database Systems (PODS)},
  pages     = {63--80},
  year      = {2020},
  note      = {Full version in Journal of the ACM, 69(2):17:1--17:33, 2022},
  url       = {https://arxiv.org/abs/2003.14265}
}

@inproceedings{WoodruffZhou21,
  author    = {David P. Woodruff and Samson Zhou},
  title     = {Tight bounds for adversarially robust streams and sliding windows via difference estimators},
  booktitle = {Proceedings of the 62nd IEEE Symposium on Foundations of Computer Science (FOCS)},
  pages     = {1183--1196},
  year      = {2021},
  url       = {https://arxiv.org/abs/2011.07471}
}

@inproceedings{AttiasCSS23,
  author    = {Idan Attias and Edith Cohen and Moshe Shechner and Uri Stemmer},
  title     = {A framework for adversarial streaming via differential privacy and difference estimators},
  booktitle = {14th Innovations in Theoretical Computer Science Conference (ITCS)},
  pages     = {8:1--8:19},
  year      = {2023},
  series    = {LIPIcs},
  volume    = {251},
  url       = {https://drops.dagstuhl.de/entities/document/10.4230/LIPIcs.ITCS.2023.8}
}

@inproceedings{CohenSS25,
  author    = {Edith Cohen and Mihir Singhal and Uri Stemmer},
  title     = {Breaking the quadratic barrier: Robust cardinality sketches for adaptive queries},
  booktitle = {Proceedings of the 42nd International Conference on Machine Learning (ICML)},
  series    = {PMLR},
  volume    = {267},
  pages     = {11185--11199},
  year      = {2025},
  url       = {https://proceedings.mlr.press/v267/cohen25c.html}
}

@inproceedings{MironovNS08,
  author    = {Ilya Mironov and Moni Naor and Gil Segev},
  title     = {Sketching in adversarial environments},
  booktitle = {Proceedings of the 40th ACM Symposium on Theory of Computing (STOC)},
  pages     = {651--660},
  year      = {2008},
  url       = {https://doi.org/10.1145/1374376.1374471}
}

@inproceedings{HardtWoodruff13,
  author    = {Moritz Hardt and David P. Woodruff},
  title     = {How robust are linear sketches to adaptive inputs?},
  booktitle = {Proceedings of the 45th ACM Symposium on Theory of Computing (STOC)},
  pages     = {121--130},
  year      = {2013},
  url       = {https://doi.org/10.1145/2488608.2488624}
}

@inproceedings{CherapanamjeriNelson20,
  author    = {Yeshwanth Cherapanamjeri and Jelani Nelson},
  title     = {On adaptive distance estimation},
  booktitle = {Advances in Neural Information Processing Systems 33 (NeurIPS)},
  pages     = {11178--11190},
  year      = {2020},
  url       = {https://arxiv.org/abs/2010.11252}
}

@inproceedings{CohenLNSSS22,
  author    = {Edith Cohen and Xin Lyu and Jelani Nelson and Tam{\'a}s Sarl{\'o}s and Moshe Shechner and Uri Stemmer},
  title     = {On the robustness of {CountSketch} to adaptive inputs},
  booktitle = {Proceedings of the 39th International Conference on Machine Learning (ICML)},
  series    = {PMLR},
  volume    = {162},
  pages     = {4112--4140},
  year      = {2022},
  url       = {https://proceedings.mlr.press/v162/cohen22a.html}
}

@article{CohenNSS23,
  author    = {Edith Cohen and Jelani Nelson and Tam{\'a}s Sarl{\'o}s and Uri Stemmer},
  title     = {Tricking the hashing trick: A tight lower bound on the robustness of {CountSketch} to adaptive inputs},
  journal   = {Proceedings of the AAAI Conference on Artificial Intelligence},
  volume    = {37},
  number    = {6},
  pages     = {7235--7243},
  year      = {2023},
  url       = {https://doi.org/10.1609/aaai.v37i6.25882}
}

@inproceedings{AhmadianCohen24,
  author    = {Sara Ahmadian and Edith Cohen},
  title     = {Unmasking vulnerabilities: Cardinality sketches under adaptive inputs},
  booktitle = {Proceedings of the 41st International Conference on Machine Learning (ICML)},
  series    = {PMLR},
  volume    = {235},
  pages     = {550--576},
  year      = {2024},
  url       = {https://proceedings.mlr.press/v235/ahmadian24a.html}
}

@inproceedings{GribelyukLWYZ24,
  author    = {Elena Gribelyuk and Honghao Lin and David P. Woodruff and Huacheng Yu and Samson Zhou},
  title     = {A strong separation for adversarially robust $\ell_0$ estimation for linear sketches},
  booktitle = {Proceedings of the 65th IEEE Symposium on Foundations of Computer Science (FOCS)},
  year      = {2024},
  pages     = {2318--2343},
  url       = {https://arxiv.org/abs/2409.16153}
}

@inproceedings{GribelyukLWYZ25,
  author    = {Elena Gribelyuk and Honghao Lin and David P. Woodruff and Huacheng Yu and Samson Zhou},
  title     = {Lifting linear sketches: Optimal bounds and adversarial robustness},
  booktitle = {Proceedings of the 57th ACM Symposium on Theory of Computing (STOC)},
  pages     = {395--406},
  year      = {2025},
  url       = {https://doi.org/10.1145/3717823.3718227}
}

@inproceedings{GribelyukLWYZ26a,
  author    = {Elena Gribelyuk and Honghao Lin and David P. Woodruff and Huacheng Yu and Samson Zhou},
  title     = {Adversarial Robustness on Insertion-Deletion Streams},
  booktitle = {Proceedings of the 58th ACM Symposium on Theory of Computing (STOC)},
  pages     = {2278--2289},
  year      = {2026},
  publisher = {ACM},
  url       = {https://arxiv.org/abs/2602.20854}
}

@inproceedings{GribelyukLWYZ26b,
  author    = {Elena Gribelyuk and Honghao Lin and David P. Woodruff and Huacheng Yu and Samson Zhou},
  title     = {Adversarial Robustness for Small Frequency Moments and a Weak Equivalence Theorem for Turnstile Streams},
  booktitle = {Proceedings of the IEEE Symposium on Foundations of Computer Science (FOCS)},
  year      = {2026},
  url       = {https://arxiv.org/abs/2607.06312}
}

@inproceedings{CohenNSSS26,
  author    = {Edith Cohen and Jelani Nelson and Tam{\'a}s Sarl{\'o}s and Mihir Singhal and Uri Stemmer},
  title     = {One attack to rule them all: Tight quadratic bounds for adaptive queries on cardinality sketches},
  booktitle = {Proceedings of the ACM-SIAM Symposium on Discrete Algorithms (SODA)},
  pages     = {3162--3200},
  year      = {2026},
  url       = {https://arxiv.org/abs/2411.06370}
}

@inproceedings{AhmadianCS25,
  author    = {Sara Ahmadian and Edith Cohen and Uri Stemmer},
  title     = {The cost of compression: Tight quadratic black-box attacks on sketches for $\ell_2$ norm estimation},
  booktitle = {Advances in Neural Information Processing Systems 38 (NeurIPS)},
  year      = {2025},
  note      = {arXiv:2507.16345},
  pages     = {25633--25659},
  url       = {https://proceedings.neurips.cc/paper_files/paper/2025/hash/249f73e01f0a2bb6c8d971b565f159a7-Abstract-Conference.html}
}

@incollection{JohnsonLindenstrauss84,
  author    = {William B. Johnson and Joram Lindenstrauss},
  title     = {Extensions of {L}ipschitz mappings into a {H}ilbert space},
  volume    = {26},
  pages     = {189--206},
  year      = {1984},
  booktitle = {Conference in Modern Analysis and Probability},
  series    = {Contemporary Mathematics},
  publisher = {American Mathematical Society},
  url       = {https://doi.org/10.1090/conm/026/737400}
}

@article{AlonMS99,
  author    = {Noga Alon and Yossi Matias and Mario Szegedy},
  title     = {The space complexity of approximating the frequency moments},
  journal   = {Journal of Computer and System Sciences},
  volume    = {58},
  number    = {1},
  pages     = {137--147},
  year      = {1999},
  url       = {https://doi.org/10.1006/jcss.1997.1545}
}

@book{LehmannCasella98,
  author    = {Erich L. Lehmann and George Casella},
  title     = {Theory of Point Estimation},
  edition   = {2nd},
  publisher = {Springer},
  year      = {1998},
  note      = {ISBN 978-0-387-98502-2}
}

@article{TsudaRW05,
  author    = {Koji Tsuda and Gunnar R{\"a}tsch and Manfred K. Warmuth},
  title     = {Matrix exponentiated gradient updates for on-line learning and {B}regman projection},
  journal   = {Journal of Machine Learning Research},
  volume    = {6},
  pages     = {995--1018},
  year      = {2005},
  url       = {https://jmlr.org/papers/v6/tsuda05a.html}
}

@article{WarmuthKuzmin08,
  author    = {Manfred K. Warmuth and Dima Kuzmin},
  title     = {Randomized online {PCA} algorithms with regret bounds that are logarithmic in the dimension},
  journal   = {Journal of Machine Learning Research},
  volume    = {9},
  pages     = {2287--2320},
  year      = {2008},
  url       = {https://jmlr.org/papers/v9/warmuth08a.html}
}

@article{AroraKale16,
  author    = {Sanjeev Arora and Satyen Kale},
  title     = {A combinatorial, primal-dual approach to semidefinite programs},
  journal   = {Journal of the ACM},
  volume    = {63},
  number    = {2},
  pages     = {12:1--12:35},
  year      = {2016},
  note      = {Preliminary version in STOC 2007, pages 227--236},
  doi       = {10.1145/2837020},
  url       = {https://doi.org/10.1145/2837020}
}

@inproceedings{SzegedyZSBEGF13,
  author    = {Christian Szegedy and Wojciech Zaremba and Ilya Sutskever and Joan Bruna and Dumitru Erhan and Ian Goodfellow and Rob Fergus},
  title     = {Intriguing properties of neural networks},
  year      = {2014},
  booktitle = {International Conference on Learning Representations (ICLR)},
  url       = {https://arxiv.org/abs/1312.6199}
}

@inproceedings{GoodfellowSS15,
  author    = {Ian J. Goodfellow and Jonathon Shlens and Christian Szegedy},
  title     = {Explaining and harnessing adversarial examples},
  booktitle = {International Conference on Learning Representations (ICLR)},
  year      = {2015},
  url       = {https://arxiv.org/abs/1412.6572}
}

@inproceedings{PapernotMGJCS17,
  author    = {Nicolas Papernot and Patrick McDaniel and Ian Goodfellow and Somesh Jha and Z. Berkay Celik and Ananthram Swami},
  title     = {Practical black-box attacks against machine learning},
  booktitle = {Proceedings of the 2017 ACM Asia Conference on Computer and Communications Security (ASIA CCS)},
  pages     = {506--519},
  year      = {2017},
  url       = {https://arxiv.org/abs/1602.02697}
}

@inproceedings{FawziMF16,
  author    = {Alhussein Fawzi and Seyed-Mohsen Moosavi-Dezfooli and Pascal Frossard},
  title     = {Robustness of classifiers: from adversarial to random noise},
  booktitle = {Advances in Neural Information Processing Systems 29 (NIPS)},
  pages     = {1632--1640},
  year      = {2016},
  url       = {https://arxiv.org/abs/1608.08967}
}

@inproceedings{MoosaviDezfooliFFF17,
  author    = {Seyed-Mohsen Moosavi-Dezfooli and Alhussein Fawzi and Omar Fawzi and Pascal Frossard},
  title     = {Universal adversarial perturbations},
  booktitle = {IEEE Conference on Computer Vision and Pattern Recognition (CVPR)},
  pages     = {86--94},
  year      = {2017},
  url       = {https://arxiv.org/abs/1610.08401}
}

@misc{TramerPGBM17,
  author    = {Florian Tram{\`e}r and Nicolas Papernot and Ian Goodfellow and Dan Boneh and Patrick McDaniel},
  title     = {The space of transferable adversarial examples},
  howpublished = {arXiv:1704.03453},
  year      = {2017},
  url       = {https://arxiv.org/abs/1704.03453}
}

@inproceedings{IlyasEAL18,
  author    = {Andrew Ilyas and Logan Engstrom and Anish Athalye and Jessy Lin},
  title     = {Black-box adversarial attacks with limited queries and information},
  booktitle = {Proceedings of the 35th International Conference on Machine Learning (ICML)},
  year      = {2018},
  series    = {PMLR},
  volume    = {80},
  pages     = {2137--2146},
  url       = {https://proceedings.mlr.press/v80/ilyas18a.html}
}

@inproceedings{ChenJW20,
  author    = {Jianbo Chen and Michael I. Jordan and Martin J. Wainwright},
  title     = {{HopSkipJumpAttack}: A query-efficient decision-based attack},
  booktitle = {IEEE Symposium on Security and Privacy (SP)},
  pages     = {1277--1294},
  year      = {2020},
  url       = {https://arxiv.org/abs/1904.02144}
}

@inproceedings{IlyasSTETM19,
  author    = {Andrew Ilyas and Shibani Santurkar and Dimitris Tsipras and Logan Engstrom and Brandon Tran and Aleksander Madry},
  title     = {Adversarial examples are not bugs, they are features},
  booktitle = {Advances in Neural Information Processing Systems 32 (NeurIPS)},
  year      = {2019},
  url       = {https://arxiv.org/abs/1905.02175}
}

@inproceedings{TsiprasSETM19,
  author    = {Dimitris Tsipras and Shibani Santurkar and Logan Engstrom and Alexander Turner and Aleksander Madry},
  title     = {Robustness may be at odds with accuracy},
  booktitle = {International Conference on Learning Representations (ICLR)},
  year      = {2019},
  url       = {https://arxiv.org/abs/1805.12152}
}

@inproceedings{DanielyShacham20,
  author    = {Amit Daniely and Hadas Shacham},
  title     = {Most {ReLU} networks suffer from $\ell^2$ adversarial perturbations},
  booktitle = {Advances in Neural Information Processing Systems 33 (NeurIPS)},
  year      = {2020},
  pages     = {6629--6636},
  url       = {https://papers.nips.cc/paper/2020/hash/497476fe61816251905e8baafdf54c23-Abstract.html}
}

@inproceedings{BubeckCGT21,
  author    = {S{\'e}bastien Bubeck and Yeshwanth Cherapanamjeri and Gauthier Gidel and R{\'e}mi Tachet des Combes},
  title     = {A single gradient step finds adversarial examples on random two-layers neural networks},
  booktitle = {Advances in Neural Information Processing Systems 34 (NeurIPS)},
  pages     = {10081--10091},
  year      = {2021},
  url       = {https://arxiv.org/abs/2104.03863}
}

@inproceedings{BartlettBC21,
  author    = {Peter L. Bartlett and S{\'e}bastien Bubeck and Yeshwanth Cherapanamjeri},
  title     = {Adversarial examples in multi-layer random {ReLU} networks},
  booktitle = {Advances in Neural Information Processing Systems 34 (NeurIPS)},
  pages     = {9241--9252},
  year      = {2021},
  url       = {https://arxiv.org/abs/2106.12611}
}

@inproceedings{LecuyerAGHJ19,
  author    = {Mathias Lecuyer and Vaggelis Atlidakis and Roxana Geambasu and Daniel Hsu and Suman Jana},
  title     = {Certified robustness to adversarial examples with differential privacy},
  booktitle = {IEEE Symposium on Security and Privacy (SP)},
  year      = {2019},
  pages     = {656--672},
  url       = {https://arxiv.org/abs/1802.03471}
}

@inproceedings{CohenRK19,
  author    = {Jeremy M. Cohen and Elan Rosenfeld and J. Zico Kolter},
  title     = {Certified adversarial robustness via randomized smoothing},
  booktitle = {Proceedings of the 36th International Conference on Machine Learning (ICML)},
  year      = {2019},
  series    = {PMLR},
  volume    = {97},
  pages     = {1310--1320},
  url       = {https://proceedings.mlr.press/v97/cohen19c.html}
}

@inproceedings{BubeckSellke21,
  author    = {S{\'e}bastien Bubeck and Mark Sellke},
  title     = {A Universal Law of Robustness via Isoperimetry},
  booktitle = {Advances in Neural Information Processing Systems 34 (NeurIPS)},
  year      = {2021},
  pages     = {28811--28822},
  url       = {https://arxiv.org/abs/2105.12806}
}

@article{Freedman83,
  author    = {David A. Freedman},
  title     = {A Note on Screening Regression Equations},
  journal   = {The American Statistician},
  volume    = {37},
  number    = {2},
  pages     = {152--155},
  year      = {1983},
  doi       = {10.1080/00031305.1983.10482729},
  url       = {https://doi.org/10.1080/00031305.1983.10482729}
}

@article{Ioannidis05,
  author    = {John P. A. Ioannidis},
  title     = {Why Most Published Research Findings Are False},
  journal   = {PLoS Medicine},
  volume    = {2},
  number    = {8},
  pages     = {e124},
  year      = {2005},
  doi       = {10.1371/journal.pmed.0020124},
  url       = {https://doi.org/10.1371/journal.pmed.0020124}
}

@article{LukacsBA10,
  author    = {Paul M. Lukacs and Kenneth P. Burnham and David R. Anderson},
  title     = {Model Selection Bias and {Freedman's} Paradox},
  journal   = {Annals of the Institute of Statistical Mathematics},
  volume    = {62},
  number    = {1},
  pages     = {117--125},
  year      = {2010},
  doi       = {10.1007/s10463-009-0234-4},
  url       = {https://doi.org/10.1007/s10463-009-0234-4}
}

@article{BenDavidBKTW94,
  author    = {Shai Ben-David and Allan Borodin and Richard M. Karp and G{\'a}bor Tardos and Avi Wigderson},
  title     = {On the Power of Randomization in On-Line Algorithms},
  journal   = {Algorithmica},
  volume    = {11},
  number    = {1},
  pages     = {2--14},
  year      = {1994},
  doi       = {10.1007/BF01294260},
  url       = {https://doi.org/10.1007/BF01294260}
}

@article{ShiloachEven81,
  author    = {Shimon Even and Yossi Shiloach},
  title     = {An On-Line Edge-Deletion Problem},
  journal   = {Journal of the ACM},
  volume    = {28},
  number    = {1},
  pages     = {1--4},
  year      = {1981},
  doi       = {10.1145/322234.322235},
  url       = {https://doi.org/10.1145/322234.322235}
}

@inproceedings{AhnGM12,
  author    = {Kook Jin Ahn and Sudipto Guha and Andrew McGregor},
  title     = {Analyzing Graph Structure via Linear Measurements},
  booktitle = {Proceedings of the 23rd Annual ACM-SIAM Symposium on Discrete Algorithms (SODA)},
  pages     = {459--467},
  year      = {2012},
  doi       = {10.1137/1.9781611973099.40},
  url       = {https://doi.org/10.1137/1.9781611973099.40}
}

@inproceedings{GawrychowskiMW20,
  author    = {Pawe{\l} Gawrychowski and Shay Mozes and Oren Weimann},
  title     = {Minimum Cut in {$O(m\log^2 n)$} Time},
  booktitle = {47th International Colloquium on Automata, Languages, and Programming (ICALP)},
  series    = {LIPIcs},
  volume    = {168},
  pages     = {57:1--57:15},
  year      = {2020},
  url       = {https://doi.org/10.4230/LIPIcs.ICALP.2020.57}
}

@inproceedings{GutenbergWN20,
  author    = {Maximilian Probst Gutenberg and Christian Wulff-Nilsen},
  title     = {Decremental {SSSP} in Weighted Digraphs: Faster and Against an Adaptive Adversary},
  booktitle = {Proceedings of the 31st Annual ACM-SIAM Symposium on Discrete Algorithms (SODA)},
  pages     = {2542--2561},
  year      = {2020},
  url       = {https://arxiv.org/abs/2001.10821}
}

@inproceedings{Wajc20,
  author    = {David Wajc},
  title     = {Rounding Dynamic Matchings Against an Adaptive Adversary},
  booktitle = {Proceedings of the 52nd Annual ACM Symposium on Theory of Computing (STOC)},
  year      = {2020},
  doi       = {10.1145/3357713.3384258},
  pages     = {194--207},
  url       = {https://doi.org/10.1145/3357713.3384258}
}

@inproceedings{BeimelKMNSS22,
  author    = {Amos Beimel and Haim Kaplan and Yishay Mansour and Kobbi Nissim and Thatchaphol Saranurak and Uri Stemmer},
  title     = {Dynamic Algorithms Against an Adaptive Adversary: Generic Constructions and Lower Bounds},
  booktitle = {Proceedings of the 54th Annual ACM Symposium on Theory of Computing (STOC)},
  pages     = {1671--1684},
  year      = {2022},
  doi       = {10.1145/3519935.3520064},
  url       = {https://doi.org/10.1145/3519935.3520064}
}

@inproceedings{AthalyeEIK18,
  author    = {Anish Athalye and Logan Engstrom and Andrew Ilyas and Kevin Kwok},
  title     = {Synthesizing Robust Adversarial Examples},
  booktitle = {Proceedings of the 35th International Conference on Machine Learning (ICML)},
  series    = {Proceedings of Machine Learning Research},
  volume    = {80},
  pages     = {284--293},
  year      = {2018},
  url       = {https://proceedings.mlr.press/v80/athalye18b.html}
}

@inproceedings{BenEliezerEO21,
  author       = {Omri Ben-Eliezer and Talya Eden and Krzysztof Onak},
  title        = {Adversarially Robust Streaming via Dense-Sparse Trade-offs},
  year         = {2022},
  booktitle = {Symposium on Simplicity in Algorithms (SOSA)},
  pages     = {214--227},
  url          = {https://doi.org/10.1137/1.9781611977066.15}
}

@inproceedings{CohenGNS26,
  author    = {Edith Cohen and Elena Gribelyuk and Jelani Nelson and Uri Stemmer},
  title     = {Adaptively Robust Resettable Streaming},
  booktitle = {Proceedings of the 43rd International Conference on Machine Learning (ICML)},
  series    = {Proceedings of Machine Learning Research},
  volume    = {306},
  pages     = {21046--21087},
  year      = {2026},
  publisher = {PMLR},
  url       = {https://proceedings.mlr.press/v306/cohen26a.html}
}

@misc{CohenGMS26,
  author       = {Edith Cohen and Elena Gribelyuk and Pasin Manurangsi and Uri Stemmer},
  title        = {Tight Bounds on the Cost of Adaptivity for the {Meyerson} Sketch},
  howpublished = {arXiv:2609.06669},
  year         = {2026},
  url          = {https://arxiv.org/abs/2609.06669}
}

@misc{NasrEtAl25,
  author       = {Milad Nasr and Nicholas Carlini and Chawin Sitawarin and
                  Sander V. Schulhoff and Jamie Hayes and Michael Ilie and
                  Juliette Pluto and Shuang Song and Harsh Chaudhari and
                  Ilia Shumailov and Abhradeep Thakurta and Kai Yuanqing Xiao and
                  Andreas Terzis and Florian Tram{\`e}r},
  title        = {The Attacker Moves Second: Stronger Adaptive Attacks Bypass
                  Defenses Against {LLM} Jailbreaks and Prompt Injections},
  howpublished = {arXiv:2510.09023},
  year         = {2025},
  url          = {https://arxiv.org/abs/2510.09023}
}

@inproceedings{HeZRS16,
  author    = {Kaiming He and Xiangyu Zhang and Shaoqing Ren and Jian Sun},
  title     = {Deep Residual Learning for Image Recognition},
  booktitle = {Proceedings of the IEEE Conference on Computer Vision and Pattern Recognition (CVPR)},
  pages     = {770--778},
  year      = {2016},
  url       = {https://openaccess.thecvf.com/content_cvpr_2016/html/He_Deep_Residual_Learning_CVPR_2016_paper.html}
}

@inproceedings{HuEtAl21LoRA,
  author = {Edward J. Hu and Yelong Shen and Phillip Wallis and
            Zeyuan Allen-Zhu and Yuanzhi Li and Shean Wang and
            Lu Wang and Weizhu Chen},
  title = {{LoRA}: Low-Rank Adaptation of Large Language Models},
  year = {2022},
  booktitle = {International Conference on Learning Representations (ICLR)},
  url = {https://openreview.net/forum?id=nZeVKeeFYf9}
}

@article{Hager89,
  author  = {William W. Hager},
  title   = {Updating the Inverse of a Matrix},
  journal = {SIAM Review},
  volume  = {31},
  number  = {2},
  pages   = {221--239},
  year    = {1989},
  doi     = {10.1137/1031049},
  url     = {https://doi.org/10.1137/1031049}
}

\clearpage
\appendix

\section{Related work and discussion}
\label{app:related-work}
\label{sec:discussion}

\subsection{Related work}

\paragraph{Adaptive inputs and robustness.}
Classical guarantees for randomized algorithms and structures often
assume that inputs are independent of the algorithm's random choices.
This assumption can fail when outputs influence subsequent inputs,
whether through an adversarial attack or ordinary use within a
larger computation. Repeated interaction may reveal enough about
the hidden randomness to identify inputs on which the algorithm
fails.

This issue has been studied across statistics and adaptive data
analysis
\citep{Freedman83,Ioannidis05,LukacsBA10,HardtUllman14,
DworkFHPRR15,BassilyNSSSU16,SteinkeUllman15},
randomized online and dynamic algorithms
\citep{BenDavidBKTW94,ShiloachEven81,AhnGM12,GawrychowskiMW20,
GutenbergWN20,Wajc20,BeimelKMNSS22,CohenGMS26},
sketching and streaming
\citep{MironovNS08,HardtWoodruff13,BenEliezerJWY20,HassidimKMMS20,
WoodruffZhou21,AttiasCSS23,BenEliezerEO21},
and machine learning
\citep{SzegedyZSBEGF13,GoodfellowSS15,PapernotMGJCS17,
AthalyeEIK18,NasrEtAl25}.
Sketching provides a clean setting for quantifying how much
interaction is needed to compromise a fixed randomized representation.

\paragraph{Defenses and the quadratic guarantee.}
\label{app:quadratic-benchmark}
A simple defense uses independent copies of a randomized structure,
answering each query with a fresh copy. This supports a number of
adaptive queries roughly linear in the number of copies.
Methods developed in adaptive data analysis improve this guarantee
to quadratic by limiting what the answers reveal about the reused
randomness
\citep{DworkFHPRR15,BassilyNSSSU16,HassidimKMMS20,Blanc23}.

For JL or AMS sketches of total dimension $k$, these methods support
$\widetilde\Omega(\alpha^4k^2)$ adaptive queries.
Each independent copy uses $\Theta(\alpha^{-2})$ measurements,
leaving $\Theta(\alpha^2k)$ copies for quadratic reuse.
The methods use a fixed randomized response rule with fresh randomness suffices.

\paragraph{Quadratic vulnerabilities of sketches.}
Attacks on sketches differ in the structures they cover, the
responders they allow, and the failures they produce
\citep{HardtWoodruff13,CherapanamjeriNelson20,BenEliezerJWY20,
CohenLNSSS22,GribelyukLWYZ24,GribelyukLWYZ25}.
An attack on a standard estimator may be much cheaper than on one designed to obscure internal parameters. 

A line of work establishes quadratic-query vulnerabilities for
AMS and CountSketch, MinHash cardinality sketches, monotone
composable and linear cardinality sketches, and general linear
sketches for norm estimation
\citep{CohenNSS23,AhmadianCohen24,CohenNSSS26,AhmadianCS25}.
These results show that the quadratic guarantee is often a barrier,
even when the responder can choose how to use the sketch.
Our work establishes this vulnerability for Euclidean-norm estimation, for any linear sketch and with unrestricted responders.
\Cref{sec:comparison,tab:comparison} give the detailed comparison
with the closest prior attacks.

\paragraph{Beyond the quadratic barrier.}
Stronger guarantees are possible when the representation or the
allowed interactions change.
Recent algorithms for insertion--deletion and resettable streams
obtain stronger robustness by maintaining and refreshing
representations as the stream evolves
\citep{GribelyukLWYZ26a,GribelyukLWYZ26b,CohenGNS26}.
They process individual stream updates and maintain state tied to
the execution of a single stream. These guarantees do not provide
a common fixed, composable representation for independently
maintained inputs.
A different approach retains a fixed cardinality sketch but limits
how often each element participates in the queries: the quadratic
restriction then applies to each element's participation rather
than to the total number of queries \citep{CohenSS25}.
These results highlight the importance of both the representation
and the way it is reused.

\subsection{Implications for prediction and defenses}
\label{app:sufficient-readouts}

\paragraph{Connections to adversarial prediction.}
A predictor can be accurate on independently drawn inputs while
interaction with its outputs helps locate failures.
This gap has motivated extensive empirical and theoretical work
on adversarial examples
\citep{FawziMF16,MoosaviDezfooliFFF17,TramerPGBM17,
DanielyShacham20,BubeckCGT21,BartlettBC21,BubeckSellke21,
IlyasSTETM19,TsiprasSETM19}.
Our attacks share a simple mechanism with black-box attacks on
learned models: responses are used to weight sampled perturbations,
and their average identifies directions useful to the attack
\citep{PapernotMGJCS17,IlyasEAL18,ChenJW20}.
The sketching setting lets us quantify this mechanism and the
extent to which a responder can delay it.

\paragraph{Accuracy and disclosure.}
More accurate answers enable cheaper attacks, reaching $O(k)$
queries at $\alpha=\Theta(k^{-1/2})$.
Our lower bound also shows a benefit of deliberately
limiting response precision.
The added noise masks the matrix-dependent estimation error while
preserving the required accuracy.
Randomization also provides provable protection against bounded
input perturbations in learned predictors
\citep{LecuyerAGHJ19,CohenRK19}, making its role in controlling
adaptive disclosure a useful direction for further study.

\paragraph{Learning after a fixed compression.}
Consider a predictor built on a compressive linear layer, as used
in neural architectures \citep{HeZRS16}.
Suppose the compression remains fixed while the downstream
predictor is repeatedly fitted to the current input distribution.
This is an idealized setting for adaptation through limited
trainable components, including LoRA when applied downstream
of the fixed compression \citep{HuEtAl21LoRA}.

For our Gaussian query family, the Bayes-optimal signal-gap
predictor depends only on the scalar statistic $T_\Sig$
(\Cref{prop:sufficient-bayes}).
This motivates sufficient-statistic responders as a model of
prediction fitted to each distribution.
Postprocessing that uses only this prediction and fresh randomness,
including noise, rounding, or clipping, preserves the restriction.
The more efficient attack therefore applies to these output defenses.

The interaction can be viewed as a search for a distribution on
which the fixed representation loses essential information.
Accurate predictions guide this search, even as the predictor
is adapted to each new distribution.
If the responder remains sufficiently accurate whenever a
stricter constant tolerance is feasible, the returned distribution
is universally hard (\Cref{rem:forced-error}).
For the sufficient-statistic attack, reaching the eigenvalue
threshold also certifies this hardness.

\paragraph{Responses and representations.}
Our results suggest two complementary approaches to robustness
when a fixed compressive component forms an information bottleneck
in a larger model. One is to limit what the responses reveal about
that component. Fresh output noise can mask information useful
to the attacker; dependence on additional sketch information may
introduce ``decoy'' correlations that obscure the useful direction.
Our results quantify a benefit of the former and leave open how
much additional protection the latter can provide.

Another approach is to periodically replace the compressive
component while preserving predictive accuracy on the intended
input distribution. Changing which information the representation
retains could make directions learned from earlier responses less
useful to the attacker. This suggests studying defenses that
refresh the representation itself, alongside defenses that modify
the responses produced from a fixed representation.

\subsection{Open problems}
\label{app:open-problems}

\paragraph{Accuracy and distributional information.}
Our attacks use $\widetilde O(k+\alpha^2k^2)$ queries.
For signal--noise attacks with sufficiently concentrated noise
energy, we show that a responder given the noise mean and
covariance can withstand $\Omega(\alpha^2k^2)$ queries.
Known defenses without distributional information guarantee
$\widetilde\Omega(\alpha^4k^2)$ queries.
How much of this gap can be closed for unrestricted query
distributions, or without distributional information?

\paragraph{More efficient general attacks.}
Our sufficient-statistic attack saves a logarithmic factor in
queries and uses less memory and computation than our general
attack. Can general attacks match its query bound and achieve
comparable or better memory and computational costs?
General responders can use sketch information irrelevant to
signal prediction to create correlations that obscure the
useful direction. Do these correlations inherently increase
the cost of attack?

The gap in adaptive rounds is necessary: at constant accuracy,
general responders can require $\widetilde\Omega(k)$ rounds
for attacks using polynomially many queries
(\Cref{sec:comparison}).

\paragraph{Attacks on learned responders.}
When does fitting a signal-gap predictor to each query
distribution, while keeping the compression fixed, leave it
vulnerable to the more efficient attack?
In particular, how does excess prediction risk control
response correlations orthogonal to the signal-estimation
direction, and when does this imply the approximate alignment
required in \Cref{rem:sufficient-computation}?

\paragraph{Other tasks.}
Can attacks with similarly sharp query bounds be obtained
for other tasks, such as estimating other $\ell_p$ norms?
More broadly, can response correlations guide attacks that
expose information lost by nonlinear representations?

\section{Analysis of the general attack}
\label{sec:general-analysis}

We prove \Cref{thm:norm-operational}, then establish the supporting
results used in the proof.

\subsection{Parameters and proof of \texorpdfstring{\Cref{thm:norm-operational}}{Theorem 1.1}}
\label{sec:end-to-end}

\paragraph{Parameter choice.}
The following parameter choices suffice for the covariance-update
regime; we make no attempt to optimize the constants.
Fix a small absolute constant $0<c\le1$ and set $\delta=0.04$.
Use $\varepsilon_{\rm nc}=\delta_0/8$ and
$c_\rho=\varepsilon_{\rm nc}/8=\delta_0/64$ in the norm reduction
of \Cref{lem:norm-to-gap}. Choose $c_{\rm err}\le1/16$ small enough
for that lemma. Choose any prescribed set $S\subseteq[n]$ of
$m+1=40k+1$ coordinates, choose $h$ uniformly from $S$, and put
$M=S\setminus\{h\}$.

Set
\[
 C_\lambda=\frac{C_{\rm marg}}{c_g^2c^4},\qquad
 C_{\rm marg}=256,\qquad
 \lambda=C_\lambda\alpha^2,\qquad
 q_0=\frac{1-\lambda}{m-1}.
\]
The constant $C_{\rm marg}$ is the separation margin established in
\Cref{sec:general-phase}. Choose absolute constants $C_{\min}\ge1$
and $\alpha_0>0$ so that
$C_{\min}/\sqrt k\le\alpha\le\alpha_0$ implies
\begin{equation}\label{eq:alpha-range-conditions}
 \alpha\le\frac{c}{20},\qquad
 \frac1m<\lambda\le\frac12,\qquad
 \lambda^2+\frac1{m-1}\le c_\rho\frac{\alpha^2}{c^4}.
\end{equation}
Specifically, take
\[
 \begin{aligned}
 \alpha_0^2&=\min\left\{\frac{c^2}{400},\frac1{2C_\lambda},
                            \frac{c_\rho}{2c^4C_\lambda^2}\right\},\\
 C_{\min}^2&=\max\left\{1,\frac1{20C_\lambda},
                              \frac{c^4}{19c_\rho}\right\}.
 \end{aligned}
\]
The lower endpoint controls the isotropic contribution to the noise
norm fluctuations; the upper endpoint controls the covariance spike.

Use the following step size, round budget, and common round length:
\begin{equation}\label{eq:general-parameters}
 \begin{aligned}
 \eta&=\frac12,\qquad
 T_{\max}=\left\lceil\frac{\lambda}{q_0}\log(m\lambda)
                  +16\log\frac2\delta\right\rceil+1,\\
 r&=\max\left\{48m,\
       \left\lceil\frac{32}{\delta_0^2}
           \log\frac{4T_{\max}}{\delta}\right\rceil\right\}.
 \end{aligned}
\end{equation}

The same formulas are used with general $0<\delta<1$ in
\Cref{sec:general-potential}. By
\eqref{eq:universal-spike-admissibility} and
\eqref{eq:alpha-range-conditions}, every covariance in the search
space satisfies
\begin{equation}\label{eq:K-admissible}
 \tr(\Sig^2)\le c_\rho\frac{\alpha^2}{c^4},\qquad
 \|\Sig\|_{\rm op}\le\lambda=O(\alpha^2).
\end{equation}
\begin{proof}[Proof of \Cref{thm:norm-operational}]
First consider $C_{\min}/\sqrt{k}\le\alpha\le\alpha_0$.
Convert the norm-task answers as in \Cref{sec:norm2signal}.
The resulting answers form an informed signal-gap responder.

With probability at least $0.95$ over the choice of signal
coordinate, either $\xcol=0$ or
$1/(40k)\le\Vinfo(I_m/m)<\infty$
(\Cref{cor:random}). In the latter case, the spike comparator of
\eqref{eq:universal-spike-family} satisfies
\begin{equation}\label{eq:general-alpha-margin}
 \Vinfo(\Sig_\star)=m\lambda\Vinfo(I_m/m)\ge\lambda,
 \qquad g^2\Vinfo(\Sig_\star)\ge C_{\rm marg}=256.
\end{equation}
This is the margin needed for the entropy argument:
\Cref{thm:general-main} shows that a round must fail validation
within $T_{\max}$ rounds with probability at least
$1-\delta=0.96$, conditional on this initialization.
If $\xcol=0$, the sketch carries no information about the signal,
so every response rule has signal-gap loss
$5/(10+c)>\delta_0$; the validation bound
(\Cref{lem:adaptive-phase-validation}(b)) gives the same rejection
probability in the first round.
Thus the attack returns a rejected round with probability at
least $0.95\cdot0.96$.

Each round draws $r$ independent queries from its announced
distribution, and the stopping rule returns a round with
empirical signal-gap loss $E_t/r>\delta_0/2$.
Every covariance used by the attack is norm-admissible by
\eqref{eq:K-admissible}.
For the round length in \eqref{eq:general-parameters},
\Cref{cor:empirical-norm-transfer} with $N=T_{\max}$ and
$\varepsilon=0.01$ gives, with probability at least $0.99$,
the simultaneous bounds
\[
 \frac1r\sum_{j=1}^r e^{\rm norm}_{t,j}
 \ge \frac{E_t}{r}-\frac{\delta_0}{4}
\]
for all executed rounds, where $e^{\rm norm}_{t,j}$ is the loss
for the original norm task.
Consequently, with probability at least
$0.95\cdot0.96-0.01>0.9$, the returned round contains more than
a $\delta_0/4$ fraction of norm-task errors.

Since $\lambda\le1/2$,
$\lambda/q_0\le2m\lambda=80C_\lambda\alpha^2k$.
Also $\alpha^2k\ge1$, so
$\log(m\lambda)=O(\log(2+\alpha^2k))$. Consequently,
\[
 T_{\max}=O\bigl(1+\alpha^2k\log(2+\alpha^2k)\bigr),
 \qquad r=\Theta(k),
\]
where the second bound uses $T_{\max}\le k^{O(1)}$ and fixed
$\delta$. The total query count is
\begin{equation}\label{eq:end-to-end-query-bound}
 Q=rT_{\max}=O\bigl(k+\alpha^2k^2\log(2+\alpha^2k)\bigr).
\end{equation}
All queries are supported on $S$, and
\Cref{sec:general-computation} gives the stated working-memory and
arithmetic bounds.

For $0<\alpha<C_{\min}/\sqrt{k}$ and $\alpha\le\alpha_0$,
\Cref{cor:isotropic-small-accuracy} with $C=C_{\min}$ gives the
guarantee for either norm task in one round of $\Theta(k)$ queries,
using $O(k)$ working memory and $O(k^2)$ arithmetic operations.
Its norm-estimation error threshold $\alpha/16$ is at least
$c_{\rm err}\alpha$. Take
\[
 c_{\rm frac}=\min\left\{\frac{\delta_0}{4},
                \frac{\delta_{\rm iso}(c,C_{\min})}{2}\right\}
\]
to cover both regimes.
Finally, for arbitrary $\alpha\in(0,1]$, run at
$\beta=\min\{\alpha,\alpha_0\}$. Since
$\alpha_0\alpha\le\beta\le\alpha$, the forced relative error
remains $\Omega(\alpha)$ and the norm gap remains
$1+\Theta(\alpha)$, while the resource bounds at $\beta$ are at
most those stated at $\alpha$.
\end{proof}

\subsection{Gaussian geometry and initialization}
\label{sec:random-signal-coordinate}

\subsubsection{Gaussian identities and degenerate cases}
\label{sec:scalar-channel}

We collect the scalar-channel definitions and Gaussian identities used
below, including the degenerate cases. With $K_\Sig$ as in
\eqref{eq:Ksigma}, define the normalized \emph{Fisher information} for the scalar signal,
\begin{equation}\label{eq:fisher-information}
 \Info(\Sig)=
 \begin{cases}
  \xcol^\top K_\Sig^\dagger\xcol,
       & \xcol\in\im(K_\Sig),\\
  +\infty, & \xcol\notin\im(K_\Sig),
 \end{cases}
\end{equation}
and the \emph{normalized signal-estimation variance}
\begin{equation}\label{eq:signal-estimation-variance}
 \Vinfo(\Sig)=\Info(\Sig)^{-1},
\end{equation}
with the conventions $1/(+\infty)=0$ and $1/0=+\infty$.  The actual Fisher
information in the observation $y=w\xcol+c\Bmat u$ is
$\Info(\Sig)/c^2$, and the actual signal-estimation variance is
$c^2\Vinfo(\Sig)$.

For $0<\Info(\Sig)<\infty$, the statistic in
\eqref{eq:scalar-statistic} is the minimum-variance unbiased estimator
in this Gaussian location family \citep{LehmannCasella98}. It satisfies
\begin{equation}\label{eq:gls-estimator}
 T_\Sigma(y)=w+c\Vinfo(\Sig)\ahat_\Sig^\top u,
 \qquad
 \ahat_\Sig=\Bmat^\top K_\Sig^\dagger\xcol.
\end{equation}
In particular, the estimation error in \eqref{eq:estimator-noise} is
a linear function of $u$, independent of $w$.

To verify sufficiency directly, put
$d=\Vinfo(\Sig)K_\Sig^\dagger\xcol$, $\epsilon=c\Bmat u$, and
$T=T_\Sig(y)=d^\top y$. Since $d^\top\xcol=1$,
\[
 T=w+d^\top\epsilon,\qquad
 r=y-\xcol T=(I-\xcol d^\top)\epsilon.
\]
Moreover,
\[
 \Cov(r,T-w)
 =c^2\bigl(K_\Sig d-\xcol\,d^\top K_\Sig d\bigr)=0.
\]
Gaussianity makes $r$ independent of $T-w$. Since $w$ is independent
of the Gaussian noise, $r$ is independent of $(w,T)$. The identity
$y=\xcol T+r$ therefore proves that $T$ is sufficient for $w$.

\subsubsection{Variational geometry}

The scalar-channel variance has a useful geometric description.  If
$\xcol\notin\im(\Bmat)$, the sketch has a noiseless signal component and
$\Vinfo(\Sig)=0$ for every covariance.  The successful initialization event
in \Cref{cor:random} ensures $\xcol\in\im(\Bmat)$, which is why we may
assume this condition in the subsequent optimization.
When it holds, let $z_*$ be the minimum-norm solution of
$\Bmat z=\xcol$: a noise vector that the sketch maps exactly onto the
signal column.

\begin{lemma}[Variational formula]\label{lem:variational}
For every positive semidefinite $\Sig$,
\begin{equation}\label{eq:variational-info}
  \Info(\Sig)
  =\min\left\{z^\top\Sig^\dagger z:\ \Bmat z=\xcol,\ z\in\im(\Sig)\right\},
\end{equation}
where the minimum is $+\infty$ if the constraint set is empty.  Consequently,
if $\xcol\in\im(\Bmat)\setminus\{0\}$, then
\begin{equation}\label{eq:isotropic-beta}
  \Vinfo(I_m/m)=\frac1{m\|z_*\|_2^2},
  \qquad
  \Vinfo(\Sig)\le\frac{\lambda_{\max}(\Sig)}{\|z_*\|_2^2}
  =m\lambda_{\max}(\Sig)\,\Vinfo(I_m/m)
\end{equation}
for every $\Sig$, with equality in the second relation when $z_*$ is an
eigenvector of $\Sig$ for its largest eigenvalue.
\end{lemma}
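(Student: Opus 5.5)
We prove the variational formula \eqref{eq:variational-info} by reducing it to a least-squares problem in whitened coordinates, and then read off the consequences. Write $P=\Sig^{1/2}$, so that $K_\Sig=\Bmat P P\Bmat^\top=(\Bmat P)(\Bmat P)^\top$ and $\im(K_\Sig)=\im(\Bmat P)=\Bmat\,\im(\Sig)$. The constraint set $\{z\in\im(\Sig):\Bmat z=\xcol\}$ is nonempty exactly when $\xcol\in\Bmat\,\im(\Sig)=\im(K_\Sig)$. So emptiness (value $+\infty$) matches the case $\xcol\notin\im(K_\Sig)$ in \eqref{eq:fisher-information}.

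When the set is nonempty, substitute $z=Px$ with $x\in\im(P)=\im(\Sig)$. Then $z^\top\Sig^\dagger z=\|x\|_2^2$, because $P\Sig^\dagger P$ is the orthogonal projector onto $\im(\Sig)$. The problem becomes $\min\{\|x\|_2^2:\Bmat Px=\xcol\}$; restricting to $x\in\im(P)$ is harmless, since the minimum-norm solution lies in $\im((\Bmat P)^\top)\subseteq\im(P)$. Put $G=\Bmat P$. The minimum-norm solution of $Gx=\xcol$ is $x=G^\dagger\xcol$, with squared norm
\[
\xcol^\top (G^\dagger)^\top G^\dagger\xcol=\xcol^\top (GG^\top)^\dagger\xcol=\xcol^\top K_\Sig^\dagger\xcol=\Info(\Sig),
\]
using the identity $(G^\dagger)^\top G^\dagger=(GG^\top)^\dagger$. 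This proves \eqref{eq:variational-info}, and the minimum is attained.

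For the consequences, assume $\xcol\in\im(\Bmat)\setminus\{0\}$.

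\begin{itemize}
\item \emph{Isotropic case.} Take $\Sig=I_m/m$, so $\Sig^\dagger=mI_m$ and $\im(\Sig)=\R^m$. Then $\Info=m\min\{\|z\|^2:\Bmat z=\xcol\}=m\|z_*\|_2^2$, which is positive because $\xcol\neq0$ forces $z_*\neq0$. This gives the first identity in \eqref{eq:isotropic-beta}.
\item \emph{Lower bound on $\Info$ for general $\Sig$.} If the feasible set is empty, then $\Info=\infty$ and $\Vinfo=0$, so the bound holds trivially. Otherwise, every feasible $z$ satisfies $\Bmat z=\xcol$, hence $\|z\|\ge\|z_*\|$. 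Since $z\in\im(\Sig)$, we have $z^\top\Sig^\dagger z\ge\|z\|^2/\lambda_{\max}(\Sig)$. Hence $\Info(\Sig)\ge\|z_*\|^2/\lambda_{\max}(\Sig)$, and inverting gives the stated bound on $\Vinfo$.
\item \emph{Equality case.} Suppose $z_*$ is a top eigenvector of $\Sig$. Then $z_*\in\im(\Sig)$ is feasible and $z_*^\top\Sig^\dagger z_*=\|z_*\|^2/\lambda_{\max}$. Combined with the lower bound, this gives equality.
\end{itemize}

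The main care point is the pseudoinverse bookkeeping in the singular case, namely the identities $P\Sig^\dagger P=\Pi_{\im\Sig}$ and $(G^\dagger)^\top G^\dagger=(GG^\top)^\dagger$. Both follow from the SVD of $G$ and of $P$.
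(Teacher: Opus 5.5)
Your proof is correct and follows essentially the same route as the paper. Both whiten with $\Sig^{1/2}$, identify nonemptiness of the constraint set with $\xcol\in\im(K_\Sig)$, and evaluate the minimum-norm solution as $\xcol^\top(GG^\top)^\dagger\xcol$; the paper applies this minimum-norm identity twice (to $\Bmat\Sig^{1/2}$ and to $\Sig^{1/2}$), where you use the projector identity $\Sig^{1/2}\Sig^\dagger\Sig^{1/2}=\Pi_{\im\Sig}$, and the consequences are derived identically. The only nit is the equality claim at $\Sig=0$: there $z_*\notin\im(\Sig)$, so your feasibility step fails, but both sides are zero, which the paper notes explicitly.
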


\begin{proof}
We first prove the variational formula, then derive the two variance
bounds.  Put $L=\Sig^{1/2}$, so
$K_\Sig=(\Bmat L)(\Bmat L)^\top$ and
$\im(K_\Sig)=\im(\Bmat L)$.  If $\xcol$ is outside this image, then
$\Info(\Sig)=+\infty$ by definition and the constraint set in
\eqref{eq:variational-info} is empty.

Otherwise, the minimum-norm solution of a feasible system $Dq=x$ has
squared norm $x^\top(DD^\top)^\dagger x$.  Apply this identity first to
$D=\Bmat L$, then to $D=L$, and group the feasible vectors $q$ according
to $z=Lq$.  This gives
\[
 \Info(\Sig)
 =\min_{\Bmat Lq=\xcol}\|q\|_2^2
 =\min_{\substack{\Bmat z=\xcol\\z\in\im(\Sig)}}
       \ \min_{Lq=z}\|q\|_2^2
 =\min_{\substack{\Bmat z=\xcol\\z\in\im(\Sig)}}
       z^\top\Sig^\dagger z.
\]

Now assume $\xcol\in\im(\Bmat)\setminus\{0\}$.  For the isotropic
covariance, the formula immediately gives
$\Info(I_m/m)=m\|z_*\|_2^2$.
For any nonzero covariance, every feasible $z$ satisfies
\[
 z^\top\Sig^\dagger z
 \ge\frac{\|z\|_2^2}{\lambda_{\max}(\Sig)}
 \ge\frac{\|z_*\|_2^2}{\lambda_{\max}(\Sig)}.
\]
Minimizing and taking reciprocals proves the upper bound on
$\Vinfo(\Sig)$; it also holds when the constraint set is empty, since
then $\Vinfo(\Sig)=0$.  If $z_*$ is a top eigenvector, it is feasible and
attains both inequalities, proving equality.  Finally, for $\Sig=0$,
both $\Vinfo(\Sig)$ and the stated upper bound are zero.
\end{proof}

\subsubsection{Isotropic initialization on any fixed support}
\label{sec:isotropic-start}

The random-coordinate initialization is the same as that of
\citet[Lemma~4.5]{AhmadianCS25}.
Their bound uses $m=\Omega(k\log^2 k)$ noise coordinates and guarantees
variance $\Omega(1/(k\log k))$ with high constant probability.
The leverage-score argument below removes the logarithmic factors:
$m=\Theta(k)$ coordinates suffice for variance $\Omega(1/k)$.

\begin{corollary}[Random-signal-coordinate baseline]
\label{cor:random}
Fix any set $S\subseteq[n]$ of $m+1$ coordinates, with
$m+1\ge2k/\omega$ for some $\omega\in(0,1)$.  Choose $h$ uniformly from
$S$ and put $M=S\setminus\{h\}$.  Then with probability at least
$1-\omega$ over $h$,
either $\xcol=0$ or
\begin{equation}\label{eq:random-iso-high-var}
  \frac{\omega}{2k}\le\Vinfo(I_m/m)<\infty .
\end{equation}
In particular, for $m\ge40k$ this holds with probability at least $0.95$ and
lower bound $1/(40k)$.
\end{corollary}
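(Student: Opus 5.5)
We reduce the statement to a leverage-score bound for the column $\xcol$ within the submatrix $A_S=A_{\bullet S}\in\R^{k\times(m+1)}$. Once we condition on $\xcol\neq0$ and $\xcol\in\im(\Bmat)$, \Cref{lem:variational} gives $\Vinfo(I_m/m)=1/(m\|z_*\|_2^2)$. So we must show that, for most choices of $h$, either $\xcol=0$ or the minimum-norm solution $z_*$ of $\Bmat z=\xcol$ satisfies $\|z_*\|_2^2\le 2k/(\omega m)$. Finiteness of $\Vinfo$ is the same as $\xcol\in\im(\Bmat)$, that is, $z_*$ exists.

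The key identity expresses $\|z_*\|_2^2$ through the leverage score of column $h$ in $A_S$. Let $P=A_S^\top(A_SA_S^\top)^\dagger A_S$ be the orthogonal projector onto the row space of $A_S$, and let $\tau_h=P_{hh}=e_h^\top Pe_h$ be the leverage score of column $h$. Then $\sum_{h\in S}\tau_h=\rank(A_S)\le k$.

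Suppose $\tau_h<1$. Then $e_h$ is not in the row space. Hence $\xcol\in\im(\Bmat)$, since otherwise some $y$ has $y^\top\Bmat=0$ and $y^\top\xcol\neq0$, which puts $e_h$ in the row space and gives $\tau_h=1$.

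Next, $\|z_*\|_2^2=\tau_h/(1-\tau_h)$. One way to see this is to consider the minimum-norm $x\in\R^{m+1}$ with $A_Sx=0$ and $x_h=-1$; any such $x$ has the form $(-1,z)$ with $\Bmat z=\xcol$. The minimum of $\|x\|^2$ under $x\in\ker A_S$ and $e_h^\top x=-1$ equals $1/\|(I-P)e_h\|^2=1/(1-\tau_h)$. Subtracting $1$ gives $\|z\|^2=\tau_h/(1-\tau_h)$. When $\xcol=0$ we have $\tau_h=0$, which is consistent with this formula.

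Now we apply Markov's inequality over the uniform choice of $h$. The average of $\tau_h$ is at most $k/(m+1)\le\omega/2$, so $\Pr[\tau_h>1/2]\le 2k/(m+1)\le\omega$.

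On the complementary event $\tau_h\le1/2$, we get $\|z_*\|^2\le 2\tau_h$. That alone is not enough, because we need a bound on $\tau_h$ comparable to $k/m$, not a constant. So we instead apply Markov at the level $\tau_h\le \tau:=k/((m+1)\omega)$: the event $\tau_h>\tau$ has probability at most $\omega$. This $\tau$ is at most $1/2$ by the hypothesis $m+1\ge2k/\omega$. On the good event, $\|z_*\|^2\le\tau/(1-\tau)\le 2\tau$, so
\[
\Vinfo(I_m/m)=\frac{1}{m\|z_*\|^2}\ge\frac{(m+1)\omega}{2km}\ge\frac{\omega}{2k},
\]
and the bound is finite. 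This covers every $h$ with $\xcol\ne0$. The "in particular" clause follows by taking $\omega=1/20$, since $m+1\ge 40k+1\ge 2k/\omega=40k$.

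The main step requiring care is the leverage identity $\|z_*\|^2=\tau_h/(1-\tau_h)$, including the degenerate cases. I would verify it by the Lagrangian or projection argument above and check separately that $\tau_h<1$ exactly characterizes $\xcol\in\im(\Bmat)$.
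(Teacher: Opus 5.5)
Your proof is correct and is essentially the paper's argument: the identity $\|z_*\|_2^2=\tau_h/(1-\tau_h)$ via the projection of $e_h$ onto $\ker A_{\bullet S}$, then $\sum_h\tau_h=\rank(A_{\bullet S})\le k$ with a Markov cut at $k/(\omega(m+1))\le1/2$, then \Cref{lem:variational} to pass to $\Vinfo(I_m/m)$. One harmless slip: finiteness of $\Vinfo(I_m/m)$ corresponds to $\xcol\neq0$ (if $\xcol\notin\im(\Bmat)$ then $\Vinfo=0$), whereas $\xcol\in\im(\Bmat)$ is what gives the positive lower bound; your final computation already uses both conditions correctly.
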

If $\xcol=0$, the sketch is independent of the signal, $\Vinfo(\Sig)=+\infty$
for every $\Sig$, and every query law is already universally hard for
the signal-gap task. In the covariance-update regime, admissibility
transfers this signal-gap hardness to the norm tasks
(\Cref{lem:norm-to-gap}).
In the nonzero case, the successful event also ensures
$\xcol\in\im(\Bmat)$.  This is a guarantee over the random choice of $h$;
the attacker need not test it.

The proof uses the following identity, a standard consequence of the
Sherman--Morrison formula \citep[Eq.~(2)]{Hager89} applied on $\im(C)$.
We include a direct projector proof that also covers rank-deficient $C$.

\begin{lemma}[Leverage-score representation identity]
\label{lem:leverage-representation}
Let $C\in\R^{k\times s}$ have columns $a_1,\ldots,a_s$, and let
\[
  \tau_j=a_j^\top(CC^\top)^\dagger a_j.
\]
Let $C_{-j}$ denote the matrix obtained from $C$ by deleting its $j$th
column.  If $\tau_j<1$, then
\begin{equation}\label{eq:beta-leverage-bound}
  \min\{\|z\|_2^2:C_{-j}z=a_j\}
  = \frac{\tau_j}{1-\tau_j}.
\end{equation}
Consequently, if $\omega s>k$, all but an $\omega$ fraction of the columns
satisfy $a_j\in\im(C_{-j})$ and
\[
  \min\{\|z\|_2^2:C_{-j}z=a_j\}
  \le \frac{k}{\omega s-k}.
\]
\end{lemma}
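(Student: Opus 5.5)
The plan is to reduce everything to the orthogonal projector $P=C^\top(CC^\top)^\dagger C$ onto the row space of $C$, whose diagonal entries are exactly the leverage scores: $P_{jj}=e_j^\top C^\top(CC^\top)^\dagger Ce_j=a_j^\top(CC^\top)^\dagger a_j=\tau_j$. The identity \eqref{eq:beta-leverage-bound} will come from a single geometric description. Vectors $z\in\R^{s-1}$ with $C_{-j}z=a_j$ correspond bijectively to vectors $x=e_j-\iota z\in\ker C$, where $\iota$ inserts a zero in coordinate $j$. These are precisely the vectors $x\in\ker C$ with $x_j=1$, and $\|z\|_2^2=\|x\|_2^2-1$. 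So the problem becomes minimizing $\|x\|_2^2$ over the affine slice $\{x\in\ker C: e_j^\top x=1\}$.

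Writing $Q=I-P$ for the projector onto $\ker C$, the constraint is $\langle Qe_j,x\rangle=1$ with $x\in\im Q$. By Cauchy--Schwarz the slice is nonempty exactly when $Qe_j\neq0$, and the minimizer is then $x=Qe_j/\|Qe_j\|_2^2$ with $\|x\|_2^2=1/\|Qe_j\|_2^2$. Since $Q$ is an orthogonal projector, $\|Qe_j\|_2^2=Q_{jj}=1-\tau_j$, which is positive when $\tau_j<1$. Therefore
\[
 \min\{\|z\|_2^2:C_{-j}z=a_j\}=\frac1{1-\tau_j}-1=\frac{\tau_j}{1-\tau_j}.
\]
Feasibility also gives $a_j\in\im(C_{-j})$. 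This argument needs no invertibility of $CC^\top$, so rank-deficient $C$ is covered. The step that needs the most care is getting the bijection and the zero-insertion bookkeeping right, including the case $a_j=0$, where $\tau_j=0$ and $z=0$.

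For the counting consequence, use $\sum_j\tau_j=\tr P=\rank C\le k$. By Markov's inequality, fewer than an $\omega$ fraction of the $j$ satisfy $\tau_j\ge k/(\omega s)$. This threshold is below $1$ because $\omega s>k$. For every remaining column, $\tau_j<k/(\omega s)<1$. The first part then gives $a_j\in\im(C_{-j})$, and since $t\mapsto t/(1-t)$ is increasing,
\[
 \min\{\|z\|_2^2:C_{-j}z=a_j\}\le\frac{k/(\omega s)}{1-k/(\omega s)}=\frac{k}{\omega s-k}.
\]
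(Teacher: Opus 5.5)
Your proof is correct and is essentially the paper's argument: the same projector $P$ with $P_{jj}=\tau_j$, the same correspondence between representations of $a_j$ by the other columns and kernel vectors with $j$th coordinate one, the same Cauchy--Schwarz step against $(I-P)e_j$ (the paper writes the minimizer explicitly as $e_j-(I-P)e_j/(1-\tau_j)$), and the same count $\sum_j\tau_j=\rank C\le k$. One harmless slip: Markov's inequality gives \emph{at most}, not fewer than, an $\omega$ fraction of columns with $\tau_j\ge k/(\omega s)$, and equality can occur, but ``at most'' is all the statement requires.
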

\begin{proof}
Let $P=C^\top(CC^\top)^\dagger C$ be the orthogonal projector onto the
row space of $C$, so $P_{jj}=\tau_j$.
For $\tau_j<1$, the vector $r=(I-P)e_j$ lies in $\ker(C)$ and satisfies
$r_j=\|r\|_2^2=1-\tau_j$.  Thus
\[
 q=e_j-\frac{r}{1-\tau_j}
 \quad\text{satisfies}\quad
 q_j=0,\qquad Cq=a_j,\qquad
 \|q\|_2^2=\frac{\tau_j}{1-\tau_j}.
\]
Deleting the zero coordinate gives a feasible representation of $a_j$.
To see that it has minimum norm, take any other $q'$ with
$q'_j=0$ and $Cq'=a_j$.  Then $d=e_j-q'$ lies in $\ker(C)$ and has
$d_j=1$.  Since $r$ is the projection of $e_j$ onto this kernel,
\[
 1=\langle d,e_j\rangle=\langle d,r\rangle
 \le\|d\|_2\sqrt{1-\tau_j}.
\]
It follows that
$\|q'\|_2^2=\|d\|_2^2-1\ge\tau_j/(1-\tau_j)$, proving the identity.

Finally, $\sum_j\tau_j=\tr(P)=\rank(C)\le k$.
Hence all but an $\omega$ fraction of the columns have
$\tau_j\le k/(\omega s)<1$, and for these columns
$\tau_j/(1-\tau_j)\le k/(\omega s-k)$.
\end{proof}

\begin{proof}[Proof of \Cref{cor:random}]
Apply \Cref{lem:leverage-representation} to the restricted matrix
$C=A_{\bullet S}$ with $s=m+1$.  With probability at least $1-\omega$,
$\xcol\in\im(\Bmat)$ and
\[
 \|z_*\|_2^2
 \le\frac{k}{\omega(m+1)-k}
 \le\frac{2k}{\omega(m+1)},
\]
where the last inequality uses $\omega(m+1)\ge2k$.
If $\xcol=0$, the first alternative holds.  Otherwise
\Cref{lem:variational} gives
\[
 \Vinfo(I_m/m)=\frac1{m\|z_*\|_2^2}
 \ge\frac{\omega(m+1)}{2km}
 \ge\frac{\omega}{2k}.
\]
\end{proof}

\subsubsection{The hard comparator}
On the nonzero successful initialization event, $\xcol\in\im(\Bmat)\setminus\{0\}$.
Since every covariance in the search space is positive definite,
$0<\Vinfo(\Sig)<\infty$ throughout.  Recall that $z_*$ is the minimum-norm
solution of $\Bmat z=\xcol$, and set
$v_\star=z_*/\|z_*\|_2$ and $\Sig_\star=\Sig_{v_\star}$, where
\begin{equation}\label{eq:universal-spike-family}
 \Sig_v=q_0I_m+(\lambda-q_0)vv^\top,
 \qquad v\in\mathbb S^{m-1}.
\end{equation}
By \Cref{lem:variational}, this spike maximizes the signal-estimation
variance over $\mathcal K_{\lambda}$, with
\begin{equation}\label{eq:comparator-variance}
 \Vinfo(\Sig_\star)=m\lambda\,\Vinfo(I_m/m).
\end{equation}
The comparator is used only in the analysis; the attacker need not
compute it.  The initialization bound gives $\Vinfo(\Sig_\star)\ge\lambda$ when
$m=40k$.  The case $\xcol=0$ will be handled separately.

\paragraph{The eigenvalue-cap benchmark.}
The same benchmark holds without an eigenvalue floor, as needed for
the sufficient-statistic attack.  For $1/m\le\lambda\le1$,
\begin{equation}\label{eq:capped-offline-optimum}
 \max_{\substack{\Sig\succeq0,\ \tr\Sig=1\\
                  \lambda_{\max}(\Sig)\le\lambda}}
       \Vinfo(\Sig)
 =m\lambda\,\Vinfo(I_m/m).
\end{equation}
Indeed, \Cref{lem:variational} gives the upper bound, attained by
$\lambda v_*v_*^\top+(1-\lambda)(I_m-v_*v_*^\top)/(m-1)$.
The general attack uses this unknown spike as a comparator in its
entropy analysis.  The sufficient-statistic attack instead preserves a
constant fraction of this optimum at its attained top eigenvalue.
Norm admissibility additionally requires a bound on $\tr(\Sig^2)$,
which each attack establishes for the covariances it uses.
Neither attack needs to compute $z_*$.

\begin{remark}[Offline optimum under norm admissibility]
\label{rem:offline-exact}
For the same fixed $h,M$ and $1/m\le\rho\le1$,
\[
 \max_{\substack{\Sig\succeq0,\ \tr\Sig=1\\\tr(\Sig^2)\le\rho}}
 \Vinfo(\Sig)
 =\bigl(1+\sqrt{(m-1)(m\rho-1)}\bigr)\Vinfo(I_m/m).
\]
Indeed, $\tr(\Sig^2)\ge\lambda^2+(1-\lambda)^2/(m-1)$ for
$\lambda=\lambda_{\max}(\Sig)$; equality and the variance bound in
\Cref{lem:variational} are attained by a spike along $z_*$ with equal
eigenvalues on its orthogonal complement.
\end{remark}

\subsection{Correlation and validation in an adaptive round}
\label{sec:round-analysis}
\label{sec:gain}

\subsubsection{Correlation and hardness from the gain lemma}

We record the two consequences of \Cref{lem:gain} used in the attack
analyses: a response--noise correlation and a quantitative hardness
threshold. The signal density and the proof of the gain lemma are given
in \Cref{app:gain-proof}.

To express \Cref{lem:gain} in the noise coordinates, define
the response-weighted mean
\begin{equation}\label{eq:first-moment}
 m_\oracle(\Sig)=\E[\oracle(y)u].
\end{equation}

For $0<\Vinfo(\Sig)<\infty$, define the \emph{gain}
\begin{equation}\label{eq:gain-correlation}
 \Gamma_\oracle(\Sig)
 =\E[\oracle(y)(T_\Sigma(y)-w)]
 =c\Vinfo(\Sig)\langle\ahat_\Sig,m_\oracle(\Sig)\rangle.
\end{equation}
The identity follows from \eqref{eq:gls-estimator}.  Since the estimation
error has mean zero, the gain is its covariance with the response.

Dividing the scalar gain bound of \Cref{lem:gain} by
$c\Vinfo(\Sig)$ gives its vector formulation:
\begin{equation}\label{eq:general-gain}
 \loss_\Sig(\oracle)\le\delta_0
 \quad\Longrightarrow\quad
 \langle\ahat_\Sig,m_\oracle(\Sig)\rangle\ge g.
\end{equation}
The bound holds throughout the positive finite variance range.

The gain lemma also gives a quantitative hardness threshold
at the loss level $\delta_0$ used in the attack analysis.

\begin{corollary}[Accuracy bounds the variance]
\label{cor:gain-hardness}
Under the hypotheses of \Cref{lem:gain}, any response rule with
signal-gap loss at most $\delta_0$ satisfies
\begin{equation}\label{eq:gain-hardness-main}
 g^2\Vinfo(\Sig)\le1.
\end{equation}
Consequently, if $g^2\Vinfo(\Sig)>1$, every signal-gap response rule
has loss greater than $\delta_0$.
\end{corollary}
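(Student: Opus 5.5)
The plan is to bound the gain $\Gamma_\oracle(\Sig)=\E[\oracle(y)(T_\Sig(y)-w)]$ from above by Cauchy--Schwarz and combine this with the lower bound of \Cref{lem:gain}. By \eqref{eq:estimator-noise}, the estimation error $X=T_\Sig(y)-w$ is Gaussian with mean zero and variance $c^2\Vinfo(\Sig)$. Since $|\oracle(y)|\le1$, we get
\[
 \Gamma_\oracle(\Sig)\le\E|X|\le\sqrt{\E X^2}=c\sqrt{\Vinfo(\Sig)}.
\]
One could sharpen this with $\E|X|=c\sqrt{2\Vinfo(\Sig)/\pi}$, but the cruder bound suffices for the stated inequality.

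Next, if $\loss_\Sig(\oracle)\le\delta_0$, then \Cref{lem:gain} gives $\Gamma_\oracle(\Sig)\ge c\Vinfo(\Sig)g$. Combining the two bounds yields
\[
 c\Vinfo(\Sig)g\le c\sqrt{\Vinfo(\Sig)}.
\]
Since $0<\Vinfo(\Sig)<\infty$ under the hypotheses, we may divide by $c\sqrt{\Vinfo(\Sig)}>0$. This gives $g\sqrt{\Vinfo(\Sig)}\le1$, and squaring gives \eqref{eq:gain-hardness-main}.

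The second assertion is the contrapositive. Suppose $g^2\Vinfo(\Sig)>1$. Any measurable $\oracle:\R^k\to[-1,1]$ with loss at most $\delta_0$ would satisfy $g^2\Vinfo(\Sig)\le1$ by the first part, which is a contradiction. Hence every such rule has loss greater than $\delta_0$. This covers randomized rules as well, since they average to a $[-1,1]$-valued rule with the same expected loss, by linearity of $\loss$ in $b$.

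There is no real obstacle here. The only point to check is that the finiteness and positivity of $\Vinfo(\Sig)$, which \Cref{lem:gain} already assumes, justify the division step.
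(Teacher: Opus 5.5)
Your proposal is correct and follows the paper's proof: you bound the gain above by $c\sqrt{\Vinfo(\Sig)}$ using $|\oracle|\le1$ and Cauchy--Schwarz, then compare with the lower bound from \Cref{lem:gain}. The second assertion is the contrapositive, exactly as in the paper, and your division step is justified by the positive finite variance assumed in that lemma.
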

\begin{proof}
Since $|\oracle|\le1$, Cauchy--Schwarz and
\eqref{eq:estimator-noise} bound the left side of
\eqref{eq:scalar-gain} by $c\sqrt{\Vinfo(\Sig)}$.
Combining the bounds gives $g^2\Vinfo(\Sig)\le1$.
\end{proof}

\subsubsection{Conditional losses and information sets}
\label{sec:probability-conventions}

The transcript $\mathsf{Tr}_{i-1}$ in \Cref{def:interaction} records
the complete execution of both parties through completed query $i-1$.
It is available to the responder, but its private part is not available
to the attacker.  We distinguish two information sets just before the
fresh draw of query $i$:
\begin{itemize}[leftmargin=1.7em]
\item $\mathcal H_{i-1}$ is generated by the attacker's disclosed
execution and the returned answers, including the current announcement
of $\mathcal D_i$ and all computation and randomness disclosed in step~(i).
\item $\mathcal F_{i-1}$ additionally contains the full transcript
$\mathsf{Tr}_{i-1}$ and the responder's private randomness selecting
its current map $\oracle_i$ in step~(ii).
\end{itemize}
Thus $\mathcal H_{i-1}\subseteq\mathcal F_{i-1}$, the map $\oracle_i$
is $\mathcal F_{i-1}$-measurable, and freshness of the query randomness gives
\[
  v_i\mid\mathcal F_{i-1}\sim\mathcal D_i,
  \qquad v_i\mid\mathcal H_{i-1}\sim\mathcal D_i.
\]
After the answer and disclosure, both histories contain the generated
input, the answer, and the attacker's sampling variables.  The current
distribution announcement is additional to $\mathsf{Tr}_{i-1}$;
the latter always ends at completed query $i-1$.

The realized loss $e_i$ is observed by the attacker.  Its conditional
expectation
\[
  \varepsilon_i:=\E[e_i\mid\mathcal F_{i-1}]
\]
is the \emph{predictable loss} used in the analysis.  Validation relates
the average realized loss in a round to the average predictable loss.

Conditioning on $\mathcal H_{i-1}$ instead averages over the
responder's private history as well as its current randomization.
The sufficient-statistic definition uses precisely this conditional mean;
enlarging its conditioning information to the full transcript would be
a stronger restriction on the responder.  Within a round, we use $j$ in place of $i$ with the same
conventions.

In the interaction, the realized loss is
$e_i=\loss(\oracle_i(y_i),w_i)$.  Conditionally on
$\mathcal F_{i-1}$, both the rule and the covariance are fixed, so
$\varepsilon_i=\loss_{\Sig_t}(\oracle_i)$ for query $i$ in round $t$.

\subsubsection{Drift and validation}

Fix one round with a positive definite trace-one covariance
$\Sig=RR^\top$, and drop the index $t$. For the drift bound assume
$0<\Vinfo(\Sig)<\infty$; the validation lemma does not require this
assumption. Neither lemma requires an eigenvalue floor or cap.
Both lemmas hold for either information set from
\Cref{sec:probability-conventions}. We first write the proofs with
$\mathcal F_{j-1}$ and explain the use of $\mathcal H_{j-1}$ below.
For $j=1,\dots,r$ use the enlarged information set
$\mathcal F_{j-1}$ of \Cref{sec:probability-conventions}, including the
responder's private randomness that selects the current response rule.  Thus
$\oracle_j$ is $\mathcal F_{j-1}$-measurable.  Put
\begin{equation}\label{eq:adaptive-phase-sample}
 X_j=\oracle_j(y_j)\,z_j,
 \qquad \mu_j=\E[X_j\mid\mathcal F_{j-1}],
 \qquad e_j=\loss(\oracle_j(y_j),w_j),
 \qquad \varepsilon_j=\E[e_j\mid\mathcal F_{j-1}],
\end{equation}
so that $\varepsilon_j=\loss_\Sig(\oracle_j)$ is the predictable loss.  The
signed noise sum of the round splits into a predictable drift and a
martingale,
\begin{equation}\label{eq:drift-martingale}
 S_r=\sum_{j=1}^rX_j=M_r+N_r,
 \qquad M_r=\sum_{j=1}^r\mu_j,
 \qquad N_r=\sum_{j=1}^r(X_j-\mu_j).
\end{equation}
The attacker observes $S_r$ and the loss sum $E=\sum_je_j$, but neither
$M_r$ nor the $\varepsilon_j$.  The signal-estimation direction in whitened coordinates is
\[
 h_R=\sqrt{\Vinfo(\Sig)}\,R^\top\ahat_\Sig,
\]
a unit vector, since
$\|h_R\|_2^2=\Vinfo(\Sig)\,\ahat_\Sig^\top\Sig\ahat_\Sig
=\Vinfo(\Sig)\,\xcol^\top K_\Sig^\dagger K_\Sig K_\Sig^\dagger\xcol
=\Vinfo(\Sig)\Info(\Sig)=1$.  Because $u=Rz$, the first moment
\eqref{eq:first-moment} is $m_\oracle(\Sig)=R\,\E[\oracle(y)z]$, so for every
response rule
\begin{equation}\label{eq:whitened-gain}
 \bigl\langle h_R,\E[\oracle(y)z]\bigr\rangle
 =\sqrt{\Vinfo(\Sig)}\,\bigl\langle\ahat_\Sig,m_\oracle(\Sig)\bigr\rangle.
\end{equation}

\begin{lemma}[Drift of an accurate round]\label{lem:adaptive-phase-gain}
Each $\mu_j$ lies in $\im(R^\top\Bmat^\top)$ and has $\|\mu_j\|_2\le1$, and
\begin{equation}\label{eq:adaptive-conditional-covariance}
 \E[(X_j-\mu_j)(X_j-\mu_j)^\top\mid\mathcal F_{j-1}]\preceq I_m.
\end{equation}
If $r^{-1}\sum_j\varepsilon_j\le\delta_0$, then
\begin{equation}\label{eq:adaptive-sum-score}
 \langle h_R,M_r\rangle
 \ge rg\sqrt{\Vinfo(\Sig)}.
\end{equation}
These conclusions hold for every realization of the predictable rules;
the drift bound is asserted on the event that their average loss is at
most $\delta_0$.
\end{lemma}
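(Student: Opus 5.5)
The plan is to condition on $\mathcal F_{j-1}$ throughout. Under this conditioning the rule $\oracle_j$ and the covariance $\Sig=RR^\top$ are fixed, and the fresh draw $(w_j,z_j)$ has law $W\otimes N(0,I_m)$. Every claim is therefore about a single fixed measurable rule $\oracle:\R^k\to[-1,1]$ applied to $y=w\xcol+c\Bmat Rz$.

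\emph{Image property.} Decompose $z=Pz+(I-P)z$, where $P$ is the orthogonal projector onto $\im(R^\top\Bmat^\top)$, which is the row space of $\Bmat R$. Then $y=w\xcol+c\Bmat R Pz$ depends on $z$ only through $Pz$. Gaussianity makes $(I-P)z$ independent of $(w,Pz)$, and it has mean zero. Hence $\E[\oracle(y)(I-P)z]=0$, so $\mu_j=\E[\oracle(y)Pz]\in\im(R^\top\Bmat^\top)$.

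\emph{Norm and covariance bounds.} For any unit $x$, $\langle x,\mu_j\rangle=\E[\oracle(y)\langle x,z\rangle]\le\E|\langle x,z\rangle|\le1$ by Cauchy--Schwarz and $|\oracle|\le1$. Taking $x=\mu_j/\|\mu_j\|_2$ gives $\|\mu_j\|_2\le1$. For the conditional covariance we have
\[
\E[(X_j-\mu_j)(X_j-\mu_j)^\top\mid\mathcal F_{j-1}]\preceq\E[X_jX_j^\top\mid\mathcal F_{j-1}]=\E[\oracle(y)^2zz^\top]\preceq\E[zz^\top]=I_m,
\]
again using $\oracle^2\le1$. All of these bounds hold for every realization of the rule, with no loss assumption.

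\emph{Drift bound.} By \eqref{eq:whitened-gain}, $\langle h_R,\mu_j\rangle=\sqrt{\Vinfo(\Sig)}\,\langle\ahat_\Sig,m_{\oracle_j}(\Sig)\rangle=\Gamma_{\oracle_j}(\Sig)/(c\sqrt{\Vinfo(\Sig)})$ via \eqref{eq:gain-correlation}. The hypothesis bounds only the \emph{average} loss, while \Cref{lem:gain} needs each rule to have loss at most $\delta_0$. This mismatch is the main step. I would close it by convexity. Form the averaged rule $\bar\oracle=r^{-1}\sum_j\oracle_j$, which maps into $[-1,1]$ and is measurable given $\mathcal F$-realizations of the rules. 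The loss $\loss(b,w)$ is affine in $b$, so $\loss_\Sig(\bar\oracle)=r^{-1}\sum_j\varepsilon_j\le\delta_0$. The gain $\Gamma$ is linear in the rule, so $\sum_j\langle h_R,\mu_j\rangle=r\langle h_R,\E[\bar\oracle(y)z]\rangle$. Applying \Cref{lem:gain} in its vector form \eqref{eq:general-gain} to $\bar\oracle$ gives $\langle\ahat_\Sig,m_{\bar\oracle}(\Sig)\rangle\ge g$, and therefore $\langle h_R,M_r\rangle\ge rg\sqrt{\Vinfo(\Sig)}$.

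The only subtle point is that each rule $\oracle_j$ is determined by $\mathcal F_{j-1}$ rather than by a common $\sigma$-field. However, each $\mu_j$ is a deterministic function of the realized rule alone, since the conditional law of the fresh draw does not depend on the history. So the averaging argument can be carried out pointwise over realizations of the rules, exactly as the statement asserts.
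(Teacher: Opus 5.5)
Your proposal is correct and follows the paper's proof essentially step for step. The image property comes from Gaussian conditioning on $\Bmat R z_j$ (your projector decomposition is the same fact, since $\E[z\mid w,\Bmat Rz]=Pz$), and the norm and covariance bounds come from $|\oracle_j|\le1$. The drift bound comes from applying the vector form of the gain lemma to the averaged rule $\bar\oracle=r^{-1}\sum_j\oracle_j$ for a fixed realization of the rules, using that loss and first moment commute with averaging.
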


\begin{proof}
The query depends on $z_j$ only through $\Bmat Rz_j$, and Gaussian
conditioning gives $\E[z_j\mid\Bmat Rz_j]\in\im(R^\top\Bmat^\top)$; hence so
is $\mu_j=\E[\oracle_j(y_j)\E[z_j\mid w_j,\Bmat Rz_j]\mid\mathcal F_{j-1}]$.
For every unit vector $b$, Cauchy--Schwarz gives
$|\langle b,\mu_j\rangle|=|\E[\oracle_j(y_j)b^\top z_j\mid\mathcal F_{j-1}]|
\le1$, proving the norm bound.  Also,
\[
 \E[(X_j-\mu_j)(X_j-\mu_j)^\top\mid\mathcal F_{j-1}]
 \preceq\E[\oracle_j(y_j)^2z_jz_j^\top\mid\mathcal F_{j-1}]
 \preceq I_m.
\]
For the drift, fix a realization of the rules and let
$\bar\oracle=r^{-1}\sum_j\oracle_j$.  Evaluate these functions on a fresh
generic query from $\mathcal D_\Sig$.  Loss and first moment commute
with averaging, so
$\loss_\Sig(\bar\oracle)=r^{-1}\sum_j\varepsilon_j\le\delta_0$
and $\E[\bar\oracle(y)z]=r^{-1}M_r$. The vector form
\eqref{eq:general-gain} of \Cref{lem:gain} gives
$\langle\ahat_\Sig,m_{\bar\oracle}(\Sig)\rangle\ge g$, and
\eqref{eq:whitened-gain} turns this into \eqref{eq:adaptive-sum-score}.
\end{proof}

\begin{lemma}[Empirical validation]\label{lem:adaptive-phase-validation}
Call the round \emph{accepted} if
$E\le r\delta_0/2$ and \emph{rejected} otherwise.  If
$r\ge32\delta_0^{-2}\log(2/\delta_{\rm rd})$, then with probability at
least $1-\delta_{\rm rd}$ all three of the following hold:
\begin{enumerate}[label=\textup{(\alph*)},leftmargin=2.2em]
\item if the round is accepted, then
\begin{equation}\label{eq:validation-implication}
 \frac1r\sum_j\varepsilon_j<\delta_0 ;
\end{equation}
\item if every response rule has signal-gap loss at least $\delta_0$
at $\Sig$, then the round is rejected;
\item if the round is rejected, then
$r^{-1}\sum_j\varepsilon_j>\delta_0/4$.
\end{enumerate}
\end{lemma}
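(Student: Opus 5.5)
The plan is to apply a martingale concentration bound to the centered losses $e_j-\varepsilon_j$ and then derive (a)--(c) from the single resulting deviation bound. The losses satisfy $e_j\in[0,1]$ and the predictable losses satisfy $\varepsilon_j=\E[e_j\mid\mathcal F_{j-1}]$. Hence $D_j=e_j-\varepsilon_j$ is a martingale difference sequence with respect to $(\mathcal F_j)$, and $|D_j|\le1$. The round size $r$ is fixed at the start of the round and is $\mathcal H_0$-measurable, so it is not a stopping time that depends on the answers.

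The Azuma--Hoeffding inequality then gives, for every $s>0$,
\[
 \Pr\Bigl[\Bigl|\sum_{j=1}^r D_j\Bigr|\ge s\Bigr]\le2\exp\bigl(-s^2/(2r)\bigr).
\]
Take $s=r\delta_0/4$. The failure probability is at most $2\exp(-r\delta_0^2/32)$, and this is at most $\delta_{\rm rd}$ exactly when $r\ge32\delta_0^{-2}\log(2/\delta_{\rm rd})$, which is the stated hypothesis. The argument works on the good event
\[
 \Bigl|\frac1r\sum_j e_j-\frac1r\sum_j\varepsilon_j\Bigr|<\frac{\delta_0}{4}.
\]

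On this event the three claims follow directly.
\begin{itemize}
\item For (a), acceptance means $E/r\le\delta_0/2$. Therefore the average predictable loss is below $\delta_0/2+\delta_0/4<\delta_0$.
\item For (b), suppose every rule has loss at least $\delta_0$ at $\Sig$. Then each $\varepsilon_j=\loss_\Sig(\oracle_j)\ge\delta_0$, because $\oracle_j$ is a fixed measurable rule given $\mathcal F_{j-1}$. Hence $E/r>\delta_0-\delta_0/4>\delta_0/2$, and the round is rejected.
\item For (c), rejection means $E/r>\delta_0/2$. Therefore the average predictable loss exceeds $\delta_0/2-\delta_0/4=\delta_0/4$.
\end{itemize}

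The same argument works with $\mathcal H_{j-1}$ in place of $\mathcal F_{j-1}$. Put $\varepsilon'_j=\E[e_j\mid\mathcal H_{j-1}]$; this is again a bounded martingale-difference setup, now along the filtration $\mathcal H_j$. For (b), the averaged rule $\bar\oracle_j$ also has loss at least $\delta_0$, because the loss is linear in the response. This gives $\varepsilon'_j\ge\delta_0$.

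The only delicate point is measure-theoretic. The identity $\varepsilon_j=\loss_\Sig(\oracle_j)$ requires that $(w_j,z_j)$ be fresh, that is, independent of $\mathcal F_{j-1}$, and \Cref{sec:probability-conventions} provides this. The concentration step itself is routine.
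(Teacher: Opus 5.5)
Your proposal is correct and matches the paper's proof. Like the paper, you apply Azuma--Hoeffding to the bounded martingale differences $e_j-\varepsilon_j$ at deviation $r\delta_0/4$, which gives failure probability $2e^{-r\delta_0^2/32}\le\delta_{\rm rd}$, and then read off (a)--(c) from the two-sided bound; your treatment of the $\mathcal H_{j-1}$ version, using that the loss is affine in the response, also agrees with the paper's remark after the drift and validation lemmas.
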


\begin{proof}
The differences $e_j-\varepsilon_j$ form a martingale difference sequence
bounded by one, so the Azuma--Hoeffding inequality gives
$|r^{-1}\sum_j(e_j-\varepsilon_j)|\le\delta_0/4$ with probability at
least $1-2e^{-r\delta_0^2/32}\ge1-\delta_{\rm rd}$.  On this event an
accepted round has $r^{-1}\sum_j\varepsilon_j\le3\delta_0/4$, which
is~(a); under the hypothesis of~(b) every $\varepsilon_j\ge\delta_0$,
so $E/r\ge3\delta_0/4>\delta_0/2$; and a rejected round has
$r^{-1}\sum_j\varepsilon_j>\delta_0/2-\delta_0/4$, which
is~(c).
\end{proof}

The same statements hold with $\mathcal H_{j-1}$ in place of
$\mathcal F_{j-1}$. To compute the conditional means, replace the
privately selected rule by
$\bar\oracle_j(y)=\E[\oracle_j(y)\mid\mathcal H_{j-1}]$.
Fresh query randomness is independent of this private rule, and both
loss and first moment commute with averaging over it. The drift proof
therefore applies to $\bar\oracle_j$. The conditional covariance bound
still uses the actual bounded answer, and the validation proof uses
the actual realized loss. This is the version used by the
sufficient-statistic attack.

\subsection{Separating measurements}
\label{sec:general-phase}

We now specialize to the general attack, with
$\Sig=RR^\top\in\mathcal K_\lambda$ and $1/m<\lambda\le1/2$.
Thus $q_0=(1-\lambda)/(m-1)\ge1/(2m)$.
Use the drift decomposition of \Cref{sec:round-analysis}, with
conditioning on $\mathcal F_{j-1}$.

The gain lemma controls the component along $h_R$, while the responder
can influence the other components.  The next inequality shows that these
other components cannot eliminate separation from the comparator.  Let $\Sig_\star\in\mathcal K_{\lambda}$ be the
comparator and set $X_\star=R^{-1}\Sig_\star R^{-\top}$.

\begin{lemma}[Comparator separation]\label{lem:comparator-separation}
For every $z\in\im(R^\top\Bmat^\top)$,
\begin{equation}\label{eq:comparator-rayleigh-homogeneous}
 z^\top X_\star z
 \ge \frac{\Vinfo(\Sig_\star)}{\Vinfo(\Sig)}
      \langle h_R,z\rangle^2.
\end{equation}
\end{lemma}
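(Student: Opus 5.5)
The plan is to pull everything back to sketch space, where both sides become quadratic forms in a single vector $x\in\R^k$, and then apply Cauchy--Schwarz in the inner product defined by $K_{\Sig_\star}$.

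Since $z\in\im(R^\top\Bmat^\top)$, we can write $z=R^\top\Bmat^\top x$ for some $x\in\R^k$.

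\textbf{Left side.} Using $X_\star=R^{-1}\Sig_\star R^{-\top}$,
\[
 z^\top X_\star z=x^\top\Bmat RR^{-1}\Sig_\star R^{-\top}R^\top\Bmat^\top x
 =x^\top K_{\Sig_\star}x .
\]

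\textbf{Inner product with $h_R$.} Use $h_R=\sqrt{\Vinfo(\Sig)}R^\top\ahat_\Sig$ and $\ahat_\Sig=\Bmat^\top K_\Sig^\dagger\xcol$. This gives
\[
 \langle h_R,z\rangle=\sqrt{\Vinfo(\Sig)}\,\xcol^\top K_\Sig^\dagger\Bmat\Sig\Bmat^\top x
 =\sqrt{\Vinfo(\Sig)}\,\xcol^\top K_\Sig^\dagger K_\Sig x .
\]
Here $K_\Sig^\dagger K_\Sig$ is the orthogonal projector onto $\im(K_\Sig)$. Because $\Sig$ is positive definite, $\im(K_\Sig)=\im(\Bmat)$. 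We are on the event $0<\Vinfo(\Sig)<\infty$, so $\xcol\in\im(\Bmat)$ and the projector fixes $\xcol$. Hence
\[
 \langle h_R,z\rangle=\sqrt{\Vinfo(\Sig)}\,\xcol^\top x .
\]

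\textbf{Reduction.} Dividing both sides of \eqref{eq:comparator-rayleigh-homogeneous} by $\Vinfo(\Sig)$, the claim becomes
\[
 (\xcol^\top x)^2\le \Info(\Sig_\star)\, x^\top K_{\Sig_\star}x,
 \qquad \Info(\Sig_\star)=\xcol^\top K_{\Sig_\star}^\dagger\xcol .
\]

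\textbf{Cauchy--Schwarz step.} The comparator $\Sig_\star\in\mathcal K_\lambda$ is positive definite, because of the eigenvalue floor. Hence $\im(K_{\Sig_\star})=\im(\Bmat)\ni\xcol$, and therefore $\xcol=K_{\Sig_\star}K_{\Sig_\star}^\dagger\xcol$. Then
\[
 \xcol^\top x=(K_{\Sig_\star}^\dagger\xcol)^\top K_{\Sig_\star}x .
\]
Apply Cauchy--Schwarz for the positive semidefinite form $\langle p,q\rangle_{K}=p^\top K_{\Sig_\star}q$, with $p=K_{\Sig_\star}^\dagger\xcol$ and $q=x$:
\[
 (\xcol^\top x)^2\le \bigl(\xcol^\top K_{\Sig_\star}^\dagger K_{\Sig_\star}K_{\Sig_\star}^\dagger\xcol\bigr)\,\bigl(x^\top K_{\Sig_\star}x\bigr)
 =\Info(\Sig_\star)\,x^\top K_{\Sig_\star}x .
\]
This is the reduced claim.

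\textbf{Main obstacle.} The argument is short. The only delicate point is the image bookkeeping: we need $\xcol\in\im(K_\Sig)$ and $\xcol\in\im(K_{\Sig_\star})$ so that the pseudoinverses act as true inverses on $\xcol$. Both follow from positive definiteness of the two covariances together with the initialization event $\xcol\in\im(\Bmat)\setminus\{0\}$.
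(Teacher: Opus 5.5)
Your proof is correct and follows the paper's argument essentially line for line. You write $z=R^\top\Bmat^\top\xi$, reduce the two sides to $\xi^\top K_{\Sig_\star}\xi$ and $\sqrt{\Vinfo(\Sig)}\,\xi^\top\xcol$, and apply Cauchy--Schwarz in the $K_{\Sig_\star}$ geometry. The only difference is cosmetic: you state Cauchy--Schwarz via the semi-inner product with $K_{\Sig_\star}^\dagger\xcol$ and check $\xcol\in\im(K_{\Sig_\star})$ explicitly, whereas the paper's factorization through $K_{\Sig_\star}^{1/2}K_{\Sig_\star}^{\dagger1/2}$ uses that fact without stating it.
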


\begin{proof}
Write $z=R^\top\Bmat^\top\xi$ for some $\xi\in\R^k$.  Since
$\xcol\in\im(K_\Sig)$,
\[
 \langle h_R,z\rangle
 =\sqrt{\Vinfo(\Sig)}\,\ahat_\Sig^\top\Sig\Bmat^\top\xi
 =\sqrt{\Vinfo(\Sig)}\,\xcol^\top K_\Sig^\dagger K_\Sig\xi
 =\sqrt{\Vinfo(\Sig)}\,\xi^\top\xcol ,
 \qquad
 z^\top X_\star z=\xi^\top K_{\Sig_\star}\xi .
\]
Cauchy--Schwarz gives
\[
 (\xi^\top\xcol)^2
 =\bigl(\xi^\top K_{\Sig_\star}^{1/2}K_{\Sig_\star}^{\dagger1/2}
 \xcol\bigr)^2
 \le\bigl(\xi^\top K_{\Sig_\star}\xi\bigr)\Info(\Sig_\star)
 =\frac{z^\top X_\star z}{\Vinfo(\Sig_\star)},
\]
so $z^\top X_\star z\ge\Vinfo(\Sig_\star)(\xi^\top\xcol)^2
=\bigl(\Vinfo(\Sig_\star)/\Vinfo(\Sig)\bigr)\langle h_R,z\rangle^2$.
\end{proof}

It remains to transfer separation from the conditional means to the
observed sum $S_r=M_r+N_r$. The covariance floor controls the
sampling error in the comparator's quadratic form:
\begin{equation}\label{eq:comparator-trace-bound}
 \tr X_\star
 =\tr(\Sig_\star\Sig^{-1})
 \le\frac{\tr\Sig_\star}{q_0}
 =\frac1{q_0}\le2m.
\end{equation}
This bound lets us handle dependence between $M_r$ and $N_r$
using only martingale second moments.

\begin{lemma}[A measurement from the observed sum]
\label{lem:adaptive-comparator-cut}
Run one round of $r\ge48m$ queries at $\Sig=RR^\top$.
Let $\mathcal F_{\rm start}$ contain the full execution history and
the announced covariance before the round's first query, and let
$X_\star\succeq0$ be determined by this information with
$\tr X_\star\le2m$. Set
\begin{equation}\label{eq:adaptive-observed-direction}
 q=S_r/\|S_r\|_2,
\end{equation}
choosing any unit vector if $S_r=0$.
For every responder there is an event $\mathcal G$ such that, for
every $L\ge4$,
\begin{equation}\label{eq:adaptive-cut-probability}
 \Pr[\mathcal G\mid\mathcal F_{\rm start}]\ge p_{\rm cut}
 \qquad\text{and}\qquad
 q^\top X_\star q\ge c_{\rm cut}L
 \quad\text{on }\mathcal G\cap\{M_r^\top X_\star M_r\ge r^2L\},
\end{equation}
where one may take $p_{\rm cut}=1/2$ and $c_{\rm cut}=1/16$.
\end{lemma}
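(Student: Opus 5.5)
We need to show that $q^\top X_\star q \ge c L$ when $M_r^\top X_\star M_r \ge r^2L$, on a good event $\mathcal G$ whose probability is at least $1/2$. Write $\|\cdot\|_\star$ for the seminorm $x\mapsto\sqrt{x^\top X_\star x}$. Since $q=S_r/\|S_r\|_2$, the target inequality is
\[
 \|S_r\|_\star^2 \ge c_{\rm cut}L\,\|S_r\|_2^2 .
\]
So we need a lower bound on $\|S_r\|_\star$ and an upper bound on $\|S_r\|_2$. We control the martingale part $N_r$ in both norms through second moments.

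\textbf{Second moments of $N_r$.} By \eqref{eq:adaptive-conditional-covariance}, the martingale differences are orthogonal with conditional covariance $\preceq I_m$. Since $X_\star$ is $\mathcal F_{\rm start}$-measurable, this gives
\[
 \E[\|N_r\|_2^2\mid\mathcal F_{\rm start}]\le rm,
 \qquad
 \E[\|N_r\|_\star^2\mid\mathcal F_{\rm start}]\le r\tr X_\star\le 2rm .
\]
By Markov's inequality and a union bound, the event
\[
 \mathcal G=\{\|N_r\|_2^2\le 4rm,\ \|N_r\|_\star^2\le 8rm\}
\]
has conditional probability at least $1-1/4-1/4=1/2$. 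This event does not depend on $L$, as the statement requires.

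\textbf{Bounds on $\mathcal G\cap\{\|M_r\|_\star^2\ge r^2L\}$.} Because $\|\mu_j\|_2\le1$, we have $\|M_r\|_2\le r$. Using $r\ge48m$ and $L\ge4$:
\[
 \|S_r\|_2\le r+2\sqrt{rm}\le r\bigl(1+2/\sqrt{48}\bigr)\le\tfrac43 r,
\]
\[
 \|S_r\|_\star\ge \|M_r\|_\star-\|N_r\|_\star\ge r\sqrt L-\sqrt{8rm}\ge r\sqrt L-r\sqrt{1/6}\ge r\sqrt L\,\bigl(1-\tfrac{1}{2\sqrt6}\bigr)\ge 0.79\,r\sqrt L .
\]
In particular $S_r\neq0$, so $q$ is well defined. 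Combining the two bounds,
\[
 q^\top X_\star q=\frac{\|S_r\|_\star^2}{\|S_r\|_2^2}\ge\frac{0.62\,r^2L}{(16/9)\,r^2}\ge 0.35\,L\ge L/16 .
\]

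\textbf{Main obstacle.} The delicate point is that $M_r$ and $N_r$ are dependent, because the drift rules adapt to earlier noise. A typical direction of $N_r$ could therefore correlate adversarially with $M_r$. The triangle inequality in both seminorms sidesteps this issue: we never need independence, only the second-moment bounds above. These bounds require $X_\star$ to be fixed at the start of the round and $\tr X_\star\le 2m$, which is exactly why the floor bound \eqref{eq:comparator-trace-bound} was recorded.
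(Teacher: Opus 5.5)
Your proof is correct and follows the paper's argument. Both bound the second moments of $N_r$ in the Euclidean norm and the $X_\star$-seminorm, apply Markov's inequality to get an $L$-independent event $\mathcal G$, and finish with the triangle inequality and $\|M_r\|_2\le r$, so no independence between $M_r$ and $N_r$ is needed.

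The only difference is the choice of event. The paper uses $\{\|N_r\|_2^2+N_r^\top X_\star N_r\le r^2\}$, which has probability at least $15/16$ and gives exactly $L/16$. Your two separate thresholds give probability exactly $1/2$ and a larger constant, about $0.35$; strictly, $0.62\cdot 9/16$ falls just below $0.35$, but $0.79^2\approx 0.624$ clears it, and only $L/16$ is needed anyway.
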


The probability is conditional on the information \emph{before}
the round. We do not condition on accuracy or acceptance, which may
be correlated with $\mathcal G$. The implication on the intersection
in \eqref{eq:adaptive-cut-probability} is what the entropy argument uses.

\begin{proof}
All expectations and probabilities are conditional on
$\mathcal F_{\rm start}$. By
\eqref{eq:adaptive-conditional-covariance}, the martingale differences
have conditional covariance at most $I_m$, so
\[
 \E\|N_r\|_2^2\le rm,
 \qquad
 \E[N_r^\top X_\star N_r]\le r\tr X_\star\le2rm.
\]
Define
\[
 \mathcal G=\{\|N_r\|_2^2+N_r^\top X_\star N_r\le r^2\}.
\]
Markov's inequality and $r\ge48m$ give
$\Pr[\mathcal G]\ge1-3m/r\ge15/16\ge p_{\rm cut}$.
Write $\|v\|_\star=(v^\top X_\star v)^{1/2}$.
On $\mathcal G$, both $\|N_r\|_2$ and $\|N_r\|_\star$ are at most $r$.
Since $\|M_r\|_2\le r$ by \Cref{lem:adaptive-phase-gain}, we have
$\|S_r\|_2\le2r$. Whenever $\|M_r\|_\star\ge r\sqrt L$ with $L\ge4$,
the triangle inequality gives
\[
 \|S_r\|_\star\ge r(\sqrt L-1)\ge\tfrac12r\sqrt L.
\]
In particular, $S_r\ne0$ on this intersection, and
\[
 q^\top X_\star q
 =\frac{\|S_r\|_\star^2}{\|S_r\|_2^2}
 \ge\frac{L}{16}.
\]
No independence between $M_r$ and $N_r$ is required.
\end{proof}

To obtain separation from the comparator, we require the
\emph{comparator margin}
\begin{equation}\label{eq:universal-comparator-margin}
 g^2\Vinfo(\Sig_\star)\ge
 C_{\rm marg}:=\frac{8}{p_{\rm cut}c_{\rm cut}}=256,
\end{equation}
where $g=c_gc^2/\alpha$ is defined in \Cref{lem:gain}, and
$p_{\rm cut}=1/2$ and $c_{\rm cut}=1/16$ are the constants
from \Cref{lem:adaptive-comparator-cut}.
By \Cref{cor:gain-hardness}, this margin also makes the comparator
universally hard for the signal-gap task at loss level $\delta_0$.
The attacker need not reach or identify this comparator.

We now express separation in terms of the rank-one measurement
used by the algorithm. For $\Sig=RR^\top$, write
$X_\star=R^{-1}\Sig_\star R^{-\top}$ and, for a unit vector $q$, set
\[
 a=R^{-\top}q,\qquad Z(q;\Sig)=q_0aa^\top,
\]
as in \eqref{eq:matrix-cut-definition}.

\begin{lemma}[Normalized rank-one separator]\label{lem:normalized-matrix-cut}
For every unit $q$ and every $\Sig\in\mathcal K_{\lambda}$,
\begin{equation}\label{eq:matrix-cut-current}
 0\preceq Z(q;\Sig)\preceq I_m,
 \qquad \langle\Sig,Z(q;\Sig)\rangle=q_0,
 \qquad \langle\Sig_\star,Z(q;\Sig)\rangle=q_0\,q^\top X_\star q .
\end{equation}
\end{lemma}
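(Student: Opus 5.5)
The three claims in \eqref{eq:matrix-cut-current} are direct computations from the definitions $a=R^{-\top}q$, $Z=q_0aa^\top$ and $X_\star=R^{-1}\Sig_\star R^{-\top}$. The plan is to verify them in the order: the two inner-product identities first, then the semidefinite bounds. Here $R=\Sig^{1/2}$ is symmetric and invertible, since $\Sig\in\mathcal K_\lambda$ satisfies $\Sig\succeq q_0I_m\succ0$.

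For the identities, use $\langle P,aa^\top\rangle=a^\top Pa$. With $P=\Sig=RR^\top$ we get
\[
 a^\top\Sig a=q^\top R^{-1}RR^\top R^{-\top}q=\|q\|_2^2=1,
\]
so $\langle\Sig,Z\rangle=q_0$. With $P=\Sig_\star$ we get
\[
 a^\top\Sig_\star a=q^\top R^{-1}\Sig_\star R^{-\top}q=q^\top X_\star q,
\]
which gives the third identity after multiplying by $q_0$.

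For the semidefinite bounds, $Z\succeq0$ is immediate because $Z$ is a nonnegative multiple of $aa^\top$. For the upper bound, a rank-one matrix satisfies $q_0aa^\top\preceq I_m$ if and only if $q_0\|a\|_2^2\le1$. Now
\[
 \|a\|_2^2=q^\top R^{-1}R^{-\top}q=q^\top\Sig^{-1}q\le\lambda_{\max}(\Sig^{-1})=1/\lambda_{\min}(\Sig)\le1/q_0,
\]
where the last inequality is exactly the eigenvalue floor $\Sig\succeq q_0I_m$ in the definition \eqref{eq:capped-covariance-body} of $\mathcal K_\lambda$. This gives $Z\preceq I_m$.

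None of these steps is difficult. The only point needing care is that the floor constant in $\mathcal K_\lambda$ coincides with the normalization $q_0=(1-\lambda)/(m-1)$ used in $Z$; this is what makes $Z\preceq I_m$ hold exactly, and that bound is what the subsequent matrix multiplicative-weights analysis requires.
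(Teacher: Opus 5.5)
Your proposal is correct and follows the paper's proof essentially line by line. The paper likewise uses $a^\top\Sig a=\|q\|_2^2=1$, obtains $q_0\|a\|_2^2\le1$ from the eigenvalue floor, and computes $a^\top\Sig_\star a=q^\top X_\star q$. The only difference is minor: the paper reads off $q_0\|a\|_2^2\le a^\top\Sig a=1$ directly from $\Sig\succeq q_0I_m$, which works for any factorization $\Sig=RR^\top$. Your route through $q^\top\Sig^{-1}q$ instead uses that $R=\Sig^{1/2}$ is symmetric, which holds here.
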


\begin{proof}
Since $a^\top\Sig a=q^\top q=1$ and $\Sig\succeq q_0I_m$,
$q_0\|a\|_2^2\le1$.  Hence $q_0aa^\top\preceq I_m$ and
$\langle\Sig,q_0aa^\top\rangle=q_0$.  Finally,
$a^\top\Sig_\star a=q^\top R^{-1}\Sig_\star R^{-\top}q=q^\top X_\star q$.
\end{proof}

Combining the drift, comparator separation, and sampling-error bounds
gives the separating measurement needed for the entropy argument.

\begin{lemma}[Separation from an accurate round]\label{lem:good-phase}
Let the comparator $\Sig_\star$ satisfy the margin
\eqref{eq:universal-comparator-margin}, and let $\mathcal G$ be the event of
\Cref{lem:adaptive-comparator-cut} for the round, with
$L=C_{\rm marg}$ and $X_\star=R^{-1}\Sig_\star R^{-\top}$;
\eqref{eq:comparator-trace-bound} verifies its trace assumption.
On the event $r^{-1}\sum_j\varepsilon_j\le\delta_0$, we have
\begin{equation}\label{eq:adaptive-comparator-lower}
 M_r^\top X_\star M_r
 \ge r^2g^2\Vinfo(\Sig_\star)\ge r^2C_{\rm marg}.
\end{equation}
On the intersection of this event with $\mathcal G$, the measurement direction satisfies
\begin{equation}\label{eq:good-phase}
 q^\top X_\star q\ge\kappa:=c_{\rm cut}C_{\rm marg}=\frac8{p_{\rm cut}}=16.
\end{equation}
\end{lemma}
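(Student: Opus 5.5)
The plan is to chain three earlier results: the drift bound of \Cref{lem:adaptive-phase-gain}, the comparator inequality of \Cref{lem:comparator-separation}, and the sampling-error transfer of \Cref{lem:adaptive-comparator-cut}.

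\emph{Step 1: the quadratic lower bound \eqref{eq:adaptive-comparator-lower}.} On the event $r^{-1}\sum_j\varepsilon_j\le\delta_0$, \Cref{lem:adaptive-phase-gain} gives $\langle h_R,M_r\rangle\ge rg\sqrt{\Vinfo(\Sig)}$. To use \Cref{lem:comparator-separation}, the vector must lie in $\im(R^\top\Bmat^\top)$. The same lemma places each $\mu_j$ in this subspace, so $M_r$, being a sum of the $\mu_j$, lies there too. Applying \eqref{eq:comparator-rayleigh-homogeneous} with $z=M_r$ gives
\[
 M_r^\top X_\star M_r\ge\frac{\Vinfo(\Sig_\star)}{\Vinfo(\Sig)}\,r^2g^2\Vinfo(\Sig)=r^2g^2\Vinfo(\Sig_\star).
\]
The factor $\Vinfo(\Sig)$ cancels. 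The margin \eqref{eq:universal-comparator-margin} then yields the second inequality $r^2g^2\Vinfo(\Sig_\star)\ge r^2C_{\rm marg}$. The only hypotheses needed are $0<\Vinfo(\Sig)<\infty$ and $\xcol\in\im(K_\Sig)$. Both hold on the nonzero initialization event, because $\Sig\in\mathcal K_\lambda$ is positive definite.

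\emph{Step 2: transfer to the observed direction.} Set $L=C_{\rm marg}=256\ge4$. Positive definiteness of $\Sig$ shows $X_\star=R^{-1}\Sig_\star R^{-\top}\succeq0$, and \eqref{eq:comparator-trace-bound} gives $\tr X_\star\le2m$. Both $X_\star$ and $\Sig$ are determined before the round starts, so all hypotheses of \Cref{lem:adaptive-comparator-cut} hold, using the stated round length $r\ge48m$. Step~1 shows that the accuracy event is contained in $\{M_r^\top X_\star M_r\ge r^2L\}$. On its intersection with $\mathcal G$, \eqref{eq:adaptive-cut-probability} therefore gives
\[
 q^\top X_\star q\ge c_{\rm cut}C_{\rm marg}=\tfrac1{16}\cdot256=16=\frac{8}{p_{\rm cut}}.
\]

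\emph{Main obstacle.} The only delicate point is the membership $M_r\in\im(R^\top\Bmat^\top)$. Without it, \Cref{lem:comparator-separation} does not apply, and the responder's freedom in the orthogonal components could break the bound. This is exactly the Gaussian-conditioning statement in \Cref{lem:adaptive-phase-gain}, so I would cite it and note that the subspace is closed under the summation defining $M_r$. Everything else is arithmetic with the constants $p_{\rm cut}=1/2$, $c_{\rm cut}=1/16$, and $C_{\rm marg}=256$.
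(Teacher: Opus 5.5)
Your proposal is correct and follows the paper's proof: \Cref{lem:adaptive-phase-gain} supplies both $M_r\in\im(R^\top\Bmat^\top)$ and the drift bound along $h_R$, \Cref{lem:comparator-separation} at $z=M_r$ gives $M_r^\top X_\star M_r\ge r^2g^2\Vinfo(\Sig_\star)$, the margin gives the second inequality, and this is the hypothesis of \eqref{eq:adaptive-cut-probability} with $L=C_{\rm marg}$. Your extra checks, namely $L\ge4$, the trace bound, and $X_\star$ being fixed before the round, are details the paper leaves implicit.
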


\begin{proof}
By \Cref{lem:adaptive-phase-gain}, $M_r\in\im(R^\top\Bmat^\top)$ and
$\langle h_R,M_r\rangle\ge rg\sqrt{\Vinfo(\Sig)}$, so
\Cref{lem:comparator-separation} at $z=M_r$ gives
$M_r^\top X_\star M_r\ge\Vinfo(\Sig_\star)r^2g^2$, and the margin gives the
second inequality in \eqref{eq:adaptive-comparator-lower}.  This is the
hypothesis of \eqref{eq:adaptive-cut-probability} with $L=C_{\rm marg}$.
\end{proof}

Restoring the round index $t$, every measurement satisfies
\[
  0\preceq Z_t\preceq I_m,\qquad
  \langle\Sig_t,Z_t\rangle=q_0.
\]
On the validation event of \Cref{lem:adaptive-phase-validation},
an accepted round also satisfies
\[
  \langle\Sig_\star,Z_t\rangle\ge16q_0
\]
whenever its sampling event $\mathcal G_t$ occurs.
Conditional on the information before the round,
$\mathcal G_t$ has probability at least $1/2$.
These are the properties used in the entropy argument below.

\subsection{Entropy progress and termination}
\label{sec:general-potential}
\label{sec:general-proof}

We prove the rejection guarantee used in \Cref{sec:end-to-end}.
For any failure probability $0<\delta<1$, use the round parameters
\eqref{eq:general-parameters}.

\begin{proposition}[Rejection within the round budget]
\label{thm:general-main}
Suppose $1/m<\lambda\le1/2$,
$\xcol\in\im(\Bmat)\setminus\{0\}$, and the spike comparator
of \eqref{eq:universal-spike-family} satisfies
$g^2\Vinfo(\Sig_\star)\ge C_{\rm marg}=256$.
Then \Cref{alg:general-responder-discovery}, with the parameters
\eqref{eq:general-parameters}, rejects a round within $T_{\max}$
rounds with probability at least $1-\delta$ against every informed
signal-gap responder.
\end{proposition}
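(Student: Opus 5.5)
We track the potential $\Phi_t=D(\Sig_\star\|\Sig_t)$ along the run and argue on the event that every one of the $T_{\max}$ rounds is accepted. The aim is to show that $\Phi$ would then have to become negative, except on an event of probability at most $\delta$.

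\textbf{Initial value.} First bound $\Phi_0=D(\Sig_\star\|I_m/m)=\log m-H(\Sig_\star)$. The spike has eigenvalue $\lambda$ once and $q_0$ with multiplicity $m-1$. A direct computation gives
\[
\Phi_0=\lambda\log(m\lambda)+(1-\lambda)\log\frac{m(1-\lambda)}{m-1}\le\lambda\log(m\lambda),
\]
since $m(1-\lambda)/(m-1)\le1$ for $\lambda\ge1/m$.

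\textbf{One-step bound.} Next establish the standard matrix multiplicative-weights inequality for one update. For $0\preceq Z\preceq I_m$ and $\eta=1/2$,
\[
\Phi_{t+1}-\Phi_t\le D(\Sig_\star\|\widehat\Sig_{t+1})-D(\Sig_\star\|\Sig_t)
=-\eta\langle\Sig_\star,Z_t\rangle+\log\tr\exp(\log\Sig_t+\eta Z_t).
\]
The first inequality is the generalized Pythagorean inequality for the relative-entropy projection onto the convex set $\mathcal K_\lambda$, which contains $\Sig_\star$. For the log-trace term, apply Golden--Thompson together with $e^{\eta Z}\preceq I+(e^\eta-1)Z$, which gives
\[
\log\tr\exp(\log\Sig_t+\eta Z_t)\le\log\bigl(1+(e^\eta-1)\langle\Sig_t,Z_t\rangle\bigr)\le(e^{1/2}-1)q_0,
\]
using $\langle\Sig_t,Z_t\rangle=q_0$ from \Cref{lem:normalized-matrix-cut}. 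Hence every update satisfies $\Phi_{t+1}-\Phi_t\le0.65q_0$. Whenever $\langle\Sig_\star,Z_t\rangle\ge16q_0$, the update instead gives $\Phi_{t+1}-\Phi_t\le(0.65-8)q_0<-7q_0$.

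\textbf{Counting separating rounds.} Use \Cref{lem:adaptive-phase-validation} with $\delta_{\rm rd}=\delta/(4T_{\max})$, which is allowed by the choice of $r$, and take a union bound over rounds. With probability at least $1-\delta/4$, every accepted round has average predictable loss below $\delta_0$. Then \Cref{lem:good-phase} shows that each accepted round on its event $\mathcal G_t$ yields a separating measurement. The events $\mathcal G_t$ each have conditional probability at least $1/2$ given the history before round $t$, but they are not independent. Handle this with a martingale argument, Azuma on the indicators $\ind{\mathcal G_t}$ minus their conditional probabilities, over the first $T$ rounds. This gives at least $T/4$ occurrences of $\mathcal G_t$ except with probability $e^{-T/8}$. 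With $T=T_{\max}-1\ge16\log(2/\delta)$, that failure probability is at most $\delta/4$.

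\textbf{Conclusion.} On all good events, if every round were accepted, the total change in potential would be at most
\[
T\bigl(\tfrac14(-7)+\tfrac34(0.65)\bigr)q_0<-\tfrac{T}{4}q_0\cdot 5\le -Tq_0 .
\]
Since $T\ge(\lambda/q_0)\log(m\lambda)$, this decrease exceeds $\Phi_0$, contradicting $\Phi\ge0$. So some round is rejected, with total failure probability at most $\delta$.

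\textbf{Main obstacle.} The delicate step is the projection/Pythagorean inequality for the nonstandard set $\mathcal K_\lambda$ with its eigenvalue floor. One must check that $\mathcal K_\lambda$ is convex and closed, which it is, and that relative entropy projections satisfy the Bregman three-point inequality. Both hold because $D$ is the Bregman divergence of the von Neumann entropy restricted to trace-one matrices. The rest is bookkeeping.
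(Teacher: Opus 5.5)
Your proposal is correct and follows the paper's proof essentially step for step. It uses the same initial bound $D(\Sig_\star\|I_m/m)\le\lambda\log(m\lambda)$, the same Pythagorean plus Golden--Thompson one-step inequality, \Cref{lem:good-phase} on accepted rounds, and a concentration bound giving at least $T/4$ occurrences of $\mathcal G_t$ (the paper uses a multiplicative Chernoff bound for adapted indicators where you use Azuma--Hoeffding; either way you must set $\mathcal G_t$ to the whole sample space after the attack stops so the indicators are defined on every run), followed by the contradiction with $D(\Sig_\star\|\Sig_T)\ge0$.

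One small slip: the round length $r$ in \eqref{eq:general-parameters} only licenses $\delta_{\rm rd}=\delta/(2T_{\max})$, not $\delta/(4T_{\max})$, so the validation failure probability is $\delta/2$ rather than $\delta/4$. This is harmless, because your Azuma bound gives $e^{-T/8}\le(\delta/2)^2\le\delta/2$, and the total failure probability is still at most $\delta$.
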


We now show how these measurements control the nonnegative
potential $D(\Sig_\star\|\Sig_t)$.

\begin{lemma}[One multiplicative-weights step]\label{lem:matrix-entropy-progress}
Let $\Sig=RR^\top$ and $\Sig_\star$ lie in $\mathcal K_{\lambda}$, let $q$
be a unit vector, and let $\Sig^+$ be the update along $Z=Z(q;\Sig)$ with
step $\eta>0$, with unprojected iterate $\widehat\Sig^+$.  Then
\begin{equation}\label{eq:entropy-potential-drop}
 D(\Sig_\star\|\Sig^+)
 \le D(\Sig_\star\|\Sig)-\eta q_0\,q^\top X_\star q+(e^\eta-1)q_0 .
\end{equation}
For $\eta=1/2$ the potential therefore increases by less than $\tfrac23q_0$
after any measurement, and decreases by more than $7q_0$ after a measurement with
$q^\top X_\star q\ge\kappa=16$.
\end{lemma}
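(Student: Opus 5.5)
The proof splits into the exponential step and the projection, and tracks $D(\Sig_\star\|\cdot)$ through each.

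\textbf{Exponential step.} Since $\tr\Sig_\star=1$,
\[
 D(\Sig_\star\|\widehat\Sig^+)
 =D(\Sig_\star\|\Sig)-\eta\langle\Sig_\star,Z\rangle+\log\tr\exp(\log\Sig+\eta Z).
\]
By \Cref{lem:normalized-matrix-cut}, the middle term equals $\eta q_0\,q^\top X_\star q$. For the log-partition term, apply the Golden--Thompson inequality:
\[
 \tr\exp(\log\Sig+\eta Z)\le\tr\bigl(\Sig\exp(\eta Z)\bigr).
\]
Because $0\preceq Z\preceq I_m$, convexity of $x\mapsto e^{\eta x}$ on $[0,1]$ gives $\exp(\eta Z)\preceq I+(e^\eta-1)Z$. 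Hence
\[
 \tr\bigl(\Sig\exp(\eta Z)\bigr)\le1+(e^\eta-1)\langle\Sig,Z\rangle=1+(e^\eta-1)q_0,
\]
again using \Cref{lem:normalized-matrix-cut}. Finally $\log(1+x)\le x$ yields \eqref{eq:entropy-potential-drop} for the unprojected iterate $\widehat\Sig^+$.

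\textbf{Projection step.} The set $\mathcal K_\lambda$ is convex and closed, and it lies inside the positive definite trace-one matrices. The map $\Sig'\mapsto D(\Sig'\|\widehat\Sig^+)$ is a Bregman divergence generated by the von Neumann entropy. The generalized Pythagorean inequality for Bregman projections then gives, for the comparator $\Sig_\star\in\mathcal K_\lambda$,
\[
 D(\Sig_\star\|\Sig^+)+D(\Sig^+\|\widehat\Sig^+)\le D(\Sig_\star\|\widehat\Sig^+),
\]
so projecting can only decrease the potential. This is the step I expect to need the most care. One must confirm that the minimizer exists and lies in the positive definite interior where the entropy is differentiable; this holds because the floor $q_0>0$ keeps it there. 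One must also check the first-order optimality condition $\langle\Sig_\star-\Sig^+,\log\Sig^+-\log\widehat\Sig^+\rangle\ge0$ in the matrix setting, either from the standard Bregman projection theory \citep{TsudaRW05} or directly from the explicit eigenvalue-clipping form of the projection.

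\textbf{Numerics for $\eta=1/2$.} After any measurement, $q^\top X_\star q\ge0$, so the increase is at most $(e^{1/2}-1)q_0\approx0.6487\,q_0<\tfrac23q_0$. When $q^\top X_\star q\ge16$, the change is at most $(-8+0.6487)q_0<-7q_0$.
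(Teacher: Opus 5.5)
Your proposal is correct and follows the paper's proof essentially step for step. Both use the log-partition identity for $D(\Sig_\star\|\widehat\Sig^+)$, Golden--Thompson together with $e^{\eta Z}\preceq I_m+(e^\eta-1)Z$, the Pythagorean inequality for the relative-entropy projection onto $\mathcal K_\lambda$, and the same numerics at $\eta=1/2$. The only differences are the order of presentation (the paper treats the projection first) and your explicit note that the floor $q_0>0$ keeps the projection positive definite, which is what justifies the first-order optimality condition the paper invokes.
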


\begin{proof}
Since $\Sig_\star\in\mathcal K_{\lambda}$, the Pythagorean inequality for
the relative-entropy projection \citep[Section~5]{WarmuthKuzmin08} gives
\[
 D(\Sig_\star\|\widehat\Sig^+)
 \ge D(\Sig_\star\|\Sig^+)+D(\Sig^+\|\widehat\Sig^+)
 \ge D(\Sig_\star\|\Sig^+).
\]
The first inequality follows directly from the first-order optimality
condition for \eqref{eq:matrix-mw-projection}.  Writing
$\mathcal Z=\tr\exp(\log\Sig+\eta Z)$ and using $\tr\Sig_\star=1$,
\[
 D(\Sig_\star\|\widehat\Sig^+)
 =D(\Sig_\star\|\Sig)
   -\eta\langle\Sig_\star,Z\rangle+\log\mathcal Z.
\]
Golden--Thompson and $0\preceq Z\preceq I_m$, which gives
$e^{\eta Z}\preceq I_m+(e^\eta-1)Z$, imply
\[
 \mathcal Z
 \le\tr(\Sig e^{\eta Z})
 \le1+(e^\eta-1)\langle\Sig,Z\rangle
 =1+(e^\eta-1)q_0 .
\]
Hence $\log\mathcal Z\le(e^\eta-1)q_0$, and
\Cref{lem:normalized-matrix-cut} gives
$\langle\Sig_\star,Z\rangle=q_0\,q^\top X_\star q$.  For $\eta=1/2$,
$e^\eta-1<0.65$ and $8-0.65>7$.
\end{proof}

The starting point of the potential is small.  For the spike comparator
$\Sig_\star=\Sig_{v_\star}$,
\begin{equation}\label{eq:spike-relative-entropy}
 D(\Sig_\star\|I_m/m)
 =\lambda\log(m\lambda)+(1-\lambda)\log(mq_0)
 \le\lambda\log(m\lambda),
\end{equation}
since $mq_0\le1$ when $\lambda\ge1/m$.

\begin{proof}[Proof of \Cref{thm:general-main}]
Write $T=T_{\max}$ and
$X_{\star,t}=R_t^{-1}\Sig_\star R_t^{-\top}$.
Let $\mathcal E_{\rm val}$ be the event that
\Cref{lem:adaptive-phase-validation} holds in every executed round.
Using $\delta_{\rm rd}=\delta/(2T)$ and taking a union bound gives
$\Pr[\mathcal E_{\rm val}]\ge1-\delta/2$.

For each round $t=0,\ldots,T-1$, let $\mathcal G_t$ be the event supplied
by \Cref{lem:adaptive-comparator-cut}, with
$L=C_{\rm marg}$ and comparator $X_{\star,t}$, whose trace is at most
$2m$ by \eqref{eq:comparator-trace-bound}.
The event $\mathcal G_t$ is defined from the round's sampling error,
whether or not validation accepts the round. After the attack stops,
set all later $\mathcal G_t$ to the whole sample space. Each indicator
then has conditional mean at least $p_{\rm cut}$ given the information
before its round.
The multiplicative Chernoff bound for adapted indicators gives
\[
 \Pr\!\left[\sum_{t=0}^{T-1}\ind{\mathcal G_t}
              <\frac{p_{\rm cut}T}{2}\right]
 \le e^{-p_{\rm cut}T/8}\le\frac\delta2,
\]
where the last inequality uses $T\ge16\log(2/\delta)$.
Denote the complementary event by $\mathcal E_{\rm cut}$.
This bound is unconditional on validation; we intersect the two good
events only after bounding their probabilities.

Suppose all $T$ rounds are accepted on
$\mathcal E_{\rm val}\cap\mathcal E_{\rm cut}$.
Their average predictable losses are below
$\delta_0$.  By \Cref{lem:good-phase},
$\widehat q_t^\top X_{\star,t}\widehat q_t\ge\kappa=16$ whenever $\mathcal G_t$ occurs.
There are at least $p_{\rm cut}T/2$ such rounds.  Summing
\eqref{eq:entropy-potential-drop}, and using nonnegativity of the
Rayleigh quotient in the other rounds, gives
\[
 \begin{aligned}
 0\le D(\Sig_\star\|\Sig_T)
 &\le D(\Sig_\star\|I_m/m)
       -\frac{q_0\kappa}{2}\frac{p_{\rm cut}T}{2}
       +0.65q_0T\\
 &<\lambda\log(m\lambda)-q_0T<0.
 \end{aligned}
\]
Here \eqref{eq:spike-relative-entropy} bounds the initial potential,
$\kappa p_{\rm cut}/4=2$, and
$T>\lambda\log(m\lambda)/q_0$.
This contradiction proves that a round is rejected with probability at
least $1-\delta$.
\end{proof}

\subsection{From signal-gap loss to norm errors}
\label{sec:norm-reduction}

We transfer signal-gap guarantees to norm estimation and binary norm-gap
queries by controlling the fluctuations of the noise norm.  By \eqref{eq:query-norm}, the squared query norm differs from
$w^2+c^2$ by $c^2(\|u\|_2^2-1)$.
Since $\tr\Sig=1$,
\[
 \E\|u\|_2^2=1,
 \qquad
 \Var\!\left(c^2(\|u\|_2^2-1)\right)=2c^4\tr(\Sig^2).
\]
The admissibility bound of \Cref{def:admissible} controls these
fluctuations relative to the signal gap.

\begin{definition}[Norm-admissible covariance]\label{def:admissible}
For $\rho\in[1/m,1]$, a positive semidefinite trace-one covariance
$\Sig$ is \emph{$\rho$-admissible} if $\tr(\Sig^2)\le\rho$.
\end{definition}

The following reduction controls both expected losses and the errors
observed in an interaction.  For fixed $c$ and concentration-failure
probability $\varepsilon_{\rm nc}$, it requires an admissibility bound that
is a sufficiently small multiple of $\alpha^2/c^4$.

\begin{lemma}[Norm-to-gap reduction]\label{lem:norm-to-gap}
There is a constant $c_{\rm err}=c_{\rm err}(c)>0$ and, for every
$\varepsilon_{\rm nc}\in(0,1)$, a constant $c_\rho=c_\rho(\varepsilon_{\rm nc})>0$ such
that the following holds.  Suppose $0<\alpha\le c/20$ and $\Sig$ is
$\rho$-admissible with
\begin{equation}\label{eq:norm-reduction-admissibility}
 \rho\le c_\rho\frac{\alpha^2}{c^4}.
\end{equation}
For a query $v=w e_h+c\,\iota_Mu$ and a norm estimate $\widehat r\ge0$, use the thresholded response $b$ from \eqref{eq:threshold-response} and the events
\[
 \mathcal E:=\bigl\{c^2\bigl|\|u\|_2^2-1\bigr|\le\alpha/2\bigr\},
 \qquad
 \mathcal A:=\bigl\{|\widehat r-\|v\|_2|\le c_{\rm err}\alpha\|v\|_2\bigr\}.
\]
Then the per-query signal-gap loss \eqref{eq:signal-gap-loss-function}
satisfies, pointwise,
\begin{equation}\label{eq:norm-to-gap-pointwise}
 \loss(b,w)\le\ind{\mathcal A^c}+\ind{\mathcal E^c},
 \qquad\text{and}\qquad
 \Pr_{u\sim N(0,\Sig)}[\mathcal E^c]\le\varepsilon_{\rm nc} .
\end{equation}
Consequently, for $v\sim\mathcal D_\Sig$ and any norm estimator
$\widehat r=\widehat r(Av)$, let $\oracle(y)$ denote its thresholded
response rule. Then
\begin{equation}\label{eq:norm-to-gap-loss}
 \loss_\Sig(\oracle)
 \le
 \Pr\!\left[
 |\widehat r(Av)-\|v\|_2|>
 c_{\rm err}\alpha\|v\|_2
 \right]
 +\varepsilon_{\rm nc} ,
\end{equation}
and if every signal-gap response rule has loss at least $\delta$ at $\Sig$,
then every norm estimator has relative error larger than $c_{\rm err}\alpha$
with probability at least $\delta-\varepsilon_{\rm nc}$.
\end{lemma}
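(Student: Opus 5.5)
The plan has two parts: a deterministic pointwise inequality, and a Gaussian concentration bound on $\|u\|_2^2$. Averaging the pointwise inequality over $v\sim\mathcal D_\Sig$ then gives \eqref{eq:norm-to-gap-loss}. The final claim follows by applying \eqref{eq:norm-to-gap-loss} to the thresholded rule of an arbitrary estimator: that rule has loss at least $\delta$, so the estimator fails with probability at least $\delta-\varepsilon_{\rm nc}$.

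\emph{Pointwise bound.} It suffices to show that on $\mathcal A\cap\mathcal E$ the loss $\loss(b,w)$ is zero, since $\loss\le1$ everywhere. On $\mathcal E$, \eqref{eq:query-norm} gives $\|v\|_2^2\in[w^2+c^2-\alpha/2,\;w^2+c^2+\alpha/2]$.

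If $w\le1$, then $\|v\|_2^2\le1+c^2+\alpha/2$. On $\mathcal A$ we have $\widehat r^{\,2}\le(1+c_{\rm err}\alpha)^2\|v\|_2^2$. Using $\|v\|_2^2\le 2$ (since $c\le1$ and $\alpha$ is small) and $c_{\rm err}\le1/16$, the excess over $\|v\|_2^2$ is at most about $3c_{\rm err}\alpha\cdot 2<\alpha/2$. Hence $\widehat r^{\,2}<1+c^2+\alpha$, so $b=-1$ and the loss is zero.

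If $w\ge1+\alpha$, then $w^2\ge1+2\alpha$, so $\|v\|_2^2\ge1+c^2+3\alpha/2$. On $\mathcal A$, $\widehat r^{\,2}\ge(1-c_{\rm err}\alpha)^2\|v\|_2^2\ge\|v\|_2^2-2c_{\rm err}\alpha\|v\|_2^2$. The support of $W$ is bounded (at most $1+O(\alpha/c)\le 2$), so this loss is below $\alpha/2$ for $c_{\rm err}$ small. Thus $\widehat r^{\,2}\ge1+c^2+\alpha$ and $b=+1$. This is where $c_{\rm err}$ may need to depend on $c$, to absorb the bounded norms.

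For the binary norm-gap task the same argument applies with the thresholds $L,U$ of \eqref{eq:norm-gap-thresholds}. On $\mathcal E$, $w\le1$ forces $\|v\|_2\le L$, and $w\ge1+\alpha$ forces $\|v\|_2\ge U$. So a correct gap bit equals the correct signal-gap answer.

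\emph{Concentration.} Write $u=\Sig^{1/2}z$, so that $\|u\|_2^2-1=\sum_i\sigma_i(z_i^2-1)$, where $\sigma_i$ are the eigenvalues of $\Sig$. This has variance $2\tr(\Sig^2)\le2\rho$. Chebyshev gives
\[
\Pr[\mathcal E^c]\le\frac{2c^4\rho}{(\alpha/2)^2}=\frac{8c^4\rho}{\alpha^2}\le8c_\rho,
\]
so $c_\rho=\varepsilon_{\rm nc}/8$ suffices, matching the parameter choice in \Cref{sec:end-to-end}. No sub-exponential tail bound is needed.

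I expect the main obstacle to be only the constant bookkeeping in the pointwise step: checking that the multiplicative slack $c_{\rm err}\alpha$ translates into an additive slack below $\alpha/2$ uniformly over the bounded support of $w$ and over $\mathcal E$. Choosing $c_{\rm err}\le1/16$ together with $\alpha\le c/20$ should make this routine.
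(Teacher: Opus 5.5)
Your proposal is correct and follows the paper's proof. Both use Chebyshev on $\|u\|_2^2$ with $c_\rho=\varepsilon_{\rm nc}/8$, and both show the thresholded answer is correct on $\mathcal A\cap\mathcal E$ by converting the multiplicative error $c_{\rm err}\alpha$ into an additive squared-norm error below the $\alpha/2$ margin, using boundedness of $\|v\|_2$.

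The differences are cosmetic. The paper uses a single two-sided bound $|\widehat r^{\,2}-\|v\|_2^2|\le\alpha/4$ together with $\|v\|_2^2\le(2+c)^2+c^2+c$, which covers general $c$ through $c_{\rm err}(c)$. Your explicit choice $c_{\rm err}\le1/16$ relies on $c\le1$. Also, your bound $\|v\|_2^2\le2$ should read $\le2+\alpha/2$; your margin of roughly $0.38\alpha$ versus $0.5\alpha$ easily absorbs this.
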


\begin{corollary}[Binary norm-gap responses]
\label{cor:norm-gap-to-signal-gap}
Under the hypotheses of \Cref{lem:norm-to-gap}, use the fixed thresholds \eqref{eq:norm-gap-thresholds}.
These thresholds are independent of $\Sig$ and satisfy
$U/L=1+\Theta(\alpha)$ for fixed $c$.
For every binary answer $b\in\{-1,+1\}$, the signal-gap loss satisfies
\begin{equation}\label{eq:norm-gap-to-signal-gap}
 \loss(b,w)\le e_{\rm ng}(b,v)+\ind{\mathcal E^c},
 \qquad \Pr[\mathcal E^c]\le\varepsilon_{\rm nc},
\end{equation}
where $\mathcal E$ is the noise-concentration event in
\Cref{lem:norm-to-gap} and $e_{\rm ng}$ is defined using these thresholds.
Consequently, signal-gap loss at least $\delta$ for every response rule
implies norm-gap error probability at least $\delta-\varepsilon_{\rm nc}$
for every binary responder.
\end{corollary}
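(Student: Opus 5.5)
The plan is to derive \Cref{cor:norm-gap-to-signal-gap} directly from the pointwise structure behind \Cref{lem:norm-to-gap}, with the norm estimate replaced by the binary answer itself. Write $s=w^2+c^2$ and note $\|v\|_2^2=s+c^2(\|u\|_2^2-1)$ by \eqref{eq:query-norm}. On $\mathcal E$ this gives $|\|v\|_2^2-s|\le\alpha/2$. The concentration bound $\Pr[\mathcal E^c]\le\varepsilon_{\rm nc}$ is exactly the one in \eqref{eq:norm-to-gap-pointwise}. It depends only on $\Sig$ and the admissibility hypothesis \eqref{eq:norm-reduction-admissibility}, not on the response, so it can be quoted unchanged.

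For the pointwise inequality, it suffices to work on $\mathcal E$; on $\mathcal E^c$ the bound is trivial because $\loss\le1$. I will split by the signal regime, since $\loss(b,w)=0$ whenever $1<w<1+\alpha$.
\begin{itemize}
\item If $w\le1$, then $w^2\le1$, so $\|v\|_2^2\le1+c^2+\alpha/2=L^2$. The signal-gap loss is positive only when $b=+1$, and then $e_{\rm ng}(b,v)=\ind{b=+1,\|v\|_2\le L}=1$.
\item If $w\ge1+\alpha$, then $w^2\ge1+2\alpha$, so $\|v\|_2^2\ge1+c^2+2\alpha-\alpha/2=U^2$. The loss is positive only when $b=-1$, which is again counted by $e_{\rm ng}$.
\end{itemize}
Since $b\in\{-1,+1\}$, in each case $\loss(b,w)=\frac{1\pm b}{2}\le e_{\rm ng}(b,v)$. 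This proves \eqref{eq:norm-gap-to-signal-gap}.

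For the threshold claims, $L$ and $U$ in \eqref{eq:norm-gap-thresholds} involve only $c$ and $\alpha$, so they do not depend on $\Sig$. Moreover,
\[
 U^2/L^2=1+\alpha/(1+c^2+\alpha/2).
\]
For fixed $c$ and $\alpha\le c/20\le1$, this ratio is $1+\Theta(\alpha)$, so $U/L=1+\Theta(\alpha)$.

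For the consequence, fix any binary responder and view it as a signal-gap rule $\oracle$ with values in $\{-1,1\}$. Taking expectations in the pointwise bound gives
\[
 \loss_\Sig(\oracle)\le\Pr[e_{\rm ng}=1]+\varepsilon_{\rm nc}.
\]
If every rule has loss at least $\delta$, then the norm-gap error probability is at least $\delta-\varepsilon_{\rm nc}$.

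The only step needing care is the upper case: it requires $w\ge1+\alpha$ to give $w^2\ge1+2\alpha$, which holds because $w$ is positive. Beyond that, the argument is bookkeeping once the fluctuation event is separated out.
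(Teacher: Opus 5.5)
Your proposal is correct and follows essentially the same route as the paper. On $\mathcal E$ the low regime forces $\|v\|_2\le L$ and the high regime forces $\|v\|_2\ge U$, so a correct gap answer is a correct signal-gap answer. The concentration bound is quoted from \Cref{lem:norm-to-gap}, and taking expectations gives the hardness transfer; the paper writes $U/L-1=\alpha/(L(U+L))$ where you compute $U^2/L^2$, which is equivalent.

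One small correction to your closing remark: positivity of $w$ matters in the lower case, not the upper one. There $w\le1\Rightarrow w^2\le1$ uses $w\ge1/2$, which follows from $\alpha\le c/20$, whereas $w\ge1+\alpha$ is automatically positive.
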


\begin{corollary}[Empirical transfer to norm errors]
\label{cor:empirical-norm-transfer}
Fix $0<\alpha\le c/20$, either norm task, and $0<\varepsilon<1$.
Use the losses and response conversion of
\Cref{lem:norm-to-gap,cor:norm-gap-to-signal-gap}.
Consider at most $N$ adaptive rounds, with queries drawn from
$\mathcal D_{\Sig_t}$ as in \texttt{Round}. Each covariance and round
length are fixed before that round's queries are drawn. Suppose
\[
 r_t\ge\frac{32}{\delta_0^2}\log\frac{N}{\varepsilon}
\]
in every round. Set $\varepsilon_{\rm nc}=\delta_0/8$ in
\Cref{lem:norm-to-gap}. With probability at least $1-\varepsilon$,
simultaneously in every executed round whose covariance satisfies
\eqref{eq:norm-reduction-admissibility},
\begin{equation}\label{eq:empirical-norm-transfer}
 \frac1{r_t}\sum_{j=1}^{r_t}e^{\rm norm}_{t,j}
 \ge \frac{E_t}{r_t}-\frac{\delta_0}{4},
\end{equation}
where $e^{\rm norm}_{t,j}$ is the loss for the chosen norm task and
$E_t$ is the realized signal-gap loss sum. In particular, a rejected
admissible round contains more than a $\delta_0/4$ fraction of
norm-task errors.
\end{corollary}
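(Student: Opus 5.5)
The plan is a standard Azuma--Hoeffding argument over a filtration, with a union bound over rounds, following the proof of \Cref{lem:adaptive-phase-validation}. Within round $t$, index queries by $j$ and use the enlarged information sets $\mathcal F_{j-1}$. Because the covariance $\Sig_t$ and the length $r_t$ are fixed before the round's queries are drawn, admissibility of $\Sig_t$ is $\mathcal F_{\rm start}$-measurable. We may therefore restrict attention to admissible rounds by multiplying every increment by the predictable indicator of admissibility.

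For a query in an admissible round, \Cref{lem:norm-to-gap} (estimation) or \Cref{cor:norm-gap-to-signal-gap} (norm gap) gives the pointwise bound
\[
 \loss(b_{t,j},w_{t,j})\le e^{\rm norm}_{t,j}+\ind{\mathcal E^c_{t,j}}.
\]
Summing over the round yields
\[
 E_t\le\sum_j e^{\rm norm}_{t,j}+\sum_j\ind{\mathcal E^c_{t,j}}.
\]
It therefore suffices to show that, simultaneously in all admissible rounds, $r_t^{-1}\sum_j\ind{\mathcal E^c_{t,j}}\le\delta_0/4$.

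The event $\mathcal E^c_{t,j}$ depends only on the fresh noise $u_{t,j}$. Conditional on $\mathcal F_{j-1}$, its probability is at most $\varepsilon_{\rm nc}=\delta_0/8$, by the second part of \eqref{eq:norm-to-gap-pointwise}. Hence the differences $\ind{\mathcal E^c_{t,j}}-\Pr[\mathcal E^c_{t,j}\mid\mathcal F_{j-1}]$ form a martingale difference sequence bounded by one. This holds even though the responder adapts, because $\mathcal E$ does not involve the answer at all. Azuma--Hoeffding gives
\[
 \Pr\Bigl[r_t^{-1}\sum_j\ind{\mathcal E^c_{t,j}}>\delta_0/8+\delta_0/8\Bigr]\le e^{-r_t\delta_0^2/128}.
\]
This is a one-sided bound. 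The resulting count bound is $\delta_0/4$, as required.

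The step that needs care is matching the constants to the hypothesis $r_t\ge 32\delta_0^{-2}\log(N/\varepsilon)$. With deviation $\delta_0/8$, Azuma gives only $e^{-r_t\delta_0^2/128}$, which is weaker than the $\varepsilon/N$ we need per round. One remedy is to use a multiplicative Chernoff bound for adapted indicators with conditional mean at most $\delta_0/8$. With mean $\mu\le r_t\delta_0/8$ and target $2\mu$, this gives exponent of order $r_t\delta_0/24$. Since $\delta_0=1/20$, this is at least $r_t\delta_0^2\cdot(20/24)$, which dominates $r_t\delta_0^2/32$ and so yields probability at most $\varepsilon/N$ per round. Alternatively, one could shrink $\varepsilon_{\rm nc}$; I would go with the Chernoff route, since it keeps $\varepsilon_{\rm nc}=\delta_0/8$.

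To finish, apply a union bound over the at most $N$ rounds. Rounds that are not executed are set to trivially good events, exactly as in the proof of \Cref{thm:general-main}. This gives probability at least $1-\varepsilon$ that \eqref{eq:empirical-norm-transfer} holds in every executed admissible round. For a rejected round, $E_t/r_t>\delta_0/2$, and therefore the fraction of norm-task errors exceeds $\delta_0/4$.
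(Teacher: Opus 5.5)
Your proposal is correct and follows the paper's proof: you multiply the concentration-failure indicators (conditional mean at most $\delta_0/8$) by a predictable admissibility indicator, apply a conditional concentration bound in each round, take a union bound over at most $N$ rounds, and sum the pointwise reductions over the round. The only difference is the concentration step, where your detour through multiplicative Chernoff is valid but unnecessary: the paper uses one-sided Azuma--Hoeffding for $[0,1]$-valued adapted indicators (increments lying in a predictable interval of length one), which gives $e^{-2r_t(\delta_0/8)^2}=e^{-r_t\delta_0^2/32}\le\varepsilon/N$ directly, whereas your weaker $e^{-r_t\delta_0^2/128}$ came from the cruder form that only uses $|d_j|\le 1$.
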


\begin{proof}
Let $F_{t,j}=\ind{\mathcal E_{t,j}^c}$ indicate failure of the
noise-concentration event in \Cref{lem:norm-to-gap}, and let $I_t$
indicate that the round's covariance satisfies
\eqref{eq:norm-reduction-admissibility}. The covariance is fixed
before the round, so $I_t$ is known before each query. Even when
admissibility holds only on a later success event, the variables
$I_tF_{t,j}$ always have conditional mean at most $\delta_0/8$.
Conditional Hoeffding gives
\[
 \Pr\!\left[\frac1{r_t}\sum_j I_tF_{t,j}>\frac{\delta_0}{4}
       \ \middle|\ \mathcal F_{\rm start}\right]
 \le e^{-r_t\delta_0^2/32}\le\frac{\varepsilon}{N}.
\]
A union bound over the executed rounds bounds the failure probability
by $\varepsilon$. In each admissible round $I_t=1$, so summing
\eqref{eq:norm-to-gap-pointwise} or
\eqref{eq:norm-gap-to-signal-gap} proves
\eqref{eq:empirical-norm-transfer}. Rejection means
$E_t/r_t>\delta_0/2$, giving the final claim.
\end{proof}

The pointwise reductions also transfer universal hardness by taking
expectations. The empirical corollary applies to both attacks:
the general attack keeps every covariance admissible, whereas the
sufficient-statistic attack establishes admissibility on its success
event. No conditioning on that event is needed in the corollary.

\paragraph{Proofs.}
\begin{proof}[Proof of \Cref{lem:norm-to-gap}]
Since $\Var(\|u\|_2^2)=2\tr(\Sig^2)$, Chebyshev's inequality gives
\[
 \Pr[\mathcal E^c]
 \le\frac{8c^4\tr(\Sig^2)}{\alpha^2}
 \le\frac{8c^4\rho}{\alpha^2}.
\]
We may therefore take
\begin{equation}\label{eq:rho-for-norm-reduction}
 c_\rho(\varepsilon_{\rm nc}):=\frac{\varepsilon_{\rm nc}}8.
\end{equation}
The admissibility bound \eqref{eq:norm-reduction-admissibility} then
ensures $\Pr[\mathcal E^c]\le\varepsilon_{\rm nc}$.
Since $\alpha\le c/20$, \Cref{def:signal-density} gives
$1/2\le w\le1+\alpha+10\alpha/c\le2+c$.
The nominal squared norms $w^2+c^2$ of
the two labeled regimes are separated by $(1+\alpha)^2-1\ge2\alpha$, so on
$\mathcal E$ we have $\|v\|_2^2\le1+c^2+\alpha/2$ whenever $w\le1$ and
$\|v\|_2^2\ge1+c^2+3\alpha/2$ whenever $w\ge1+\alpha$.

Choose $c_{\rm err}=c_{\rm err}(c)>0$ so that
$c_{\rm err}c/20\le1$ and
$3c_{\rm err}((2+c)^2+c^2+c)\le1/4$.
On $\mathcal A$, the first condition gives
$|\widehat r^{\,2}-\|v\|_2^2|\le3c_{\rm err}\alpha\|v\|_2^2$.
On $\mathcal E$, we have $\|v\|_2^2\le(2+c)^2+c^2+c$, using $\alpha\le c$.
The second condition therefore bounds the squared-norm error by
$\alpha/4$ on $\mathcal A\cap\mathcal E$.
Hence $\widehat r^{\,2}\le1+c^2+3\alpha/4$ when $w\le1$ and
$\widehat r^{\,2}\ge1+c^2+5\alpha/4$ when $w\ge1+\alpha$, so the thresholded
response is correct on both labeled regimes and its per-query loss
\eqref{eq:signal-gap-loss-function} vanishes.  Since the loss never exceeds
one, this proves \eqref{eq:norm-to-gap-pointwise}.  Taking expectations gives
\eqref{eq:norm-to-gap-loss}, and rearranging yields the final claim.
\end{proof}

\begin{proof}[Proof of \Cref{cor:norm-gap-to-signal-gap}]
On $\mathcal E$, $w\le1$ implies $\|v\|_2\le L$, whereas
$w\ge1+\alpha$ implies
$\|v\|_2^2\ge(1+\alpha)^2+c^2-\alpha/2\ge U^2$.
Thus a correct norm-gap answer is also a correct signal-gap answer on
$\mathcal E$.  The signal-gap loss is zero between the signal thresholds
and at most one elsewhere, proving \eqref{eq:norm-gap-to-signal-gap}.
Taking expectations proves the hardness claim.  Finally,
$U/L-1=\alpha/(L(U+L))=\Theta(\alpha)$, since $L$ and $U$ are bounded
above and away from zero for fixed $c$ and $0<\alpha\le c/20$.
\end{proof}

\begin{remark}[Accuracy when feasible yields a hard distribution]
\label{rem:forced-error}
Consider either covariance-update attack with
$C_{\min}/\sqrt{k}\le\alpha\le\alpha_0$, and fix either norm task.
Let $R^\star(\Sig)$ be the infimum expected loss on
$\mathcal D_\Sig$ over measurable response rules using the sketch,
even knowing $A$ and $\Sig$.
Let $\overline R_t$ be the responder's average conditional error
probability in round $t$:
\[
 \overline R_t=\frac1{r_t}\sum_{j=1}^{r_t}p_{t,j},
\]
where $p_{t,j}$ conditions on $\mathcal F_{j-1}$ for the general
attack and $\mathcal H_{j-1}$ for the sufficient-statistic attack.

Suppose the responder maintains error at most $\delta_0/8$
whenever error below $\delta_0/16$ is achievable:
\begin{equation}\label{eq:accuracy-when-feasible}
 R^\star(\Sig_t)<\delta_0/16
 \quad\Longrightarrow\quad
 \overline R_t\le\delta_0/8.
\end{equation}
Then, on the success event of the corresponding attack analysis,
the returned distribution satisfies
$R^\star(\Sig_t)\ge\delta_0/16$ and is therefore universally hard
for the chosen norm task.

Indeed, on this event a rejected round has average conditional
signal-gap loss greater than $\delta_0/4$.
Norm admissibility and
\Cref{lem:norm-to-gap,cor:norm-gap-to-signal-gap}
bound this loss by $\overline R_t+\delta_0/8$.
Thus $\overline R_t>\delta_0/8$, and
\eqref{eq:accuracy-when-feasible} implies the claimed hardness.

For $\alpha<C_{\min}/\sqrt{k}$, the isotropic construction of
\Cref{cor:isotropic-small-accuracy} already yields a universally
hard distribution on its success event, without any additional
accuracy condition on the responder.
\end{remark}

\section{Analysis of the sufficient-statistic attack}
\label{sec:sufficient-analysis}

We begin with the parameter choices and proof of
\Cref{thm:sufficient-informal}. The remaining subsections establish
alignment and direction estimation, the covariance invariants, and
termination with the stated query bound. We reuse the geometry,
round lemmas, and norm reduction from \Cref{sec:general-analysis};
the new ingredient is alignment of the mean signed noise with the
signal-estimation direction. The last subsection treats approximate
alignment.

\subsection{Parameters and proof of \texorpdfstring{\Cref{thm:sufficient-informal}}{Theorem 1.2}}
\label{sec:sufficient-parameters}
\label{sec:sufficient-norm}

\paragraph{Parameter choice.}
Use the constants $c,c_\rho,c_{\rm err},C_{\min},\alpha_0$ and
$\varepsilon_{\rm nc}=\delta_0/8$ from \Cref{sec:end-to-end}.
Choose any prescribed set $S\subseteq[n]$ of $40k+1$ coordinates,
choose $h$ uniformly from $S$, and put $M=S\setminus\{h\}$ and
$m=40k$. For $C_{\min}/\sqrt{k}\le\alpha\le\alpha_0$, set
\begin{equation}\label{eq:sufficient-rho-choice}
 \nu_0=\frac1{40k},\qquad c_R=\frac34,\qquad
 C_{\rm cap}=\frac{2}{c_Rc_g^2c^4},\qquad
 \bar\rho=C_{\rm cap}\alpha^2.
\end{equation}
The threshold $\bar\rho$ is fixed throughout the attack. The
constants chosen for the general attack also suffice here: since
$C_\lambda=96C_{\rm cap}$, the corresponding
$\lambda=C_\lambda\alpha^2$ gives
\begin{equation}\label{eq:common-admissibility-constants}
 \bar\rho=\frac{\lambda}{96},\qquad
 \frac1m+14\bar\rho^2
 \le\frac1{m-1}+\lambda^2
 \le c_\rho\frac{\alpha^2}{c^4}.
\end{equation}
Also $\bar\rho\le1/192<1/2$, while
$m\bar\rho\ge40C_{\rm cap}C_{\min}^2>1$. Thus no further change
to $C_{\min}$ or $\alpha_0$ is needed.

With $g=c_gc^2/\alpha$ and failure probability $\delta=0.04$, use
the following round schedule:
\begin{equation}\label{eq:sufficient-phase-length}
 \begin{aligned}
 &\gamma=\frac14,\qquad
 T_{\max}=1+\left\lceil\log_{1+\gamma}(m\bar\rho)\right\rceil,\qquad
 \ell_\delta=\log\frac{6(T_{\max}+1)}\delta,\\
 &\varepsilon_t=\frac1{32}\left(\frac78\right)^t,
 \qquad \underline V_t=(1+\gamma)^t\nu_0,\\
 &r_t=\left\lceil\max\left\{
       \frac{256(m+\ell_\delta)}{\varepsilon_tg^2\underline V_t},
       \frac{32\ell_\delta}{\delta_0^2}
                         \right\}\right\rceil,\\
 &r_{\rm fin}=\left\lceil\max\left\{m,
                 \frac{32\ell_\delta}{\delta_0^2}\right\}\right\rceil.
 \end{aligned}
\end{equation}
All round lengths are fixed in advance.  The angular-error targets
$\varepsilon_t$ sum to at most $1/4$ over the executed rounds, while the variance lower bounds grow
geometrically. The variance lower bound grows faster than the angular-error
target decreases, so their product increases geometrically. Thus the
direction-estimation term in $r_t$ decreases geometrically; the validation
term is fixed.

The rejection guarantee in \Cref{thm:sufficient-summary} requires
the hardness margin
\begin{equation}\label{eq:sufficient-hardness-margin}
 c_Rg^2m\bar\rho\nu_0>1.
\end{equation}
The choices in \eqref{eq:sufficient-rho-choice} make the left side
exactly $2$.

\begin{proof}[Proof of \Cref{thm:sufficient-informal}]
First consider $C_{\min}/\sqrt{k}\le\alpha\le\alpha_0$.

\emph{Initialization and rejection.}
By \Cref{cor:random}, with probability at least $0.95$ over $h$,
either $\xcol=0$ or
$\nu_0\le\Vinfo(I_m/m)<\infty$.
If $\xcol=0$, every signal-gap rule has loss
$p_c=5/(10+c)>\delta_0$, and validation rejects the first round
with probability at least $1-\delta$.
Otherwise, apply \Cref{thm:sufficient-summary} to the
sufficient-statistic answers (after thresholding for norm estimation).
The margin \eqref{eq:sufficient-hardness-margin} holds, so the
attack returns a round with empirical signal-gap loss greater than
$\delta_0/2$ with probability at least $1-\delta$, conditional on
successful initialization. On the same event every covariance
satisfies
$\tr(\Sig_t^2)\le1/m+14\bar\rho^2$ and is therefore admissible
by \eqref{eq:common-admissibility-constants}.
In the zero-signal case the isotropic covariance satisfies the same
bound. The joint rejection and admissibility event has probability
at least $0.95(1-\delta)>0.91$.

\emph{Observed errors and final-distribution hardness.}
The round schedule meets \Cref{cor:empirical-norm-transfer} with
$N=T_{\max}+1$ and $\varepsilon=\delta/6$.
This corollary applies even though admissibility was established
only on the attack's success event. Their intersection has
probability at least
$0.95(1-\delta)-\delta/6>0.9$, and the returned round contains
more than a $\delta_0/4$ fraction of norm-task errors.
If the eigenvalue threshold is reached, \Cref{thm:sufficient-summary}
also gives signal-gap loss greater than $\delta_0$ for every rule
on the final distribution. The expected-loss reductions in
\Cref{lem:norm-to-gap,cor:norm-gap-to-signal-gap} then give norm-task
error probability at least
$\delta_0-\varepsilon_{\rm nc}=7\delta_0/8$ for every responder.

\emph{Resources and remaining accuracy regimes.}
We have $m\bar\rho=\Theta(\alpha^2k)$, so
$T_{\max}=O(\log(2+\alpha^2k))$.
In \eqref{eq:sufficient-query-bound-main}, the direction-estimation
term is $O(\alpha^2k^2)$ and the validation term is
$O(\log k\log\log k)=O(k)$, with fixed constants understood.
Together with \eqref{eq:sufficient-space-bound-main}, this gives
\begin{equation}\label{eq:sufficient-specialized-query}
 \begin{aligned}
 &O(k+\alpha^2k^2)&&\text{queries},\\
 &O(\log(2+\alpha^2k))&&\text{rounds},\\
 &O(k\log(2+\alpha^2k))&&\text{words of working memory},\\
 &\widetilde O(k^2+\alpha^2k^3)&&\text{arithmetic operations}.
 \end{aligned}
\end{equation}
Every round has $\Omega(k)$ queries. Indeed, for $t<T_{\max}$,
$\underline V_t\le(1+\gamma)m\bar\rho\nu_0$, so
$g^2\underline V_t\le2(1+\gamma)/c_R=10/3$ and the direction term
in $r_t$ is $\Omega(m)$. Also $r_{\rm fin}\ge m$.
All queries are supported on $S$.

For $0<\alpha<C_{\min}/\sqrt{k}$ and $\alpha\le\alpha_0$,
\Cref{cor:isotropic-small-accuracy} gives the guarantee against
every responder in one round of $\Theta(k)$ queries; its
distribution is already universally hard on the success event.
Taking the smaller error-fraction constant from the two regimes
gives the stated guarantee. For $\alpha>\alpha_0$, run at
$\alpha_0$, as in the proof of \Cref{thm:norm-operational}; this
preserves all guarantees after adjusting absolute constants.
\end{proof}

\subsection{Alignment and direction estimation}
\label{sec:sufficient-direction}

For the supporting analysis, fix $h,M$ with
$\xcol\in\im(\Bmat)\setminus\{0\}$, so every positive definite
covariance has positive finite signal-estimation variance.
Fix one round, omit its index $t$, and write $R=\Sig^{1/2}$.
Use the shared-information version of \Cref{sec:round-analysis}:
\[
 X_j=\oracle_j(y_j)z_j,\qquad
 \mu_j=\E[X_j\mid\mathcal H_{j-1}],\qquad
 S_r=M_r+N_r,
\]
where $M_r=\sum_j\mu_j$ and $N_r=\sum_j(X_j-\mu_j)$.
Also write $\varepsilon_j=\E[e_j\mid\mathcal H_{j-1}]$ for the
predictable loss; these per-query losses are distinct from the angular
targets $\varepsilon_t$ indexed by round.

Define the unit vector
\[
 h_\Sig=\sqrt{\Vinfo(\Sig)}\,R\ahat_\Sig,
\]
which is $h_R$ from \Cref{sec:round-analysis} for this whitening.

\begin{proposition}[Exact alignment]
\label{prop:sufficient-phasewise-alignment}
For a sufficient-statistic responder, each conditional first moment is
collinear with $h_\Sig$:
\begin{equation}\label{eq:sufficient-conditional-alignment}
 \mu_j=\theta_jh_\Sig
\end{equation}
for an $\mathcal H_{j-1}$-measurable scalar $\theta_j$.
On the event $r^{-1}\sum_j\varepsilon_j\le\delta_0$,
\begin{equation}\label{eq:sufficient-average-gain}
 M_r=\sigma h_\Sig,
 \qquad \sigma=\sum_j\theta_j\ge rg\sqrt{\Vinfo(\Sig)}.
\end{equation}
\end{proposition}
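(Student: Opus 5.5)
The plan is to compute $\mu_j$ by conditioning on the sufficient statistic, using the Gaussian decomposition of the whitened noise $z_j$ along and orthogonal to $h_\Sig$.

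\emph{Step 1: an orthogonal decomposition.} With $u=Rz$ and \eqref{eq:gls-estimator}, the estimation error is
\[
 T_\Sig(y)-w=c\Vinfo(\Sig)\ahat_\Sig^\top Rz=c\sqrt{\Vinfo(\Sig)}\,\langle h_\Sig,z\rangle .
\]
Here $h_\Sig=\sqrt{\Vinfo(\Sig)}R\ahat_\Sig$, which is a unit vector by the computation for $h_R$ in \Cref{sec:round-analysis}, since $R$ is symmetric. Decompose
\[
 z=\langle h_\Sig,z\rangle h_\Sig+z_\perp .
\]
Because $z\sim N(0,I_m)$, the scalar $\langle h_\Sig,z\rangle$ and the vector $z_\perp$ are independent. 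Both are independent of $w$, so $z_\perp$ is independent of the pair $(w,T_\Sig(y))$. The statistic $T_\Sig(y)$ is a function of $w$ and $\langle h_\Sig,z\rangle$ only.

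\emph{Step 2: alignment.} Let $\bar\oracle_j=\E[\oracle_j\mid\mathcal H_{j-1}]$. Fresh query randomness is independent of the responder's private state, so, as noted after \Cref{lem:adaptive-phase-validation},
\[
 \mu_j=\E[\bar\oracle_j(y_j)z_j\mid\mathcal H_{j-1}].
\]
By \Cref{def:phasewise-sufficient}, $\bar\oracle_j(y_j)=\varphi_{t,j}(T_\Sig(y_j))$ almost surely, and $\varphi_{t,j}$ is $\mathcal H_{j-1}$-measurable. Conditionally on $\mathcal H_{j-1}$, the variable $z_\perp$ is independent of $\varphi_{t,j}(T)$ and has mean zero. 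Hence the $z_\perp$ part of $\mu_j$ vanishes, and
\[
 \mu_j=\theta_jh_\Sig,\qquad
 \theta_j=\E\bigl[\varphi_{t,j}(T)\langle h_\Sig,z\rangle\mid\mathcal H_{j-1}\bigr].
\]
This $\theta_j$ is $\mathcal H_{j-1}$-measurable, which proves \eqref{eq:sufficient-conditional-alignment}.

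\emph{Step 3: the gain bound.} Summing gives $M_r=\sigma h_\Sig$ with $\sigma=\sum_j\theta_j$. Since $\|h_\Sig\|_2=1$, we have $\sigma=\langle h_\Sig,M_r\rangle$. \Cref{lem:adaptive-phase-gain}, in its $\mathcal H$-version, applies with $R=\Sig^{1/2}$ (so $h_R=h_\Sig$). On the event $r^{-1}\sum_j\varepsilon_j\le\delta_0$ it gives
\[
 \langle h_\Sig,M_r\rangle\ge rg\sqrt{\Vinfo(\Sig)},
\]
which is \eqref{eq:sufficient-average-gain}.

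\emph{Main obstacle.} The delicate point is measurability and conditioning. The conditional mean of the answer must be taken with respect to $\mathcal H_{j-1}$, not $\mathcal F_{j-1}$, because the restriction in \Cref{def:phasewise-sufficient} holds only for that conditional mean. The independence in Step 2 must also hold given $\mathcal H_{j-1}$; this follows from the fresh draw of $(w_j,z_j)$ independent of $\mathcal H_{j-1}$. Finally, the statement that $\varphi_{t,j}(T_\Sig(y))=\bar\oracle_j(y)$ holds "almost surely" must be checked under the correct sketch law, which is the law of $y$ under $\mathcal D_\Sig$.
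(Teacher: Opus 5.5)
Your proposal is correct and follows the paper's proof essentially step for step. The paper also splits $z=\xi h_\Sig+z_\perp$ with $z_\perp$ independent of $(w,\xi)$, writes $T_\Sig(y)=w+c\sqrt{\Vinfo(\Sig)}\,\xi$ so that $\E[\varphi(T_\Sig(y))z]=\E[\varphi(T_\Sig(y))\xi]\,h_\Sig$, applies this identity to the averaged rule conditionally on $\mathcal H_{j-1}$, and takes the lower bound on $\sigma$ from the $\mathcal H$-version of \Cref{lem:adaptive-phase-gain}. Two details the paper leaves implicit are ones you supply explicitly: the derivation of the error as $c\sqrt{\Vinfo(\Sig)}\langle h_\Sig,z\rangle$ from \eqref{eq:gls-estimator}, and the identification $h_R=h_\Sig$ for the symmetric root $R=\Sig^{1/2}$.
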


\begin{proof}
Decompose the standard Gaussian as $z=\xi h_\Sig+z_\perp$, where
$\xi\sim N(0,1)$ and the centered vector $z_\perp$ is independent of
$(w,\xi)$.  The scalar channel is
\[
 T_\Sig(y)=w+c\sqrt{\Vinfo(\Sig)}\,\xi.
\]
Thus for every bounded measurable $\varphi$,
\begin{equation}\label{eq:sufficient-first-moment}
 \E[\varphi(T_\Sig(y))z]
   =\E[\varphi(T_\Sig(y))\xi]\,h_\Sig.
\end{equation}
Apply this identity to the averaged rule, conditionally on
$\mathcal H_{j-1}$.  The lower bound on $\sigma$ is the drift bound
of \Cref{lem:adaptive-phase-gain}.
\end{proof}

\begin{lemma}[Direction estimation in an adaptive round]
\label{lem:sufficient-adaptive-direction}
For every responder, every integer $r\ge1$, and every $b>0$, with
probability at least $1-e^{-b}$,
\begin{equation}\label{eq:direction-concentration}
 \|N_r\|_2\le8\sqrt{r(m+b)}.
\end{equation}
On this event, if $M_r=\sigma h$ for a unit vector $h$ with
$\sigma\ge rg\sqrt{\Vinfo(\Sig)}$ and
\begin{equation}\label{eq:sufficient-adaptive-sample-bound}
 r\ge\frac{256(m+b)}{\varepsilon g^2\Vinfo(\Sig)},
 \qquad 0<\varepsilon\le1,
\end{equation}
then $S_r\ne0$ and
$\langle S_r/\|S_r\|_2,h\rangle^2\ge1-\varepsilon$.
Both probability and implication are valid conditional on the
information before the round.
\end{lemma}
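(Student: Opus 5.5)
The proof has two parts. First, a vector martingale concentration bound for $N_r$. Second, a deterministic angular perturbation argument on the event where that bound holds.

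For the concentration, the increments $D_j=X_j-\mu_j$ form a martingale difference sequence adapted to $\mathcal H_j$. By \eqref{eq:adaptive-conditional-covariance}, their conditional covariance is at most $I_m$. They are not bounded, though: $\|X_j\|_2\le\|z_j\|_2$ has Gaussian tails, and the centering by $\mu_j$ with $\|\mu_j\|_2\le1$ costs only a constant. I would use a net argument.
\begin{itemize}
\item Fix a $1/2$-net $\mathcal N$ of the unit sphere with $|\mathcal N|\le5^m$. For each $\theta\in\mathcal N$, the scalar martingale $\langle\theta,N_r\rangle$ has increments $\langle\theta,D_j\rangle$.
\item These increments are conditionally sub-Gaussian with an absolute variance proxy. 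The reason is that $|\oracle_j|\le1$, so $|\langle\theta,X_j\rangle|\le|\langle\theta,z_j\rangle|$ with $\langle\theta,z_j\rangle\sim N(0,1)$ independent of $\mathcal H_{j-1}$. Hence $\E[e^{s\langle\theta,D_j\rangle}\mid\mathcal H_{j-1}]\le e^{Cs^2}$ for all $s$, via the standard symmetrization and centering bound.
\item Chaining these moment generating function bounds gives $\Pr[\langle\theta,N_r\rangle>u]\le e^{-u^2/(4Cr)}$.
\item A union bound over $\mathcal N$, together with $\|N_r\|_2\le2\max_{\theta\in\mathcal N}\langle\theta,N_r\rangle$, gives $\|N_r\|_2\le C'\sqrt{r(m+b)}$ with probability at least $1-e^{-b}$.
\end{itemize}
The work in this step is tracking constants so that $C'\le8$. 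Because of the slack, I expect a sub-Gaussian constant of order $1$ and $\log5<2$ to suffice. Everything here is conditional on the pre-round information, since each step uses only conditional laws given $\mathcal H_{j-1}$.

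For the angular bound, work on the event \eqref{eq:direction-concentration} and put $S_r=\sigma h+N_r$. The sample bound \eqref{eq:sufficient-adaptive-sample-bound} together with $\sigma\ge rg\sqrt{\Vinfo(\Sig)}$ gives
\[
\sigma^2\ge r^2g^2\Vinfo(\Sig)\ge\frac{256\,r(m+b)}{\varepsilon}.
\]
Combining this with $\|N_r\|_2\le8\sqrt{r(m+b)}$ gives $\|N_r\|_2^2\le\varepsilon\sigma^2/4$, that is, $\|N_r\|_2\le\tfrac{\sqrt\varepsilon}{2}\sigma\le\sigma/2$.

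Two consequences follow.
\begin{itemize}
\item $S_r\ne0$, because $\langle S_r,h\rangle\ge\sigma-\|N_r\|_2\ge\sigma/2>0$.
\item Split $N_r$ into its component along $h$ and the orthogonal part $N_\perp$. Then
\[
\sin^2\angle(S_r,h)=\frac{\|N_\perp\|_2^2}{\|S_r\|_2^2}\le\frac{\|N_r\|_2^2}{(\sigma-\|N_r\|_2)^2}.
\]
Writing $x=\sqrt\varepsilon/2$, this is at most $x^2/(1-x)^2$. We need $x^2/(1-x)^2\le4x^2=\varepsilon$, which holds because $1-x\ge1/2$.
\end{itemize}
Hence $\langle S_r/\|S_r\|_2,h\rangle^2\ge1-\varepsilon$.

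I expect the main obstacle to be the first step: obtaining the explicit constant $8$ for a vector martingale whose increments are unbounded and adaptively chosen. If the net argument's constants fall short, my fallback is a Freedman-type inequality applied to truncated increments, with the Gaussian tail of $\max_j\|z_j\|_2$ absorbed into $b$.
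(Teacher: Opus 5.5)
Your proposal is correct and follows the paper's proof essentially step for step: a $1/2$-net of size at most $5^m$, a conditional sub-Gaussian bound for each $\langle\theta,X_j\rangle$ obtained by symmetrization from $|\langle\theta,X_j\rangle|\le|\langle\theta,z_j\rangle|$, chaining over the round, a union bound, and then the same deterministic perpendicular-component bound for the angle. Symmetrization (Jensen to an independent copy, then convexity of $\cosh$) gives exactly the bound $e^{2s^2}$, i.e.\ $C=2$; taking $u=4\sqrt{r(m+b)}$ then gives failure probability at most $5^m e^{-2(m+b)}\le e^{-b}$, so the constant $8$ comes out directly and your truncation fallback is not needed.
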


\begin{proof}
Fix a unit vector $q$.  Given $\mathcal H_{j-1}$, let
$Y=q^\top X_j$ and let $Y'$ be an independent copy of its conditional
law.  Since $|\oracle_j|\le1$, we have $|Y|\le|q^\top z_j|$, with
$q^\top z_j\sim N(0,1)$.  For every real $\lambda$, symmetrization
and convexity of $\cosh$ give
\[
 \begin{aligned}
 \E e^{\lambda(Y-\E Y)}
 &\le\E e^{\lambda(Y-Y')}
  =\E\cosh(\lambda(Y-Y'))\\
 &\le\tfrac12\E[\cosh(2\lambda Y)+\cosh(2\lambda Y')]
  \le e^{2\lambda^2}.
 \end{aligned}
\]
All expectations in this display are conditional.  Iterating the
bound over the round gives
$\Pr[q^\top N_r\ge s]\le e^{-s^2/(8r)}$ for every $s>0$.
Take a $1/2$-net of the unit sphere of size at most $5^m$ and use
$\|N_r\|_2\le2\max_q q^\top N_r$ over the net.  With
$s=4\sqrt{r(m+b)}$, the union bound gives a failure probability at
most $5^me^{-2(m+b)}\le e^{-b}$, proving
\eqref{eq:direction-concentration} without a lower bound on $r$.

Under \eqref{eq:sufficient-adaptive-sample-bound},
$\|N_r\|_2\le\sqrt\varepsilon\,\sigma/2$.
The component of $S_r$ along $h$ is at least
$\sigma(1-\sqrt\varepsilon/2)>0$, and its perpendicular component has
norm at most $\sqrt\varepsilon\,\sigma/2$.  Hence
\[
 1-\left\langle\frac{S_r}{\|S_r\|_2},h\right\rangle^2
 \le\frac{\varepsilon/4}{(1-\sqrt\varepsilon/2)^2}
 \le\varepsilon.
\]
\end{proof}

\subsection{Variance growth and covariance invariants}
\label{sec:sufficient-step}

We first bound the effect of one update using the variational formula
from \Cref{lem:variational}, then combine the bounds along the covariance
sequence.

\begin{lemma}[One covariance update]
\label{lem:natural-update-identity}
Let $\Sig\succ0$ have trace one, let $\widehat h$ be a unit vector, and
put $\Sig^+=\mathcal N(\Sig,\widehat h)$,
\[
 s=\widehat h^\top\Sig\widehat h,\qquad
 \chi=\langle\widehat h,h_\Sig\rangle^2.
\]
Then
\begin{equation}\label{eq:sufficient-step-bounds}
 \frac{\Vinfo(\Sig^+)}{\Vinfo(\Sig)}
 \ge\frac{2}{(1+s)(2-\chi)},\qquad
 \lambda_{\max}(\Sig^+)
 \le\frac{2\lambda_{\max}(\Sig)}{1+s}.
\end{equation}
\end{lemma}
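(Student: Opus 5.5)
The plan is to handle the eigenvalue bound directly and to obtain the variance bound from the variational formula of \Cref{lem:variational}, after first rescaling away the trace normalization. Put $p=\Sig^{1/2}\widehat h$, so $\|p\|_2^2=s$ and $\Sig^+=(\Sig+pp^\top)/(1+s)$. Two facts about scaling will be used throughout: $\Info$ is homogeneous of degree $-1$ in $\Sig$, and $\Vinfo$ is homogeneous of degree $+1$. Hence
\[
 \Vinfo(\Sig^+)=\frac{\Vinfo(\Sig+pp^\top)}{1+s},
\]
and it suffices to show
\[
 \Info(\Sig+pp^\top)\le\Bigl(1-\frac{\chi}{2}\Bigr)\Info(\Sig).
\]

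\textbf{Eigenvalue bound.} This part is immediate. We have $pp^\top=\Sig^{1/2}\widehat h\widehat h^\top\Sig^{1/2}\preceq\Sig$, because $\widehat h\widehat h^\top\preceq I$. Therefore $\lambda_{\max}(\Sig+pp^\top)\le2\lambda_{\max}(\Sig)$, and dividing by $1+s$ gives the second inequality in \eqref{eq:sufficient-step-bounds}.

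\textbf{Variance bound.} I would work in whitened coordinates. Write $\Sig+pp^\top=R(I+\widehat h\widehat h^\top)R^\top$ with $R=\Sig^{1/2}$, and set $z=Rq$ in \eqref{eq:variational-info}. This gives
\[
 \Info(\Sig+pp^\top)=\min\bigl\{q^\top(I+\widehat h\widehat h^\top)^{-1}q:\ \Bmat Rq=\xcol\bigr\},
\]
where $\Sig$ is positive definite so the image constraint is vacuous. By Sherman--Morrison, $(I+\widehat h\widehat h^\top)^{-1}=I-\tfrac12\widehat h\widehat h^\top$. For the unperturbed problem, the minimizer of $\|q\|_2^2$ subject to $\Bmat Rq=\xcol$ is the minimum-norm solution $q_0$. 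I expect it to be
\[
 q_0=R\,\Bmat^\top K_\Sig^{-1}\xcol\ \text{(pseudoinverse as needed)}=R\ahat_\Sig .
\]
Its squared norm is $\ahat_\Sig^\top\Sig\ahat_\Sig=\Info(\Sig)$, as computed for $h_R$. So $q_0=\sqrt{\Info(\Sig)}\,h_\Sig$; this uses the symmetric root $R$, so that $h_R=h_\Sig$.

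Plugging $q_0$ in as a feasible point gives an upper bound on the minimum:
\[
 \Info(\Sig+pp^\top)\le\|q_0\|_2^2-\tfrac12\langle\widehat h,q_0\rangle^2=\Info(\Sig)\Bigl(1-\frac{\chi}{2}\Bigr).
\]
Taking reciprocals and dividing by $1+s$ yields
\[
 \frac{\Vinfo(\Sig^+)}{\Vinfo(\Sig)}\ge\frac{2}{(1+s)(2-\chi)}.
\]

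The step needing the most care is identifying the minimum-norm whitened representation with $\sqrt{\Info(\Sig)}\,h_\Sig$, i.e., checking $\Bmat R(R\ahat_\Sig)=K_\Sig K_\Sig^\dagger\xcol=\xcol$. This holds because $\xcol\in\im(K_\Sig)$, which is guaranteed by $\xcol\in\im(\Bmat)$ and $\Sig\succ0$. The substitution $z=Rq$ in the variational formula also needs to be stated carefully. The remaining steps are routine.
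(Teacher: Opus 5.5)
Your proposal is correct and is essentially the paper's proof. Both evaluate the updated variational problem at the old optimizer $z_\Sig=\Sig\ahat_\Sig$, which is your whitened point $q_0=R\ahat_\Sig=R^{-1}z_\Sig=\sqrt{\Info(\Sig)}\,h_\Sig$. Both invert the rank-one perturbation via $(I_m+\widehat h\widehat h^\top)^{-1}=I_m-\tfrac12\widehat h\widehat h^\top$ and obtain the eigenvalue bound from $pp^\top\preceq\Sig$. Factoring out $1+s$ by homogeneity and substituting $z=Rq$ first is only a cosmetic reorganization of the paper's direct formula for $(\Sig^+)^{-1}$.
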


\begin{proof}
Write $R=\Sig^{1/2}$ and let $z_\Sig=\Sig\ahat_\Sig$ be the minimizer
in \eqref{eq:variational-info}. Indeed, $\Bmat z_\Sig=\xcol$ and
$z_\Sig^\top\Sig^{-1}z_\Sig=\Info(\Sig)$.
The update and its inverse are
\[
 \Sig^+=\frac{R(I_m+\widehat h\widehat h^\top)R}{1+s},
 \qquad
 (\Sig^+)^{-1}
 =(1+s)R^{-1}
       \left(I_m-\frac{\widehat h\widehat h^\top}{2}\right)R^{-1}.
\]
Since $R^{-1}z_\Sig=\sqrt{\Info(\Sig)}\,h_\Sig$, evaluating the new
variational problem at the old minimizer gives
\[
 \Info(\Sig^+)
 \le z_\Sig^\top(\Sig^+)^{-1}z_\Sig
 =(1+s)\Info(\Sig)(1-\chi/2).
\]
Taking reciprocals proves the variance bound. Finally,
$pp^\top=R\widehat h\widehat h^\top R\preceq\Sig$, proving the
eigenvalue bound.
\end{proof}

\begin{lemma}[Progress with angular error]
\label{lem:multiplicative-progress}
Suppose $\lambda_{\max}(\Sig)\le\bar\rho$,
$\langle\widehat h,h_\Sig\rangle^2\ge1-\varepsilon$, with
$0\le\varepsilon\le1$. For
$\mathcal R(\Sig)=\Vinfo(\Sig)/\lambda_{\max}(\Sig)$,
\begin{equation}\label{eq:multiplicative-progress}
 \begin{aligned}
 \frac{\Vinfo(\Sig^+)}{\Vinfo(\Sig)}
 &\ge\frac{2}{(1+\bar\rho)(1+\varepsilon)},\\
 \frac{\mathcal R(\Sig^+)}{\mathcal R(\Sig)}
 &\ge\frac1{1+\varepsilon}\ge e^{-\varepsilon}.
 \end{aligned}
\end{equation}
If $\bar\rho\le1/2$ and $\varepsilon\le1/16$, then
$\Vinfo(\Sig^+)\ge(5/4)\Vinfo(\Sig)$.
\end{lemma}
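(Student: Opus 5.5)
The plan is to reduce everything to the one-step bounds of \Cref{lem:natural-update-identity}, with $\chi=\langle\widehat h,h_\Sig\rangle^2\ge1-\varepsilon$ and $s=\widehat h^\top\Sig\widehat h$. The first step is the variance ratio. Since $\chi\ge1-\varepsilon$, we have $2-\chi\le1+\varepsilon$. Since $s\le\lambda_{\max}(\Sig)\le\bar\rho$, we have $1+s\le1+\bar\rho$. Substituting both into the first bound of \eqref{eq:sufficient-step-bounds} gives
\[
 \frac{\Vinfo(\Sig^+)}{\Vinfo(\Sig)}\ge\frac{2}{(1+s)(2-\chi)}\ge\frac{2}{(1+\bar\rho)(1+\varepsilon)}.
\]

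For the ratio $\mathcal R$ I would not use the crude bound $s\le\bar\rho$, since the factor $1+s$ has to cancel. Divide the variance bound by the eigenvalue bound $\lambda_{\max}(\Sig^+)\le2\lambda_{\max}(\Sig)/(1+s)$ from the same lemma. This gives
\[
 \frac{\mathcal R(\Sig^+)}{\mathcal R(\Sig)}
 =\frac{\Vinfo(\Sig^+)}{\Vinfo(\Sig)}\cdot\frac{\lambda_{\max}(\Sig)}{\lambda_{\max}(\Sig^+)}
 \ge\frac{2}{(1+s)(2-\chi)}\cdot\frac{1+s}{2}=\frac1{2-\chi}\ge\frac1{1+\varepsilon}.
\]
The final inequality $1/(1+\varepsilon)\ge e^{-\varepsilon}$ is simply $1+\varepsilon\le e^{\varepsilon}$. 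The only point needing care is that $s$ is the same quantity in both bounds, so it cancels exactly; I would check this carefully. Positivity of $\lambda_{\max}$ and finiteness of $\Vinfo$ hold because $\Sig\succ0$ and $\xcol\in\im(\Bmat)\setminus\{0\}$ are standing assumptions of this subsection.

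For the numerical consequence, take $\bar\rho\le1/2$ and $\varepsilon\le1/16$. Then
\[
 \frac{2}{(3/2)(17/16)}=\frac{64}{51}>\frac54,
\]
because $64\cdot4=256>255=51\cdot5$. This step is routine, and none of the steps presents a real obstacle.
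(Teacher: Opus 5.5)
Your proposal is correct and follows the paper's proof essentially step for step: bound $s\le\lambda_{\max}(\Sig)\le\bar\rho$ and $2-\chi\le1+\varepsilon$ in \Cref{lem:natural-update-identity}, divide its variance bound by its eigenvalue bound so that $1+s$ cancels, and check $64/51>5/4$. Your remarks on the cancellation of $s$ and on positivity and finiteness of $\Vinfo$ are accurate.
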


\begin{proof}
Since $s\le\bar\rho$ and $2-\chi\le1+\varepsilon$,
\Cref{lem:natural-update-identity} gives the variance bound.
Dividing its variance bound by its eigenvalue bound gives
\[
 \frac{\mathcal R(\Sig^+)}{\mathcal R(\Sig)}
 \ge\frac1{2-\chi}\ge\frac1{1+\varepsilon}\ge e^{-\varepsilon}.
\]
Finally, $2/((3/2)(17/16))=64/51>5/4$.
\end{proof}

\begin{lemma}[Covariance invariants]
\label{lem:sufficient-floor-mass}
Consider a sequence of updates starting from $I_m/m$, with
squared angular errors at most $\varepsilon_t$, where
\[
 0\le\varepsilon_t\le1/16,\qquad \sum_t\varepsilon_t\le1/4.
\]
Suppose $\lambda_{\max}(\Sig_j)<\bar\rho\le1/2$ before each update.
Every iterate, including the one produced by the last update, satisfies
\[
 \mathcal R(\Sig_t)\ge c_Rm\Vinfo(I_m/m),\qquad
 \Sig_t=\beta_tI_m+P_t,\quad \beta_t>0,\quad P_t\succeq0,
 \quad \tr P_t\le7\bar\rho.
\]
Consequently, $\tr(\Sig_t^2)\le1/m+14\bar\rho^2$.
\end{lemma}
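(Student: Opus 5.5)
Both invariants will be proved by induction along the update sequence, tracking the ratio $\mathcal R$ and an explicit decomposition of each iterate.

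\emph{First invariant.} At $\Sig_0=I_m/m$ we have $\mathcal R(\Sig_0)=m\Vinfo(I_m/m)$. Each update is performed from an iterate with $\lambda_{\max}<\bar\rho$ and squared angular error at most $\varepsilon_t$. By \Cref{lem:multiplicative-progress}, such an update multiplies $\mathcal R$ by at least $e^{-\varepsilon_t}$. Telescoping gives
\[
 \mathcal R(\Sig_t)\ge e^{-\sum_s\varepsilon_s}\,m\Vinfo(I_m/m)\ge e^{-1/4}m\Vinfo(I_m/m).
\]
Since $e^{-1/4}>3/4=c_R$, this proves the first claim. It also covers the last iterate, because the hypothesis only concerns the iterate before each update.

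\emph{Decomposition.} Write $\Sig_t=\beta_tI_m+P_t$. The update gives
\[
 \Sig_{t+1}=\frac{\beta_tI_m+P_t+p_tp_t^\top}{1+s_t},
 \qquad s_t=\|p_t\|_2^2=\widehat h_t^\top\Sig_t\widehat h_t .
\]
So we set $\beta_{t+1}=\beta_t/(1+s_t)>0$ and $P_{t+1}=(P_t+p_tp_t^\top)/(1+s_t)\succeq0$. The trace constraint then forces $\tr P_t=1-m\beta_t$. It therefore suffices to bound how much $\beta$ shrinks, namely to show $m\beta_t\ge1-7\bar\rho$.

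We have $m\beta_t=\prod_{s<t}(1+s_s)^{-1}$, so the task is to bound $\sum_s s_s$. Each $s_s\le\lambda_{\max}(\Sig_s)<\bar\rho$, but the number of updates is $T\approx\log_{5/4}(m\bar\rho)$. Summing the crude bound gives only $O(\bar\rho\log(m\bar\rho))$, which loses a logarithm. This is the main obstacle.

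The fix is to exploit geometric variance growth. Use $\lambda_{\max}(\Sig_s)\le\Vinfo(\Sig_s)/\mathcal R(\Sig_s)\le\Vinfo(\Sig_s)/(c_Rm\Vinfo(I_m/m))$. By \Cref{lem:multiplicative-progress}, with $\varepsilon_s\le1/16$, each accepted update increases $\Vinfo$ by a factor at least $5/4$. Reading the sequence backwards from the last pre-update iterate $\Sig_{T-1}$ gives
\[
 \Vinfo(\Sig_s)\le(4/5)^{T-1-s}\,\Vinfo(\Sig_{T-1}).
\]
For the last pre-update iterate, \Cref{lem:variational} gives $\Vinfo(\Sig_{T-1})\le m\lambda_{\max}(\Sig_{T-1})\Vinfo(I_m/m)<m\bar\rho\,\Vinfo(I_m/m)$. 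Combining,
\[
 s_s\le\lambda_{\max}(\Sig_s)\le\frac{(4/5)^{T-1-s}\bar\rho}{c_R}.
\]
Summing the geometric series gives
\[
 \sum_s s_s\le\frac{5\bar\rho}{c_R}=\frac{20}{3}\bar\rho<7\bar\rho,
\]
so $m\beta_t\ge e^{-\sum s_s}\ge1-7\bar\rho$ and hence $\tr P_t\le7\bar\rho$.

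The induction must interleave the two claims. The first invariant at every earlier iterate is used in the bound on $s_s$. It holds independently of $\beta$, so we prove it for all iterates first and then do the trace bound; the constant $7$ leaves room. The same argument applies to the final iterate $\Sig_T$, since only $s_s$ for pre-update iterates enters.

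\emph{Consequence.} Since $\beta_t\le1/m$ and $P_t\succeq0$,
\[
 \tr(\Sig_t^2)=m\beta_t^2+2\beta_t\tr P_t+\tr(P_t^2)\le\beta_t(m\beta_t+2\tr P_t)+(\tr P_t)^2 .
\]
Using $m\beta_t+\tr P_t=1$, this is at most $\beta_t(1+\tr P_t)+49\bar\rho^2$. That is $1/m+O(\bar\rho/m)+49\bar\rho^2$, where the middle term comes from $\beta_t\tr P_t\le7\bar\rho/m$.

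To reach the stated constant $14$ I would sharpen this step. One option is $\tr(P_t^2)\le\lambda_{\max}(P_t)\tr P_t\le2\bar\rho\cdot7\bar\rho$, using $\lambda_{\max}(\Sig_t)<2\bar\rho$ from \eqref{eq:sufficient-threshold-crossing}. Another is $\beta_t(1+\tr P_t)\le(1-\tr P_t)(1+\tr P_t)/m\le1/m$, which disposes of the middle term directly. Together these give $\tr(\Sig_t^2)\le1/m+14\bar\rho^2$.
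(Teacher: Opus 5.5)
Your proof is correct and follows the paper's argument essentially step for step. The shared steps are: telescoping $\mathcal R$ via \Cref{lem:multiplicative-progress}; the decomposition $\Sig_t=\beta_tI_m+P_t$ with $\tr P_t=1-m\beta_t\le\sum_{j<t}s_j$; the backward geometric bound $\lambda_{\max}(\Sig_j)\le(\bar\rho/c_R)(4/5)^{T-1-j}$, obtained from the $5/4$ variance growth and \Cref{lem:variational}; and the identity $\tr(\Sig_t^2)=(1-(\tr P_t)^2)/m+\tr(P_t^2)$ combined with $\tr(P_t^2)\le 2\bar\rho\cdot 7\bar\rho$.

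One small fix is needed. For the last iterate, justify $\lambda_{\max}(\Sig_t)<2\bar\rho$ from the hypothesis $\lambda_{\max}(\Sig_{T-1})<\bar\rho$ together with the eigenvalue bound of \Cref{lem:natural-update-identity}. Do not cite \eqref{eq:sufficient-threshold-crossing}, which describes the algorithm's threshold-crossing output and need not apply when the sequence ends for another reason.
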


\begin{proof}
By \Cref{lem:multiplicative-progress} and $\sum_t\varepsilon_t\le1/4$,
\[
 \mathcal R(\Sig_t)\ge e^{-1/4}\mathcal R(I_m/m)
 \ge\tfrac34m\Vinfo(I_m/m).
\]
Define $s_j=\widehat h_j^\top\Sig_j\widehat h_j$ and
$\beta_t=m^{-1}\prod_{j<t}(1+s_j)^{-1}$.
The update gives $\Sig_t=\beta_tI_m+P_t$ with $P_t\succeq0$, and
\[
 \tr P_t=1-m\beta_t
 \le\sum_{j<t}s_j\le\sum_{j<t}\lambda_{\max}(\Sig_j).
\]
If there are no updates, the conclusion is immediate.  Otherwise let
$T$ be their number. Every update multiplies the variance
by at least $1+\gamma=5/4$, and
\Cref{lem:variational} bounds
$\Vinfo(\Sig_{T-1})\le m\bar\rho\,\Vinfo(I_m/m)$.
For each $j<T$, the variance growth and the lower bound on $\mathcal R$
therefore give
\[
 \lambda_{\max}(\Sig_j)
 \le\frac{\Vinfo(\Sig_j)}{c_Rm\Vinfo(I_m/m)}
 \le\frac{\bar\rho}{c_R}(1+\gamma)^{-(T-1-j)}.
\]
Summing this geometric bound, for every $t\le T$ we obtain
\[
 \tr P_t
 \le\frac{\bar\rho}{c_R}
        \sum_{i\ge0}(1+\gamma)^{-i}
 =\frac{20}3\bar\rho<7\bar\rho.
\]
Finally, put $d_t=\tr P_t$, so $m\beta_t+d_t=1$.
Every update starts below $\bar\rho$, so
\Cref{lem:natural-update-identity} gives
$\lambda_{\max}(P_t)\le\lambda_{\max}(\Sig_t)<2\bar\rho$,
including at the final iterate. Hence
\[
 \tr(\Sig_t^2)
 =\frac{1-d_t^2}{m}+\tr(P_t^2)
 \le\frac1m+2\bar\rho d_t
 \le\frac1m+14\bar\rho^2.
\]
\end{proof}

\subsection{Termination and query complexity}
\label{sec:sufficient-proof}

We combine direction estimation with the covariance invariants to
prove the signal-gap guarantee used in \Cref{sec:sufficient-norm}.
The following statement also records the generic dependence on
$m$, the initial variance lower bound $\nu_0$, and the failure
probability $\delta$.

\begin{proposition}[Rejection and threshold hardness]
\label{thm:sufficient-summary}
Fix $h,M$ with $m\ge2$ and
$\xcol\in\im(\Bmat)\setminus\{0\}$, and let $\alpha>0$ and $0<c\le1$.
Suppose $0<\nu_0\le\Vinfo(I_m/m)$ and $1/m<\bar\rho\le1/2$.
For $0<\delta<1$, use the round schedule
\eqref{eq:sufficient-phase-length}. Under the margin
\eqref{eq:sufficient-hardness-margin},
\Cref{alg:sufficient-statistic-discovery} returns a rejected round
with probability at least $1-\delta$ against every
sufficient-statistic responder.  The round has empirical signal-gap
loss greater than $\delta_0/2$.

On the same event, every covariance produced satisfies
\begin{equation}\label{eq:sufficient-admissible}
 \tr(\Sig^2)\le\frac1m+14\bar\rho^{\,2}.
\end{equation}
If the final round is reached, its covariance satisfies
\begin{equation}\label{eq:sufficient-theorem-output}
 \begin{aligned}
 &\bar\rho\le\lambda_{\max}(\Sig_{\rm out})<2\bar\rho,\\
 &\Vinfo(\Sig_{\rm out})
   \ge c_Rm\lambda_{\max}(\Sig_{\rm out})\Vinfo(I_m/m)
   \ge c_Rm\bar\rho\,\Vinfo(I_m/m).
 \end{aligned}
\end{equation}
Consequently, that final distribution is universally hard: every
signal-gap rule has loss greater than $\delta_0$ on it.
\end{proposition}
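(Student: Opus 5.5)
We define one good event $\mathcal E$ as the intersection of three events. The first is the validation event of \Cref{lem:adaptive-phase-validation} in every executed round, including the final one, with $\delta_{\rm rd}=\delta/(3(T_{\max}+1))$. The round lengths in \eqref{eq:sufficient-phase-length} satisfy $r\ge32\delta_0^{-2}\ell_\delta$, which covers this choice. The second is the direction-concentration event of \Cref{lem:sufficient-adaptive-direction} in every update round $t$, with $b=\ell_\delta$. The third is the analogous event for the final round, which we only need for validation. A union bound over at most $T_{\max}+1$ rounds gives $\Pr[\mathcal E]\ge1-\delta$. All probabilities are conditional on the information before each round, so adaptivity is handled as in the general analysis. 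On $\mathcal E$ we then argue deterministically, by induction on $t$.

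The induction hypothesis at round $t$ (before any rejection or threshold crossing) has three parts: $\Vinfo(\Sig_t)\ge\underline V_t=(1+\gamma)^t\nu_0$; the invariants of \Cref{lem:sufficient-floor-mass} hold for $\Sig_t$; and $\lambda_{\max}(\Sig_t)<\bar\rho$. Suppose round $t$ is accepted. By (a) of \Cref{lem:adaptive-phase-validation}, its average predictable loss is below $\delta_0$. By \Cref{prop:sufficient-phasewise-alignment}, $M_r=\sigma h_\Sig$ with $\sigma\ge rg\sqrt{\Vinfo(\Sig_t)}$. The schedule gives $r_t\ge256(m+\ell_\delta)/(\varepsilon_tg^2\underline V_t)$, and with $\Vinfo(\Sig_t)\ge\underline V_t$ this meets \eqref{eq:sufficient-adaptive-sample-bound}. \Cref{lem:sufficient-adaptive-direction} therefore gives squared angular error at most $\varepsilon_t\le1/32$. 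Since $\sum_t\varepsilon_t\le\frac1{32}\cdot8=1/4$, \Cref{lem:multiplicative-progress} yields $\Vinfo(\Sig_{t+1})\ge\frac54\Vinfo(\Sig_t)\ge\underline V_{t+1}$, and \Cref{lem:sufficient-floor-mass} gives the invariants for $\Sig_{t+1}$, including \eqref{eq:sufficient-admissible}. This closes the induction.

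Next we show that the threshold is reached before the budget runs out. While $\lambda_{\max}<\bar\rho$, \Cref{lem:variational} gives $\Vinfo(\Sig_t)\le m\bar\rho\Vinfo(I_m/m)$. On the other hand, $\Vinfo(\Sig_t)\ge(5/4)^t\Vinfo(I_m/m)$, because we may run the growth from $\Vinfo(I_m/m)$ itself rather than from $\nu_0$. Hence at most $\lceil\log_{1+\gamma}(m\bar\rho)\rceil<T_{\max}$ accepted updates can occur with $\lambda_{\max}$ still below $\bar\rho$. So either some update round is rejected, or the threshold is crossed and the final round runs. In the crossing case, \eqref{eq:sufficient-threshold-crossing} gives the eigenvalue bracket. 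The bound $\mathcal R(\Sig_{\rm out})\ge c_Rm\Vinfo(I_m/m)$ from \Cref{lem:sufficient-floor-mass}, which explicitly covers the iterate produced by the last update, then gives \eqref{eq:sufficient-theorem-output}. For hardness, $g^2\Vinfo(\Sig_{\rm out})\ge c_Rg^2m\bar\rho\nu_0>1$ by the margin \eqref{eq:sufficient-hardness-margin}. \Cref{cor:gain-hardness} then says every rule has loss greater than $\delta_0$ at $\Sig_{\rm out}$, and (b) of \Cref{lem:adaptive-phase-validation} says the final round is rejected. In both cases the returned round has $E>r\delta_0/2$.

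The main obstacle is keeping the induction consistent. The direction-estimation guarantee for round $t$ requires the variance lower bound $\underline V_t$ that the induction itself is proving. The invariants lemma also needs angular-error bounds for all earlier updates, together with $\lambda_{\max}<\bar\rho$ before each one. We resolve this by establishing, in order, the angular bound, then the variance growth, then the invariants, all on the single pre-fixed event $\mathcal E$. Because the schedule $r_t$ is fixed in advance and uses only the deterministic lower bound $\underline V_t$, there is no data-dependent choice of round length that would break the union bound.
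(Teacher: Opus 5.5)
Your proposal is correct and follows the paper's proof essentially step for step. It uses a single union-bounded event (your $\delta_{\rm rd}=\delta/(3(T_{\max}+1))$ coincides with the paper's $2e^{-\ell_\delta}$), induction through exact alignment, direction estimation and $5/4$-fold variance growth from $\Vinfo(I_m/m)$, the contradiction with \Cref{lem:variational} after $T_{\max}$ accepted updates, and \Cref{lem:sufficient-floor-mass} together with the margin and \Cref{cor:gain-hardness} for the final round. The only differences are cosmetic: your probability accounting is looser than necessary, and your separately listed final-round event is already contained in your first event.
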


Positive definiteness, trace one, and the eigenvalue bound
$\lambda_{\max}(\Sig_t)<2\bar\rho$ hold on every run.
The sharper admissibility bound \eqref{eq:sufficient-admissible}
requires successful direction estimation and is asserted on the
theorem's high-probability event. By \Cref{lem:variational},
\eqref{eq:sufficient-theorem-output} attains at least three quarters
of the largest variance possible with top eigenvalue at most
$\lambda_{\max}(\Sig_{\rm out})$.

There are at most $T_{\max}$ rounds and one final round.  The total query
count is
\begin{equation}\label{eq:sufficient-query-bound-main}
 Q=O\!\left(
     \frac{m+\log(1/\delta)}{g^2\nu_0}
     +T_{\max}\log\frac{T_{\max}+1}{\delta}+m
       \right).
\end{equation}
The implementation uses
\begin{equation}\label{eq:sufficient-space-bound-main}
 O(mT_{\max})\ \text{words of working memory},\qquad
 O(QmT_{\max}+mT_{\max}^3)\ \text{arithmetic operations}.
\end{equation}

\begin{proof}[Proof of \Cref{thm:sufficient-summary}]
For each executed round, use the validation event of
\Cref{lem:adaptive-phase-validation} and the concentration event of
\Cref{lem:sufficient-adaptive-direction} with $b=\ell_\delta$.
Their conditional failure probabilities are at most
$2e^{-\ell_\delta}$ and $e^{-\ell_\delta}$, respectively.
Include the final validation event if that round is reached.
A union bound gives an event $\mathcal E$ of probability at least
\[
 1-(3T_{\max}+2)e^{-\ell_\delta}\ge1-\delta/2
\]
on which all these conclusions hold.  No conditioning on acceptance
is used to establish this probability bound.

\emph{Accepted rounds.}
Work on $\mathcal E$.  If round $t$ is accepted, its average
predictable loss is below $\delta_0$.
By induction over accepted rounds and
\Cref{prop:sufficient-phasewise-alignment,lem:sufficient-adaptive-direction},
\begin{equation}\label{eq:sufficient-induction}
 \Vinfo(\Sig_t)\ge(1+\gamma)^t\Vinfo(I_m/m)\ge\underline V_t,
 \qquad
 \langle\widehat h_t,h_{\Sig_t}\rangle^2\ge1-\varepsilon_t.
\end{equation}
Indeed, the variance lower bound holds initially and is preserved by
\Cref{lem:multiplicative-progress}.  Given that bound, exact alignment
and the round length yield the angular estimate.  The gain lemma is
valid throughout, so no separate variance regime is needed.

\emph{Termination and hardness.}
If all $T_{\max}$ rounds were accepted without reaching the threshold,
the variance growth would give
$\Vinfo(\Sig_{T_{\max}})\ge(1+\gamma)^{T_{\max}}\Vinfo(I_m/m)
>m\bar\rho\,\Vinfo(I_m/m)$.
This contradicts the stopping condition and \Cref{lem:variational}:
\begin{equation}\label{eq:sufficient-upper-before-cap}
 \Vinfo(\Sig)\le m\lambda_{\max}(\Sig)\Vinfo(I_m/m)
                \le m\bar\rho\,\Vinfo(I_m/m).
\end{equation}
Thus a round is rejected or an update reaches the threshold.
In the latter case, \Cref{lem:sufficient-floor-mass} gives
\eqref{eq:sufficient-theorem-output}, and the margin
\eqref{eq:sufficient-hardness-margin} implies
$g^2\Vinfo(\Sig_{\rm out})>1$.
By \Cref{cor:gain-hardness}, every response rule has loss greater
than $\delta_0$ there, so the final validation
round is rejected.  The same covariance-invariants lemma proves
\eqref{eq:sufficient-admissible} in either case, since only accepted
rounds produce updates and every update starts below the threshold.

\emph{Queries.}
The direction-estimation terms sum geometrically, because
\[
 \frac1{\varepsilon_t\underline V_t}
 =\frac{32}{\nu_0}\left(\frac{32}{35}\right)^t.
\]
Also $T_{\max}=O(1+\log(m\bar\rho))=O(\log m)$ and
$\ell_\delta=O(m+\log(1/\delta))$.
Summing the direction terms, validation terms, and ceilings, and
including the final round, proves
\eqref{eq:sufficient-query-bound-main}.  The implementation bounds are
proved in \Cref{sec:sufficient-implementation}.
\end{proof}

\subsection{Approximate alignment}

\begin{remark}[Allowing correlation with orthogonal noise]
\label{rem:sufficient-computation}
Exact dependence on the sufficient statistic is not necessary:
the attack also tolerates a small component of the mean signed
noise orthogonal to the signal-estimation direction.
For round $t$, write
\[
 M_t=\sum_{j=1}^{r_t}\E[b_jz_j\mid\mathcal H_{j-1}],
 \qquad
 \sigma_t=\langle M_t,h_{\Sig_t}\rangle,
 \qquad
 d_t=M_t-\sigma_t h_{\Sig_t}.
\]
Thus $\sigma_t$ measures the useful correlation and $d_t$ the
perpendicular correlation. For sufficient-statistic responders,
$d_t=0$.

More generally, suppose that every round with average predictable
signal-gap loss at most $\delta_0$ satisfies
\[
 \|d_t\|_2\le\frac14\sqrt{\varepsilon_t}\,\sigma_t.
\]
The gain lemma gives
$\sigma_t\ge r_tg\sqrt{\Vinfo(\Sig_t)}>0$.
Here $\varepsilon_t$ is the target squared angular error of the
estimated direction, not a tolerance for response errors.

Replace $256$ by $1024$ in the direction-estimation term of
\eqref{eq:sufficient-phase-length}.
Then, on the concentration event, the sampling error also has
norm at most $\sqrt{\varepsilon_t}\,\sigma_t/4$.
Consequently,
\[
 \|S_t-\sigma_t h_{\Sig_t}\|_2
 \le\frac12\sqrt{\varepsilon_t}\,\sigma_t.
\]
The same normalization argument as in
\Cref{lem:sufficient-adaptive-direction} gives squared angular
error at most $\varepsilon_t$. The remaining analysis is
unchanged.

\paragraph{A uniform allowance for perpendicular correlation.}
The geometric choice of $\varepsilon_t$ is convenient but not
essential. The covariance analysis only requires
$\varepsilon_t\le1/16$ and
$\sum_{t<T_{\max}}\varepsilon_t\le1/4$.
This permits a schedule that accommodates the same relative
perpendicular correlation in every accurate round.
Specifically, suppose
\[
 \frac{\|d_t\|_2^2}{\sigma_t^2}
 \le\varepsilon'
 \le\frac1{256T_{\max}}.
\]
Choose
\[
 \varepsilon_t
 =\max\left\{\frac1{16(t+1)^2},\,16\varepsilon'\right\}.
\]
Then the required drift bound holds, each
$\varepsilon_t\le1/16$, and
\[
 \sum_{t<T_{\max}}\varepsilon_t
 \le\frac{\pi^2}{96}+\frac1{16}<\frac14.
\]
Moreover, $1/\varepsilon_t\le16(t+1)^2$, while the variance
lower bound grows as $(5/4)^t$. Hence the direction-estimation
costs still have a bounded geometric-weighted sum, preserving
the same asymptotic query bound.

The extension therefore covers responders that use other
sketch components, provided their perpendicular correlations
satisfy the stated bound. Applying it to a learned predictor
requires establishing this correlation bound; small prediction
loss alone does not supply it.
\end{remark}

\section{Optimality of the covariance attacks}
\label{app:covariance-lower-bound}

We use a Gaussian Johnson--Lindenstrauss matrix, sampled once and kept fixed throughout the interaction, together with a constructed responder
that uses no interaction history and less distributional information
than our informed model allows. We show that  they withstand $\Omega(k+\alpha^2k^2)$ adaptive queries from a broad class of attacks.
Specifically, for any fixed target error fraction $\delta\in(0,1)$ and constant $\ell>0$, with high constant probability, no consecutive interval of at least $\ell k$ queries has error fraction at least $\delta$.

The attacks use signal--noise distributions whose noise energy is sufficiently concentrated. The signal direction and joint distribution may change after every answer, and the noise may depend on the signal.
The responder knows the matrix and task parameters and receives each input through its sketch. Its only additional per-query information is the signal direction and the first two moments of the noise. The constants in the query guarantee and the noise-concentration requirement may depend on $\delta$ and $\ell$.

The proof uses two facts. First, for any fixed query law, the estimation-error variance of the linear statistic used by our responder is $O(1/k)$ except with probability exponentially small in $k$ over the Gaussian matrix.
Second, while this variance remains $O(1/k)$, the responder can answer accurately while masking the matrix-dependent estimation error with output noise of scale $\alpha$. Each such query contributes only $O(1/(\alpha^2k))$ to a KL divergence bound relative to an ideal interaction that reveals nothing about $A$.
Comparing the interactions up to the first selection of a large-variance law shows that reaching such a law with constant probability requires $\Omega(\alpha^2k^2)$ queries.
Concentration then converts the responder's conditional accuracy into the observed-error guarantee.
We first state the model and guarantees, then give the proof.

\subsection{Attack and responder models}
\label{sec:lb-model}

\begin{definition}[Announced signal--noise attacks]
\label{def:covariance-attack}
Before query $i$, the attacker announces a unit vector $s_i\in\R^n$
and a joint law $\mathcal P_i$ of $(w_i,u_i)$ satisfying
\begin{equation}\label{eq:lb-law}
 w_i\in[1/2,2],\qquad u_i\in s_i^\perp\ \text{almost surely},\qquad
 \mu_i:=\E u_i,\qquad \Gamma_i:=\Cov(u_i),\qquad
 \tr\Gamma_i\le1.
\end{equation}
It then takes a fresh draw from this law and queries
$v_i=w_i s_i+u_i$. The announcement may depend on its complete previous
execution and private randomness. Conditional on that execution and the
announcement, the fresh draw is independent of the hidden matrix $A$.
The attacker may retain all sampling randomness. It observes the answers, but does not see $A$ or the sketches $Av_i$.
\end{definition}

The attacker may change the law after every query and there is no restriction on the
number of adaptive rounds. For norm tasks, we additionally require
\begin{equation}\label{eq:lb-energy}
 \Var(\|u_i\|_2^2)\le c_{\rm ad}\alpha^2
\end{equation}
for a sufficiently small constant $c_{\rm ad}>0$. This is a promise on
each announced law, not additional information required by the
responder. For centered Gaussian noise,
$\Var(\|u_i\|_2^2)=2\tr(\Gamma_i^2)$, so a sufficiently small
bound on $\tr(\Gamma_i^2)$ implies \eqref{eq:lb-energy}.
For noncentered Gaussian noise, the variance is
$2\tr(\Gamma_i^2)+4\mu_i^\top\Gamma_i\mu_i$.
For general noise, this variance also depends on fourth moments
and need not be controlled by the mean and covariance alone.

This model allows non-Gaussian noise with nonzero mean and signal
dependence. Our attacks use the special case
$s_i=e_h$ and $u_i=c\,\iota_M u$, where
$u\sim N(0,\Sig)$ is independent of the signal. Thus
$\mu_i=0$ and
$\Gamma_i=c^2\iota_M\Sig\iota_M^\top$:
the noise multiplier $c$ is absorbed into $u_i$ in this section.

\paragraph{The defending responder.}
Draw $A\in\R^{k\times n}$ with independent $N(0,1/k)$ entries and keep
it fixed. For the current announcement, suppress the query index and put
\[
 x=As,\qquad K=A\Gamma A^\top,\qquad
 \nu=
 \begin{cases}
  (x^\top K^\dagger x)^{-1},&x\in\im(K)\setminus\{0\},\\
  0,&x\notin\im(K),\\
  +\infty,&x=0.
 \end{cases}
\]
Here $K^\dagger$ denotes the Moore--Penrose inverse. When $\nu<\infty$,
it is the variance of the estimation error $T-w$ below; for our Gaussian
query family it equals $c^2\Vinfo(\Sig)$.

Given the sketch $y=Av$, form $\widetilde y=y-A\mu$. If
$0<\nu<\infty$, the responder computes
\begin{equation}\label{eq:lb-statistic}
 T=\nu x^\top K^\dagger\widetilde y.
\end{equation}
If $\nu=0$, let $P$ be the orthogonal projector onto $\im(K)^\perp$ and
use $T=\langle Px,\widetilde y\rangle/\|Px\|_2^2$. This recovers $w$
exactly: $A(u-\mu)\in\im(K)$ almost surely, since a centered random
vector lies in the image of its covariance almost surely. When $x=0$,
set $T=0$. For adaptive signal directions, we include this exceptional
case in the bad event controlled by the proof.

Conditional on $A$, the past, and any announcement with $\nu<\infty$,
\begin{equation}\label{eq:lb-error}
 T=w+\eta,\qquad
 \E[\eta]=0,\qquad \E[\eta^2]=\nu.
\end{equation}
Indeed, when $\nu>0$, $\eta=\nu x^\top K^\dagger A(u-\mu)$ and its
second moment is $\nu^2x^\top K^\dagger K K^\dagger x=\nu$; when
$\nu=0$, $\eta=0$. No independence between $w$ and $u$ is needed.

The responder adds fresh independent output noise,
\begin{equation}\label{eq:lb-noisy-response}
 o=T+\xi,\qquad \xi\sim N(0,\sigma^2),
\end{equation}
and answers norm queries using
\begin{equation}\label{eq:lb-norm-estimate}
 \widehat r=\sqrt{o^2+\tr\Gamma+\|\mu\|_2^2}.
\end{equation}
For norm estimation it returns $\widehat r$; for the binary norm gap it
returns $+1$ exactly when $\widehat r^{\,2}\ge(L^2+U^2)/2$. The
responder uses only $A,s,\mu,\Gamma$ and the task parameters. It needs
neither the signal law nor the full noise law. For the lower bound, we
give the attacker the more informative observation $o$ itself; the
task answer is determined by $o$ and the announcement.

For Gaussian noise independent of the signal, $T$ is the scalar
sufficient statistic used in our attacks. In particular, this responder
belongs to the sufficient-statistic class on that query family. We do
not require statistical sufficiency for other noise laws.

\subsection{Lower-bound guarantees}
\label{sec:lb-results}

Fix an error fraction $\delta\in(0,1)$ and a minimum interval length
$\ell k$, where $\ell>0$ is constant. We consider either relative norm
estimation with tolerance $\kappa\alpha$, for a fixed $\kappa>0$, or a
binary norm gap with fixed thresholds satisfying $U^2-L^2=\alpha$.
Taking $\kappa=c_{\rm err}$ recovers our norm-estimation loss;
the stated norm-gap width includes our thresholds. Write $e_i\in\{0,1\}$
for the corresponding task-error indicator. An observed-error success
means returning a consecutive interval $J$ of already queried inputs
with $|J|\ge\ell k$ and $\sum_{i\in J}e_i\ge\delta|J|$. In particular,
this includes returning a round with a constant fraction of errors.

\begin{theorem}[Lower bound for announced signal--noise attacks]
\label{thm:covariance-lower-bound}
Fix the norm task, $\delta$, and $\ell$ as above. There are positive
constants $C,C_1,c_0,c_{\rm ad}$ and $k_0$ such that the following holds
for every $k\ge k_0$, $n>k$, and
\[
 0<\alpha\le1,\qquad \alpha^2 k\ge C.
\]
Let $A$ have independent $N(0,1/k)$ entries. Use the responder
\eqref{eq:lb-noisy-response}--\eqref{eq:lb-norm-estimate} with
\[
 \sigma=\alpha/C_1,\qquad
 Q_0=\lfloor c_0\alpha^2k^2\rfloor.
\]
Against every attack satisfying \Cref{def:covariance-attack} and
\eqref{eq:lb-energy}, with probability at least $0.9$ we have
$\nu_i<4e/k$ for all $i\le Q_0$. With probability at least $0.8$, every
consecutive interval $J\subseteq\{1,\ldots,Q_0\}$ with $|J|\ge\ell k$
satisfies
\begin{equation}\label{eq:lb-empirical-guarantee}
 \frac1{|J|}\sum_{i\in J}e_i<\delta.
\end{equation}
The probabilities include the matrix, attacker, and responder randomness.
\end{theorem}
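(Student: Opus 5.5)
The proof has two parts: a variance bound that holds while the attacker has learned little about $A$, and a transfer from conditional accuracy to observed accuracy. The vehicle is a comparison between the real interaction and an \emph{ideal} interaction. In the ideal interaction the responder outputs $o=w+\eta'+\xi$, where $\eta'\sim N(0,\nu')$ is drawn independently of $A$ with a fixed small variance, say $\nu'=0$, so $o=w+\xi$. In the ideal world the attacker's transcript is independent of $A$, so every announcement $(s_i,\Gamma_i,\mu_i)$ is independent of $A$.

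\emph{Step 1 (fixed-law variance).} For any fixed unit $s$ and trace-$\le 1$ covariance $\Gamma$ chosen independently of $A$, show $\Pr_A[\nu\ge 4e/k]\le e^{-ck}$, with $\nu\le \|\Gamma^{1/2}\|$-type bound. The variational formula of \Cref{lem:variational} gives $\nu^{-1}=\min\{z^\top\Gamma^\dagger z: Az=x\}$. A more direct route: write $x=As$ and note $As$ is Gaussian and independent of $A\Gamma^{1/2}$ restricted to $s^\perp$, since $u\in s^\perp$. Then $\nu^{-1}=x^\top K^\dagger x$, and conditionally on $K$ this is a quadratic form of the independent $N(0,I_k/k)$ vector $x$. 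Since $\tr K\approx\tr\Gamma\le1$, with $K$ of rank at most $k$, we have $x^\top K^\dagger x\ge \|x\|^2/\lambda_{\max}(K)$, and the $k$-averaging gives $\ge k/(4e)$ w.h.p. Gaussian concentration of $\|x\|^2$ and of $\tr K$ gives the $e^{-\Omega(k)}$ tail.

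\emph{Step 2 (KL comparison).} Let $\tau$ be the first query with $\nu_\tau\ge 4e/k$, and stop both interactions at $\min(\tau,Q_0)$. Conditional on the past and $A$, the real observation is $o=w+\eta+\xi$, where $\eta$ has mean zero and variance $\nu<4e/k$ before $\tau$. The ideal observation is $w+\xi$. Writing the mixture over $(w,\eta)$ against the pure $\xi$-smoothing, the per-query KL (or $\chi^2$) is $O(\E\eta^2/\sigma^2)=O(C_1^2/(\alpha^2k))$. This holds even though $\eta$ may be non-Gaussian and dependent on $w$: conditional on $w$, the divergence of $N(\eta,\sigma^2)$ mixtures from $N(0,\sigma^2)$ is bounded by $\E\eta^2/\sigma^2$ via convexity and the Gaussian shift KL, and the chain rule then handles the adaptivity. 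Summing over $Q_0=c_0\alpha^2k^2$ queries gives total KL $O(c_0C_1^2k)$, which is too large. I expect this to be the main obstacle. The fix is to use the \emph{mutual information} between $A$ and the transcript instead: against an ideal world in which $\eta$ is replaced by an independent copy with the same conditional law given the announcement, only the $A$-dependence matters. A better route is to replace the ideal world by one where the responder uses a \emph{fresh} Gaussian matrix each query. Then each query's divergence is $O(\E_A[\ldots])$, where only the deviation of $\eta$ from its typical behavior counts. One needs a per-query contribution of $O(1/(\alpha^2k^2))$ scaled appropriately; the paper's stated $O(1/(\alpha^2k))$ combined with the budget suggests the final inequality is total KL $\le c_0C_1^2\,k\cdot(1/k)$-type, so I would aim to prove the per-query bound $O(\nu/\sigma^2)$ with $\nu=O(1/k)$ and then combine it with Step 1 via a union bound over the $Q_0$ ideal-world announcements. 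In the ideal world each announcement is independent of $A$, so $\Pr[\exists i\le Q_0:\nu_i\ge4e/k]\le Q_0e^{-ck}\le 0.01$. Pinsker or data processing transfers the event $\{\tau\le Q_0\}$ from the ideal world to the real world, losing $\sqrt{\mathrm{KL}/2}\le0.09$ for small $c_0$.

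\emph{Step 3 (accuracy).} On $\nu_i<4e/k$, the estimate $o=w+\eta+\xi$ has error of standard deviation $O(1/\sqrt k+\alpha/C_1)$, which is at most $\alpha/C'$ once $\alpha^2k\ge C$ and $C_1$ is large. The noise energy satisfies $\|v\|^2=w^2+\|u\|^2$ with $\|u\|^2$ concentrated around $\tr\Gamma+\|\mu\|^2$ by \eqref{eq:lb-energy}, and signal orthogonality gives $\|v\|^2=w^2+\|u\|^2$ exactly. By Chebyshev, each answer's conditional error probability is at most $\delta/2$ for suitable constants. Finally, an Azuma bound over consecutive intervals of length $\ge\ell k$, union-bounded over the $O(Q_0^2)$ intervals with tail $e^{-\Omega(\delta^2\ell k)}$, shows that all intervals have empirical error below $\delta$, provided $k\ge k_0$ absorbs the $\log Q_0=O(\log k)$ factor.
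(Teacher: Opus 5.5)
Your Step~2 has a genuine gap, and it is the crux of the quadratic bound. You correctly find that the divergence between the real and ideal interactions, stopped at $\min\{\tau,Q_0\}$, is at most $Q_0\cdot\frac{4e/k}{2\sigma^2}=2ec_0C_1^2k$, which grows linearly in $k$. You then transfer $\{\tau\le Q_0\}$ through an additive, total-variation loss (Pinsker). That needs the divergence to be $O(1)$, which would force $c_0=O(1/k)$, i.e.\ only $O(\alpha^2k)$ queries.

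The repairs you suggest (mutual information, a fresh matrix per query, a per-query contribution of order $1/(\alpha^2k^2)$) are neither worked out nor needed. Against the ideal interaction the per-query divergence is exactly $\nu_i/(2\sigma^2)$, and this is genuinely of order $C_1^2/(\alpha^2k)$ for typical laws such as isotropic noise.

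The missing idea is that a divergence linear in $k$ is harmless, because the ideal-world crossing probability is exponentially small: $\mathbb Q(\tau\le N)\le Nq$ with $q=2^{-k}$. Apply data processing to the indicator of $\{\tau\le N\}$ in its large-deviation form rather than through Pinsker. With $p=\mathbb P(\tau\le N)$ and $N=Q_0$,
\[
 2ec_0C_1^2k\ \ge\ D_{\rm KL}(\mathbb P\|\mathbb Q)\ \ge\ p\log\frac{1}{Nq}-\log2 .
\]
Since $\log N=O(\log k)$, we have $\log(1/(Nq))\ge(k\log2)/2$ for large $k$. Hence $p\le 4ec_0C_1^2/\log2+2/k\le0.1$ once $c_0C_1^2$ is small. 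Informally, making an event of prior probability $2^{-k}$ likely costs $\Omega(k)$ nats, while each query leaks only $O(1/(\alpha^2k))$ nats. This is the paper's \Cref{lem:lb-first-passage}; without it your plan does not reach $Q_0=\Theta(\alpha^2k^2)$.

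Step~1 is also wrong as written. The inequality $x^\top K^\dagger x\ge\|x\|_2^2/\lambda_{\max}(K)$ only gives $\nu\lesssim\lambda_{\max}(K)$. But $\lambda_{\max}(K)$ is of constant order whenever $\Gamma$ has a constant eigenvalue, which is exactly the spiked covariance the attacks use at constant $\alpha$; no averaging over $k$ rescues this bound. You must exploit the many small eigenvalues of $K$, together with the independence of $x=As$ and $K$, which you correctly observe. Two routes work:
\begin{enumerate}
\item Use the paper's Laplace-transform bound (\Cref{lem:lb-fixed-law}): $\Pr[\nu\ge v\mid K]\le e^{a/v}(k/(2a))^{k/2}\sqrt{\det K}$, combined with $\E\det K\le k^{-k}$.
\item Count eigenvalues. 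The complement of $\{\tr K\le2\}$ has probability $e^{-\Omega(k)}$, and on $\{\tr K\le2\}$ at least $\lceil k/2\rceil$ eigenvalues of $K$ are at most $4/k$. For nonsingular $K$, the form $x^\top K^{-1}x$ then dominates $\frac14$ times a $\chi^2_{\lceil k/2\rceil}$ variable, so it exceeds $k/(4e)$ except with probability $e^{-\Omega(k)}$. Singular $K$ gives $\nu=0$ almost surely.
\end{enumerate}
Any rate $e^{-ck}$ suffices for the data-processing step above.

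Your Step~3 matches the paper's argument, with one adjustment. Apply the interval concentration to the truncated indicators $e_i\ind{i<\tau}$, whose conditional means are at most $\delta/2$ for every $i$, and then intersect with $\{\tau>Q_0\}$.
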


The empirical guarantee controls the fraction of errors in every long
enough interval. It does not assert that all answers are correct.
It holds even when the attacker chooses the returned interval after
seeing all the answers, and does not require the queries within that
interval to share a distribution.

\begin{corollary}[Query complexity]
\label{cor:covariance-optimal}
In the setting of \Cref{thm:covariance-lower-bound}, an attack achieving
observed-error success with probability at least $0.9$ for every fixed
matrix and every informed responder requires
$\Omega(k+\alpha^2k^2)$ queries in the worst case. The same lower bound
holds for its worst-case expected query count.
\end{corollary}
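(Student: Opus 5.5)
The bound has two parts. The $\Omega(k)$ term is immediate. The $\Omega(\alpha^2k^2)$ term follows from \Cref{thm:covariance-lower-bound} by an averaging argument over the random Gaussian matrix.

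\emph{The $\Omega(k)$ term.} An observed-error success returns a consecutive interval $J$ of already queried inputs with $|J|\ge\ell k$. Hence any successful run makes at least $\ell k$ queries. This also handles the regimes excluded by \Cref{thm:covariance-lower-bound}. If $\alpha^2k<C$, then $\alpha^2k^2<Ck=O(k)$. If $k<k_0$, everything is $O(1)$. In both cases the claimed bound reduces to $\Omega(k)$, so we may assume $k\ge k_0$ and $\alpha^2k\ge C$.

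\emph{Averaging over the Gaussian matrix.} Suppose an attack in the announced signal--noise family (\Cref{def:covariance-attack}, with \eqref{eq:lb-energy}) succeeds with probability at least $0.9$ for every fixed matrix and every informed responder.
\begin{itemize}
\item The defending responder \eqref{eq:lb-noisy-response}--\eqref{eq:lb-norm-estimate} is a legitimate informed responder in the sense of \Cref{def:interaction}. It uses only $A$, the current announcement (which determines $s,\mu,\Gamma$), the sketch, and fresh private randomness.
\item For each fixed realization of $A$ it is therefore one of the responders the attack must defeat.
\item Averaging the $0.9$ guarantee over $A$ with independent $N(0,1/k)$ entries shows the attack succeeds with probability at least $0.9$ in the joint experiment of \Cref{thm:covariance-lower-bound}.
\end{itemize}

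\emph{Worst-case query count.} Suppose the attack never makes more than $Q_0=\lfloor c_0\alpha^2k^2\rfloor$ queries. Any returned interval is then a consecutive $J\subseteq\{1,\dots,Q_0\}$ with $|J|\ge\ell k$. By \Cref{thm:covariance-lower-bound}, with probability at least $0.8$ every such interval has error fraction below $\delta$. So the success probability is at most $0.2<0.9$, a contradiction. The theorem allows the interval to be chosen after all answers are seen, so the attacker's selection rule does not matter.

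\emph{Expected query count.} Let $\bar Q$ be the worst-case expected query count, taken over fixed matrices. Averaging over the Gaussian $A$, the expected number of queries in the joint experiment is still at most $\bar Q$. Suppose $\bar Q\le Q_0/20$. By Markov's inequality, the attack exceeds $Q_0$ queries with probability at most $0.05$. Consider a success event on which at most $Q_0$ queries were made. Its interval lies in $\{1,\dots,Q_0\}$, so this event has probability at most $0.2$. The total success probability is then at most $0.25<0.9$, again a contradiction. Hence $\bar Q=\Omega(\alpha^2k^2)$.

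\emph{Technical point.} The one step needing care is applying \Cref{thm:covariance-lower-bound} to an attack that may stop early. I would handle this by extending any stopped attack with dummy valid announcements up to $Q_0$ queries, say repeating its last law, which satisfies \eqref{eq:lb-law} and \eqref{eq:lb-energy}. The theorem then applies to the extended attack. The original attack's returned interval is also an interval of the extended run with the same error indicators, so the bounds above carry over.
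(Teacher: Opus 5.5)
Your proof is correct and follows the paper's argument. You average the success guarantee over the Gaussian matrix, pad early-stopping attacks so that \Cref{thm:covariance-lower-bound} applies, bound the probability of success within $Q_0$ queries by $0.2$, and take the $\Omega(k)$ term from $|J|\ge\ell k$; the only cosmetic difference is that the paper reads off $\Pr[Q>Q_0]\ge0.7$, and hence $\E Q\ge0.7Q_0$, directly, whereas you argue by contradiction via Markov's inequality.
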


\begin{proof}
Average the attack's success guarantee over Gaussian $A$. The theorem
bounds the probability of success using at most $Q_0$ queries by $0.2$.
Thus, if $Q$ is the total query count, $\Pr[Q>Q_0]\ge0.7$ and
$\E Q\ge0.7Q_0$. Also $\E Q\ge0.9\ell k$, since a successful execution
returns at least $\ell k$ queries. The average over $A$ lower bounds the
worst case over fixed matrices. An attack that terminates early may be
padded with arbitrary admissible queries for the purpose of applying
the theorem.
\end{proof}

\paragraph{Comparison with our attacks.}
For our centered Gaussian query laws,
$\Var(\|c\,\iota_Mu\|_2^2)=2c^4\tr(\Sig^2)$.
The general attack has
$\tr(\Sig^2)\le\lambda^2+1/(m-1)$, and the sufficient-statistic attack
has $\tr(\Sig^2)\le1/m+14\bar\rho^{\,2}$ on its success event.
Since $m=40k$ and $\lambda,\bar\rho=\Theta(\alpha^2)$, both meet
\eqref{eq:lb-energy} after decreasing the fixed upper accuracy cutoff
and increasing the fixed lower bound on $\alpha^2k$, if necessary.
The sufficient-statistic attack can enforce its covariance bound on
every execution by aborting before a query that would violate it,
without changing its success guarantee. These constant adjustments do
not change either query bound. Consequently, the sufficient-statistic
attack is optimal within this family, and the general attack is optimal
up to its logarithmic factor. The lower bound is witnessed by a
distribution over matrices; it does not claim hardness for every fixed
matrix.

\subsection{Proof ingredients}
\label{sec:lb-proof-ingredients}

\begin{lemma}[A fixed law rarely has large variance]
\label{lem:lb-fixed-law}
For any fixed unit vector $s$ and covariance $\Gamma\succeq0$ with
$\Gamma s=0$ and $\tr\Gamma\le1$, and every $v>0$,
\begin{equation}\label{eq:lb-fixed-tail}
 \Pr_A[\nu\ge v]\le\left(\frac{e}{kv}\right)^{k/2}.
\end{equation}
In particular, $\Pr_A[\nu\ge4e/k]\le2^{-k}$.
\end{lemma}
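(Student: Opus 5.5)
The plan is to reduce to a small-ball estimate for a Gaussian quadratic form, conditional on the part of $A$ that acts on $s^\perp$, and then average over that part using a determinant moment.

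\emph{Step 1: independence.} Since $\Gamma s=0$, we have $K=A\Gamma A^\top=(AP)\Gamma(AP)^\top$, where $P$ is the projector onto $s^\perp$. By rotational invariance of the Gaussian matrix, $x=As\sim N(0,I_k/k)$ is independent of $G:=AP$. Rotating coordinates in $s^\perp$ to diagonalize $\Gamma=\diag(\gamma_1,\dots,\gamma_{n-1})$, with $\gamma_i\ge0$ and $\sum\gamma_i\le1$, we may view $G$ as a $k\times(n-1)$ matrix with i.i.d.\ $N(0,1/k)$ entries.

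\emph{Step 2: the degenerate cases.} If $K$ is singular, its image is a proper subspace determined by $G$. Since $x$ is an independent nondegenerate Gaussian, $x\notin\im(K)$ almost surely, which gives $\nu=0$. Also $x=0$ has probability zero. So it suffices to treat the event that $K$ is invertible. There $\nu^{-1}=x^\top K^{-1}x=\tfrac1k\,g^\top K^{-1}g$ with $g\sim N(0,I_k)$ independent of $K$. Hence
\[
\Pr[\nu\ge v\mid G]=\Pr\bigl[g^\top K^{-1}g\le k/v\mid G\bigr].
\]

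\emph{Step 3: conditional small-ball bound.} Apply the exponential Markov inequality to $-g^\top K^{-1}g$ with parameter $\theta>0$:
\[
\Pr\bigl[g^\top K^{-1}g\le k/v\mid G\bigr]\le e^{\theta k/v}\det(I+2\theta K^{-1})^{-1/2}\le e^{\theta k/v}(2\theta)^{-k/2}(\det K)^{1/2}.
\]
Choosing $\theta=v/2$ gives the bound $(e/v)^{k/2}(\det K)^{1/2}$. This bound also holds trivially, as $0$, when $K$ is singular.

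\emph{Step 4: averaging over $G$.} By Jensen, $\E(\det K)^{1/2}\le(\E\det K)^{1/2}$. The Cauchy--Binet formula gives
\[
\det(G\Gamma G^\top)=\sum_{|S|=k}\det(G_{\bullet S})^2\prod_{i\in S}\gamma_i .
\]
For a $k\times k$ matrix with i.i.d.\ $N(0,1/k)$ entries, $\E\det(G_{\bullet S})^2=k!/k^k$. Moreover $e_k(\gamma)\le(\sum_i\gamma_i)^k/k!\le1/k!$. Together these give $\E\det K\le k^{-k}$. Combining with Step 3,
\[
\Pr_A[\nu\ge v]\le(e/v)^{k/2}k^{-k/2}=\left(\frac{e}{kv}\right)^{k/2}.
\]
Setting $v=4e/k$ yields $4^{-k/2}=2^{-k}$.

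\emph{Main obstacle.} The main obstacle is avoiding lossy estimator choices. The naive bound $\nu\le x^\top Kx/\|x\|_2^4$ fails: for rank-one $\Gamma$, $g_1^2/\chi^2_k\ge c/k$ holds with constant probability. Bounding $\nu$ through $\lambda_{\max}(K)$ via the Frobenius norm loses a factor of $k$. The determinant route avoids both problems. The remaining points to check carefully are the independence in Step 1 and that singular $K$ almost surely gives $\nu=0$ rather than $\nu=\infty$; the latter uses that $x\ne0$ almost surely.
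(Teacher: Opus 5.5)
Your proposal is correct and follows the paper's proof essentially step for step: independence of $x=As$ from $K$, the singular case giving $\nu=0$ almost surely, exponential Markov with the Gaussian determinant formula and the choice $\theta=v/2$ (the paper's $a=kv/2$), then Jensen with $\E\det K\le k^{-k}$. The only difference is how you bound $\E\det K$. You compute it exactly as $k!\,e_k(\gamma)/k^k$ via Cauchy--Binet and $\E\det(G_{\bullet S})^2=k!/k^k$. The paper instead conditions successively on the Gram--Schmidt distances of the rows of $A\Gamma^{1/2}$, and both arguments yield the same bound $(\tr\Gamma)^k/k^k\le k^{-k}$.
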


\begin{proof}
Gaussian rotational invariance and $\Gamma s=0$ imply that
$x=As\sim N(0,I_k/k)$ is independent of $K=A\Gamma A^\top$.
If $K$ is singular, then $x\notin\im(K)$ almost surely, so $\nu=0$.
For $K\succ0$ and $a>0$, the Gaussian quadratic-form identity gives
\begin{align*}
 \Pr[\nu\ge v\mid K]
 &\le e^{a/v}\E[e^{-a x^\top K^{-1}x}\mid K]\\
 &=e^{a/v}\det\!\left(I_k+\frac{2a}{k}K^{-1}\right)^{-1/2}\\
 &\le e^{a/v}\left(\frac{k}{2a}\right)^{k/2}\sqrt{\det K}.
\end{align*}
Moreover, $\E\det K\le k^{-k}$. To see this, the rows of
$A\Gamma^{1/2}$ are independent Gaussian vectors with covariance
$\Gamma/k$. Gram--Schmidt expresses their Gram determinant as the
product of their squared distances from the preceding row spans.
Each distance has conditional expectation at most
$\tr\Gamma/k\le1/k$, so successive conditioning gives the bound.
Thus $\E\sqrt{\det K}\le k^{-k/2}$. Averaging the preceding display
and taking $a=kv/2$ proves \eqref{eq:lb-fixed-tail}.
\end{proof}

The next lemma converts this small prior probability into a bound for
adaptively selected laws. It conditions on the attacker's complete fresh
draw, not just its query.
All logarithms below are natural.

\begin{lemma}[Information needed to reach a rare query law]
\label{lem:lb-first-passage}
For each announcement $\mathcal D$, let
$\mathsf{Bad}(A,\mathcal D)$ be an event with probability at most $q$
under the matrix prior, for every fixed $\mathcal D$. At query $i$, let
$Z_i$ denote the complete fresh draw and suppose the observation is
\[
 F_i(A,Z_i)+\xi_i,\qquad \xi_i\sim N(0,\sigma^2),
\]
with independent fresh $\xi_i$. The functions below may depend on the
announcement. Suppose an ideal response $f_i(Z_i)$ does not otherwise
depend on $A$, and that whenever the current announcement is not bad,
\begin{equation}\label{eq:lb-comparison-mse}
 \E[(F_i(A,Z_i)-f_i(Z_i))^2
       \mid A,\text{past and announcement}]\le b.
\end{equation}
Let $\tau$ be the first query whose announcement is bad. For every
horizon $N$ with $0<Nq<1$,
\begin{equation}\label{eq:lb-first-passage}
 \Pr[\tau\le N]
 \le\frac{Nb/(2\sigma^2)+\log2}{\log(1/(Nq))}.
\end{equation}
\end{lemma}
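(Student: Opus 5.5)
The plan is a change-of-measure argument comparing the real interaction to an ideal one in which observations do not depend on $A$. Define the ideal interaction: same attacker, same fresh draws $Z_i$, same output noise, but query $i$ is answered with $f_i(Z_i)+\xi_i$ instead of $F_i(A,Z_i)+\xi_i$. In the ideal interaction the attacker's entire execution is independent of $A$, since each announcement is measurable in the attacker's randomness, the fresh draws, and the ideal observations. Hence each announcement $\mathcal D_i$ is independent of $A$, and $\Pr_{\rm ideal}[\mathsf{Bad}(A,\mathcal D_i)]\le q$ by the hypothesis applied to the fixed $\mathcal D$ after conditioning on the execution. A union bound gives $\Pr_{\rm ideal}[\tau\le N]\le Nq$.

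Next, bound the KL divergence between the two joint laws of $(A,\text{execution})$ truncated at $\tau\wedge N$. To avoid any contribution from bad announcements, I will use the stopped process: after the first bad announcement (or after query $N$) both interactions are continued identically, say with the ideal response, so the laws differ only at queries $i\le N$ with $i\le\tau-1$. Hence only non-bad announcements contribute. By the chain rule, and since $A$ and the fresh draws have the same law in both interactions, the divergence is the sum over $i$ of the expected conditional divergences of the observation, which are Gaussian with common variance $\sigma^2$ and means $F_i$ versus $f_i$. Each conditional term is $(F_i-f_i)^2/(2\sigma^2)$, and taking expectation given $A$, the past and the announcement gives at most $b/(2\sigma^2)$ by \eqref{eq:lb-comparison-mse}. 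This holds whenever the announcement is not bad, and the bad-or-stopped steps contribute zero. Thus $\mathrm{KL}\le Nb/(2\sigma^2)$. Care is needed here: the event $\{\tau\ge i\}$ is determined by $A$ and the history up to the announcement, so it is measurable at the right time for the chain rule. This is the key point, and the step I expect to be the main obstacle.

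Finally, convert the KL bound into a bound on the probability of the event $E=\{\tau\le N\}$. This event is determined by the stopped process, since whether a bad announcement occurs by time $N$ is unaffected by the modification after $\tau$. Apply the data-processing inequality to the indicator of $E$: with $p=\Pr_{\rm real}[E]$ and $p'=\Pr_{\rm ideal}[E]\le Nq$, we get $\mathrm{KL}\ge p\log(p/p')+(1-p)\log((1-p)/(1-p'))\ge p\log(1/p')-\log 2$, using the binary-entropy bound $h(p)\le\log2$ and $\log(1/(1-p'))\ge0$. Rearranging gives $p\le(\mathrm{KL}+\log2)/\log(1/(Nq))$, which is \eqref{eq:lb-first-passage}.
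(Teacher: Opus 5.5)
Your proposal is correct and follows essentially the same route as the paper. It compares the real interaction, switched to ideal responses from the first bad announcement onward, with the fully ideal interaction, under which the execution is independent of $A$ and a union bound gives crossing probability at most $Nq$; it bounds their divergence by $Nb/(2\sigma^2)$ through the Gaussian chain rule with only non-bad steps contributing; and it concludes by data processing on the indicator of $\{\tau\le N\}$ using $h(p)\le\log 2$. The only detail the paper makes explicit that you leave implicit is the degenerate case where the ideal crossing probability is zero, which follows by absolute continuity since the divergence is finite.
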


\begin{proof}
Let $\mathbb P$ be the joint law of $A$ and the attacker's execution
when the responses are real before $\tau$ and ideal,
$f_i(Z_i)+\xi_i$, from query $\tau$ onward. Let $\mathbb Q$ use ideal
responses throughout. These modified interactions are used only in
the proof. The real interaction and $\mathbb P$ agree through the
announcement at the first bad query, and hence have the same crossing
probability. Under $\mathbb Q$, the attacker's execution is independent
of $A$. A union bound therefore gives
$\mathbb Q(\tau\le N)\le Nq$.

The two interactions use the same matrix prior, the same adaptive
announcement rule, and the same fresh-draw kernels. Conditional on $A$,
the past, and $Z_i$, their observation kernels are Gaussians with common
variance $\sigma^2$. The KL chain rule gives
\begin{align*}
 D_{\rm KL}(\mathbb P\|\mathbb Q)
 &=\frac1{2\sigma^2}\sum_{i=1}^N
   \E_{\mathbb P}\!\left[
    \ind{i<\tau}(F_i(A,Z_i)-f_i(Z_i))^2\right]\\
 &\le\frac{Nb}{2\sigma^2}.
\end{align*}
The event $i<\tau$ is determined before the current fresh draw, so
\eqref{eq:lb-comparison-mse} applies. For
$p=\mathbb P(\tau\le N)$, data processing for the indicator of this
event gives
\[
 D_{\rm KL}(\mathbb P\|\mathbb Q)
 \ge p\log\frac1{Nq}-\log2.
\]
Combining the bounds proves the claim. If the event has zero
$\mathbb Q$-probability, it also has zero $\mathbb P$-probability by
absolute continuity.
\end{proof}

\subsection{Proof of the lower bound}
\label{sec:lb-proof}

\begin{proof}[Proof of \Cref{thm:covariance-lower-bound}]
\textbf{Finding a law with large variance takes many queries.}
Call an announcement bad when $\nu_i\ge4e/k$, including $As_i=0$.
By \Cref{lem:lb-fixed-law}, its probability is at most $2^{-k}$ for
every fixed announcement. Apply \Cref{lem:lb-first-passage} with
\[
 F_i=T_i,\qquad f_i=w_i,\qquad b=4e/k,\qquad q=2^{-k}.
\]
The mean-square bound is \eqref{eq:lb-error}. If
$\log N\le(k\log2)/2$, then
\begin{equation}\label{eq:lb-crossing}
 \Pr[\tau\le N]
 \le\frac{4e}{\log2}\frac{N}{\sigma^2k^2}+\frac2k.
\end{equation}
With $N=Q_0$, $\sigma=\alpha/C_1$, and $c_0$ sufficiently small
relative to $C_1$, this is at most $0.1$ for all sufficiently large $k$.
The logarithmic restriction holds because $Q_0\le c_0k^2$.

\smallskip\noindent
\textbf{Below this variance, the responder is accurate.}
Put $\varepsilon=\delta/4$. Conditional on $A$, the past, and any
announcement with $\nu_i<4e/k$,
\begin{equation}\label{eq:lb-response-mse}
 \E[(o_i-w_i)^2]\le4e/k+\sigma^2.
\end{equation}
Choose constants $a,d>0$ depending only on the norm task, with $a\le1$.
Taking $C,C_1$ sufficiently large ensures
\[
 \Pr[|o_i-w_i|>a\alpha]\le\varepsilon/2
\]
by \eqref{eq:lb-response-mse} and $\alpha^2k\ge C$.
On the complementary event, $w_i\le2$ and $\alpha\le1$ give
$|o_i^2-w_i^2|\le5a\alpha$. Also, \eqref{eq:lb-energy} gives
\[
 \Pr\!\left[
  \left|\|u_i\|_2^2-\tr\Gamma_i-\|\mu_i\|_2^2\right|>d\alpha
 \right]\le c_{\rm ad}/d^2\le\varepsilon/2,
\]
after choosing $c_{\rm ad}$ sufficiently small. The probabilities in
these displays have the same conditioning as \eqref{eq:lb-response-mse}.
Since $u_i\perp s_i$, outside the two exceptional events,
\begin{equation}\label{eq:lb-squared-error}
 \left|\widehat r_i^{\,2}-\|v_i\|_2^2\right|
 \le(5a+d)\alpha.
\end{equation}

For the norm gap take $a=1/40$ and $d=1/8$. The squared-norm error is
at most $\alpha/4$, less than the $\alpha/2$ margin between the decision
threshold $(L^2+U^2)/2$ and either endpoint. For relative norm estimation,
take $a=\min\{1,\kappa/40\}$ and $d=\min\{1/8,\kappa/8\}$.
Then $5a+d\le\kappa/4$. Since $\|v_i\|_2^2\ge w_i^2\ge1/4$,
\[
 \frac{|\widehat r_i-\|v_i\|_2|}{\|v_i\|_2}
 \le\frac{|\widehat r_i^{\,2}-\|v_i\|_2^2|}{\|v_i\|_2^2}
 \le\kappa\alpha.
\]
Thus the conditional task-error probability is at most $\varepsilon$
whenever $\nu_i<4e/k$.

\smallskip\noindent
\textbf{No long interval has many errors.}
Define $X_i=e_i\ind{i<\tau}$. Its conditional mean given $A$ and the
past is at most $\varepsilon$, averaging also over the next announcement
if it is randomized. For every $\theta>0$,
\[
 \E[e^{\theta X_i}\mid A,\text{past}]
 \le1-\varepsilon+\varepsilon e^\theta.
\]
Iterating this bound and optimizing over $\theta$ gives, for every
fixed consecutive interval $J$,
\[
 \Pr\!\left[\sum_{i\in J}X_i\ge\delta|J|\right]
 \le\exp\{-|J|\operatorname{kl}(\delta\|\varepsilon)\},
\]
where $\operatorname{kl}(p\|q)=p\log(p/q)+(1-p)\log((1-p)/(1-q))$.
A union bound over at most $Q_0^2$ intervals of length at least $\ell k$
bounds their total failure probability by
\[
 Q_0^2\exp\{-\ell k\operatorname{kl}(\delta\|\varepsilon)\}\le0.1
\]
for sufficiently large $k$. On $\tau>Q_0$, $X_i=e_i$ throughout these
queries. Combining this estimate with \eqref{eq:lb-crossing} proves
\eqref{eq:lb-empirical-guarantee}.
\end{proof}

\section{Proof of the gain lemma}
\label{app:gain-proof}

We restate the lemma for reference.  The proof uses the scalar Gaussian
channel and integration by parts against the signal density.
\gainrestatetrue
\gainlemma*
\gainrestatefalse

\paragraph{Signal density.}
Recall $\ell=10\alpha/c$ from \Cref{def:signal-density}, and put
\[
 a=1-\ell,\qquad
 b=1+\alpha+\ell,\qquad C_W=(\ell+\alpha)^{-1}.
\]
Convolving the two uniform densities in \Cref{def:signal-density} gives
\begin{equation}\label{eq:signal-density}
\nu(w)=
\begin{cases}
0, & w<a\text{ or }w>b,\\
C_W(w-a)/\ell, & a\le w\le1,\\
C_W, & 1<w<1+\alpha,\\
C_W(b-w)/\ell, & 1+\alpha\le w\le b.
\end{cases}
\end{equation}

\begin{figure}[htbp]
\centering
\begin{tikzpicture}[x=1.45cm,y=1.15cm,font=\small]
  \fill[black!13] (0.6,0) -- (3,1.65) -- (3,0) -- cycle;
  \fill[black!4] (3,0) rectangle (4,1.65);
  \fill[black!13] (4,0) -- (4,1.65) -- (6.4,0) -- cycle;
  \draw[densely dotted,black!55] (0,1.65) -- (3,1.65);
  \draw[dashed,black!55] (3,0) -- (3,1.65);
  \draw[dashed,black!55] (4,0) -- (4,1.65);
  \draw[thick] (0.25,0) -- (0.6,0) -- (3,1.65) -- (4,1.65)
    -- (6.4,0) -- (6.75,0);
  \draw[->] (0,-0.06) -- (0,2.12) node[above] {$\nu(w)$};
  \draw[->] (0,0) -- (7,0) node[right] {$w$};
  \node[left] at (0,1.65) {$C_W$};
  \foreach \x/\lab in {0.6/a,3/1,4/{1+\alpha},6.4/b} {
    \draw (\x,0.045) -- (\x,-0.045);
    \node[below] at (\x,-0.05) {$\lab$};
  }
  \node at (1.8,1.94) {label $-1$};
  \node at (3.5,1.94) {no loss};
  \node at (5.2,1.94) {label $+1$};
  \draw[<->] (0.6,-0.56) -- (3,-0.56)
    node[midway,fill=white,inner sep=2pt] {$\ell$};
  \draw[<->] (3,-0.56) -- (4,-0.56)
    node[midway,fill=white,inner sep=2pt] {$\alpha$};
  \draw[<->] (4,-0.56) -- (6.4,-0.56)
    node[midway,fill=white,inner sep=2pt] {$\ell$};
\end{tikzpicture}
\caption{The signal density and the two labeled regimes.  Each ramp has
width $\ell=10\alpha/c$; the plateau is the unlabeled gap of width
$\alpha$.  The gap is widened schematically for legibility; horizontal
widths are not to scale.}
\label{fig:signal-density}
\end{figure}

\begin{proof}
Fix $\Sig$ and $\oracle$ as in the statement, and write
$\delta:=\loss_\Sig(\oracle)\le\delta_0$.

\paragraph{Reduction to a scalar channel.}
By \eqref{eq:estimator-noise},
\[
 T_\Sigma(y)=w+X,\qquad
 X\sim N(0,s^2),\qquad s^2=c^2\Vinfo(\Sig)>0,
\]
with $X$ independent of $w$.
Sufficiency of $T_\Sigma$ gives a measurable function
\begin{equation}\label{eq:Psi-def}
 \Psi(t):=\E[\oracle(y)\mid T_\Sigma(y)=t],\qquad |\Psi(t)|\le1,
\end{equation}
such that
$\E[\oracle(y)\mid T_\Sigma(y),w]=\Psi(T_\Sigma(y))$.
Consequently, replacing the response by $\Psi(w+X)$ preserves both its
signal-gap loss and its gain:
\begin{equation}\label{eq:scalar-gain-start}
 G:=\Gamma_\oracle(\Sig)=\E[\Psi(w+X)X].
\end{equation}
This replacement is only for the proof; the responder may use the full
sketch.

\paragraph{An exact formula for the gain.}
Let $\phi_s$ be the $N(0,s^2)$ density, and define the mean response at
signal value $w$ by
\[
 f(w):=\E_X[\Psi(w+X)]
      =\int_\R\Psi(t)\phi_s(t-w)\,dt.
\]
Gaussian convolution makes $f$ continuously differentiable, even when
$\Psi$ is only measurable.  Differentiating the density gives
\begin{equation}\label{eq:gain-gaussian-derivative}
 f'(w)
 =\int_\R\Psi(t)\frac{t-w}{s^2}\phi_s(t-w)\,dt
 =\frac1{s^2}\E_X[\Psi(w+X)X].
\end{equation}
Thus $G=s^2\int_\R\nu(w)f'(w)\,dw$.
The trapezoidal density $\nu$ is continuous, vanishes at $a,b$, and has
derivative $C_W/\ell$ on $[a,1]$, $-C_W/\ell$ on
$[1+\alpha,b]$, and zero elsewhere, apart from its breakpoints.
Integration by parts therefore yields the exact identity
\begin{equation}\label{eq:gain-ramp-identity}
 G=-s^2\int_\R\nu'(w)f(w)\,dw
  =\frac{C_Ws^2}{\ell}
    \left(\int_{1+\alpha}^b f(w)\,dw-\int_a^1 f(w)\,dw\right).
\end{equation}
This is the role of the ramps: the gain depends on the difference between
the average responses in the two labeled regimes.
Continuity of $\nu$ eliminates boundary terms at the internal breakpoints.

\paragraph{Using the loss bound.}
Put $H_-(w):=(1+f(w))/2$ and $H_+(w):=(1-f(w))/2$, both in $[0,1]$.
The loss condition is
\begin{equation}\label{eq:error-Psi}
 \int_a^1\nu(w)H_-(w)\,dw
 +\int_{1+\alpha}^b\nu(w)H_+(w)\,dw
 =\delta.
\end{equation}
On $[a+\ell/4,1]$ and $[1+\alpha,b-\ell/4]$, the density is at least
$C_W/4$.  The two omitted intervals have total length $\ell/2$.
Using \eqref{eq:error-Psi} on the former intervals and $H_\pm\le1$ on
the latter gives
\[
 \int_a^1H_-(w)\,dw+\int_{1+\alpha}^bH_+(w)\,dw
 \le\frac\ell2+\frac{4\delta}{C_W}.
\]
Since each ramp has length $\ell$, it follows that
\[
 \int_{1+\alpha}^b f(w)\,dw-\int_a^1 f(w)\,dw
 =2\ell-2\left(\int_a^1H_-(w)\,dw
                  +\int_{1+\alpha}^bH_+(w)\,dw\right)
 \ge\ell-\frac{8\delta}{C_W}
 \ge\frac\ell2.
\]
The last inequality uses $0<c\le1$, which gives
$C_W\ell=10/(10+c)\ge10/11$, and
$\delta\le\delta_0=1/20\le C_W\ell/16$.
Substituting into \eqref{eq:gain-ramp-identity} and using
$C_W=c/((10+c)\alpha)\ge c/(11\alpha)$, we obtain
\[
 G\ge\frac{C_Ws^2}{2}
   \ge\frac{c\,s^2}{22\alpha}
   =c_g\frac{c^3\Vinfo(\Sig)}{\alpha}
   =c\Vinfo(\Sig)g,
\]
where $s^2=c^2\Vinfo(\Sig)$ and $c_g=1/22$.
\end{proof}

\section{Universal hardness and the isotropic endpoint}
\label{app:intrinsic-hardness}

We first identify the Bayes-optimal signal-gap response, justifying
the sufficient-statistic responder class. Its risk gives a direct
hardness bound whenever the scalar-channel standard deviation is a
fixed positive multiple of the signal gap. For the high-accuracy
isotropic round, a direct norm-hardness argument complements this
signal bound, avoiding any change of signal width or norm thresholds.
The covariance-update analyses use the separate hardness threshold at
loss level $\delta_0$ from \Cref{cor:gain-hardness}.

\subsection{Optimal signal-gap prediction}

\begin{proposition}[The Bayes signal-gap response]
\label{prop:sufficient-bayes}
Let $\alpha,c>0$ and $0<\Vinfo(\Sig)<\infty$. For the signal law
$W_{\alpha,c}$, the signal-gap risk has a unique minimizing response
rule, up to null sets:
\[
 \oracle_{\rm Bayes}(y)=
 \begin{cases}
 -1,&T_\Sig(y)<1+\alpha/2,\\
 +1,&T_\Sig(y)>1+\alpha/2.
 \end{cases}
\]
The response at equality is immaterial.  In particular, every exactly
loss-optimal signal-gap responder satisfies
\Cref{def:phasewise-sufficient} whenever the scalar channel has
positive finite variance.
\end{proposition}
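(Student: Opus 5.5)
The plan is to reduce to the scalar channel and then do a pointwise posterior-risk comparison. By sufficiency of $T_\Sig$ (\Cref{sec:scalar-channel}), every rule $\oracle$ may be replaced by $\Psi(t)=\E[\oracle(y)\mid T_\Sig(y)=t]$ without changing its signal-gap loss, since the loss \eqref{eq:signal-gap-loss-function} is affine in $b$. The loss depends on the sketch only through $(w,T)$ and the residual $r$ is independent of $(w,T)$, so the risk of any rule is
\[
 \E\Bigl[\tfrac{1+\Psi(T)}2\,p_-(T)+\tfrac{1-\Psi(T)}2\,p_+(T)\Bigr].
\]
Here $T=w+X$ with $X\sim N(0,s^2)$, $s^2=c^2\Vinfo(\Sig)\in(0,\infty)$, and $p_-(t)=\Pr[w\le1\mid T=t]$, $p_+(t)=\Pr[w\ge1+\alpha\mid T=t]$. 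This expression is affine in $\Psi(t)$ for each $t$. It is minimized uniquely by $\Psi=-1$ where $p_->p_+$ and $\Psi=+1$ where $p_+>p_-$. Any rule whose conditional mean differs from this on a set of positive $T$-probability (outside the tie set) has strictly larger risk. For uniqueness of the full rule $\oracle$, and not just of $\Psi$, note that $\Psi(t)=\pm1$ forces $\oracle(y)=\pm1$ almost surely on that fiber, because $|\oracle|\le1$.

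It remains to identify the region $\{p_+>p_-\}$ as $\{t>1+\alpha/2\}$ and to show the tie set is null. The difference is
\[
 p_+(t)-p_-(t)\propto D(t):=\int_{1+\alpha}^b\nu(w)\phi_s(t-w)\,dw-\int_a^1\nu(w)\phi_s(t-w)\,dw .
\]
The density \eqref{eq:signal-density} is symmetric about $1+\alpha/2$: the reflection $w\mapsto 2+\alpha-w$ swaps the two ramps, since $a+b=2+\alpha$. Substituting this reflection in the second integral gives
\[
 D(t)=\int_{1+\alpha}^b\nu(w)\bigl[\phi_s(t-w)-\phi_s(t-(2+\alpha-w))\bigr]dw .
\]
For $w>1+\alpha/2$, $t$ is closer to $w$ than to its reflection $2+\alpha-w$ exactly when $t>1+\alpha/2$. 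Hence the bracket is strictly positive for $t>1+\alpha/2$ and strictly negative for $t<1+\alpha/2$, throughout the ramp where $\nu>0$. So $D(t)$ has the sign of $t-(1+\alpha/2)$, and the tie set is the single point $\{1+\alpha/2\}$, which is null because $T$ has a density.

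Finally, $\oracle_{\rm Bayes}$ is a function of $T_\Sig$. Any exactly loss-optimal responder therefore has averaged rule equal to it almost surely, which is \Cref{def:phasewise-sufficient}. The main point needing care is the uniqueness bookkeeping: the reduction from $\oracle$ to $\Psi$, and the passage from the pointwise argument to "up to null sets". The symmetry computation itself is routine.
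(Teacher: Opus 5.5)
Your proposal is correct and follows essentially the same route as the paper. Both arguments use sufficiency of $T_\Sig$, a pointwise affine minimization of the conditional loss, and the reflection $w\mapsto 2+\alpha-w$ about $1+\alpha/2$ to fix the sign of the difference of the labeled posterior masses, with the tie at $t=1+\alpha/2$ null because $T_\Sig$ has a density. The only difference is bookkeeping: the paper conditions directly on the sketch $y$, whose posterior depends only on $T_\Sig(y)$, so affinity in $\oracle(y)$ gives uniqueness of the full rule at once, whereas you pass through $\Psi$ and recover uniqueness of $\oracle$ from the extremality of $\pm1$; both are valid.
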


\begin{proof}
Write $d=1+\alpha/2$ and $s^2=c^2\Vinfo(\Sig)>0$.
By sufficiency, the posterior law of $w$ depends on the sketch only
through $T=T_\Sig(y)$. Write $\nu$ for the signal density
\eqref{eq:signal-density}, with support endpoints
$a=1-10\alpha/c$ and $b=1+\alpha+10\alpha/c$.
Let $\phi_s$ be the $N(0,s^2)$ density and put
\[
 f_-(t)=\int_a^1\nu(w)\phi_s(t-w)\,dw,
 \qquad
 f_+(t)=\int_{1+\alpha}^b\nu(w)\phi_s(t-w)\,dw.
\]
These are the unnormalized posterior masses of the two labeled regimes.
The density $\nu$ is symmetric about $d$.  Pairing $w\le1$ with
$2d-w\ge1+\alpha$ shows that $f_+(t)>f_-(t)$ when $t>d$:
each paired high signal is closer to $t$ than its low counterpart.
The inequality reverses when $t<d$.  The conditional loss is affine
in the response, with coefficient proportional to $f_-(t)-f_+(t)$,
so its unique minimizer is the stated endpoint of $[-1,1]$.
Finally, $T$ has a continuous density, so $T=d$ is a null event.
\end{proof}

\subsection{A direct hardness bound}

\begin{lemma}[Large signal-estimation variance implies signal-gap hardness]
\label{lem:large-variance-hardness}
Fix constants $0<c\le5$ and $0<\tau\le1$, and put
\begin{equation}\label{eq:delta-hard}
  \delta_{\rm hard}(c,\tau)
  :=\frac{c\tau}{10+c}\,\Phi\!\left(-1-\frac1{2\tau}\right)>0,
\end{equation}
where $\Phi$ is the standard normal distribution function.  If
\begin{equation}\label{eq:large-variance-hardness}
 c\sqrt{\Vinfo(\Sig)}\ge\tau\alpha,
\end{equation}
then every measurable response rule $\oracle:\R^k\to[-1,1]$ satisfies
$\loss_\Sig(\oracle)\ge\delta_{\rm hard}(c,\tau)$; hence
$\mathcal D_\Sig$ is universally hard for the signal gap.
\end{lemma}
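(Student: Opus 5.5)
The plan is to reduce to the scalar channel, as in the proof of \Cref{lem:gain}, and then lower bound the Bayes risk from \Cref{prop:sufficient-bayes}. By sufficiency, every rule $\oracle$ can be replaced by $\Psi(T_\Sig(y))$ with $T_\Sig(y)=w+X$ and $X\sim N(0,s^2)$, where $s^2=c^2\Vinfo(\Sig)\ge\tau^2\alpha^2$. This replacement preserves the loss. Since the conditional loss is affine in the response, it suffices to bound the risk of the threshold rule at $d=1+\alpha/2$, because \Cref{prop:sufficient-bayes} identifies this rule as optimal. The resulting risk is
\[
 \int_a^1\nu(w)\Pr[w+X>d]\,dw+\int_{1+\alpha}^b\nu(w)\Pr[w+X<d]\,dw .
\]

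Next I would restrict to signals near the gap. On the top portion $[1-\tau\alpha,1]$ of the lower ramp, which lies inside $[a,1]$ since $\ell=10\alpha/c\ge\tau\alpha$ when $c\le5$ and $\tau\le1$, the distance to the threshold is at most $\tau\alpha+\alpha/2$. The error probability there is therefore at least $\Pr[X>\tau\alpha+\alpha/2]=\Phi(-(\tau\alpha+\alpha/2)/s)$. Using $s\ge\tau\alpha$, this is at least $\Phi(-1-1/(2\tau))$. By symmetry, the same bound holds on $[1+\alpha,1+\alpha+\tau\alpha]$.

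Then I would bound the mass of these two intervals under $\nu$. The density on the lower ramp is $C_W(w-a)/\ell$. Near $w=1$ it is close to $C_W=c/((10+c)\alpha)$, and on $[1-\tau\alpha,1]$ it is at least $C_W(\ell-\tau\alpha)/\ell$. To avoid this loss factor I would instead use the linear density exactly: the integral over the top length $\tau\alpha$ of the ramp is $C_W\tau\alpha(1-\tau\alpha/(2\ell))\ge C_W\tau\alpha/2$, since $\tau\alpha\le\ell$. Each of the two intervals therefore has mass at least $c\tau/(2(10+c))$, for a total of at least $c\tau/(10+c)$. Multiplying by the Gaussian error bound gives $\delta_{\rm hard}$.

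The only delicate step is matching the constant exactly: verifying $\tau\alpha\le\ell$ from $c\le5$ and $\tau\le1$, and confirming that the factor-$\tfrac12$ loss from the ramp is offset by counting both ramps. Everything else follows directly from sufficiency and Gaussian monotonicity.
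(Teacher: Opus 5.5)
Your argument is essentially the paper's proof. You reduce to the scalar channel, use \Cref{prop:sufficient-bayes} to pass to the threshold rule at $1+\alpha/2$, and restrict to the top $\tau\alpha$ of each ramp, where the error probability is at least $\Phi(-1-1/(2\tau))$. The only difference is cosmetic: you integrate the linear ramp exactly, which needs only $\tau\alpha\le\ell$, while the paper uses the pointwise bound $\nu(1-x)\ge C_W/2$ from $\tau\alpha/\ell=c\tau/10\le1/2$. Both give total mass $C_W\tau\alpha=c\tau/(10+c)$.

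One case is missing: $\Vinfo(\Sig)=+\infty$, equivalently $\xcol=0$. The hypothesis allows this case, but neither your Gaussian channel $X\sim N(0,s^2)$ nor \Cref{prop:sufficient-bayes} applies there. In that case the sketch is independent of $w$, so every rule has loss exactly $5/(10+c)\ge\delta_{\rm hard}(c,\tau)$, as the paper notes in one line.
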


\begin{proof}
Let $\nu$, $\ell=10\alpha/c$, and $C_W=1/(\ell+\alpha)$ be as in
\eqref{eq:signal-density}. The two labeled regimes each have probability
$p_c=C_W\ell/2=5/(10+c)$.
If $\Vinfo(\Sig)=+\infty$, the sketch is independent of the signal,
so every rule has loss $p_c\ge\delta_{\rm hard}(c,\tau)$.
Otherwise put $\sigma=c\sqrt{\Vinfo(\Sig)}\ge\tau\alpha>0$.
By \Cref{prop:sufficient-bayes}, thresholding $T_\Sig(y)$ at
$1+\alpha/2$ minimizes the loss. Symmetry of the signal density about
this threshold gives the minimum risk
\[
 R^\star
 =2\int_0^\ell\nu(1-x)
      \Phi\!\left(-\frac{\alpha/2+x}{\sigma}\right)\,dx.
\]
Restrict the integral to $0\le x\le s:=\tau\alpha$.
Here $s/\ell=c\tau/10\le1/2$, so
$\nu(1-x)=C_W(1-x/\ell)\ge C_W/2$. Also $\sigma\ge s$ gives
\[
 \Phi\!\left(-\frac{\alpha/2+x}{\sigma}\right)
 \ge\Phi\!\left(-1-\frac1{2\tau}\right).
\]
Consequently every response rule has loss at least
\[
 R^\star
 \ge C_W\tau\alpha\,
       \Phi\!\left(-1-\frac1{2\tau}\right)
 =\delta_{\rm hard}(c,\tau).
\]
\end{proof}

\subsection{The very high accuracy regime}

For $\alpha=O(k^{-1/2})$, the initial isotropic query distribution
is already universally hard for both the signal-gap and norm tasks,
with high constant probability over random-coordinate initialization.
Thus the attack can return its first round, without any covariance
updates.

\begin{corollary}[An isotropic round at very high accuracy]
\label{cor:isotropic-small-accuracy}
Fix constants $0<c\le1$ and $C\ge1$. For every
$A\in\R^{k\times n}$ with $n>40k$ and
$0<\alpha\le\min\{c/20,C/\sqrt{k}\}$, use random-coordinate
initialization on any prescribed $40k+1$ coordinates and run the
first round with $m=40k$, $\Sig=I_m/m$, and $W=W_{\alpha,c}$.
Keep the thresholds and response conversion of \Cref{sec:norm2signal}.
For a sufficiently large round length $r=\Theta(k)$, returning this
round gives the following guarantee against every informed responder.

There is $\delta_{\rm iso}=\delta_{\rm iso}(c,C)>0$ such that,
with probability at least $0.9$, both the empirical signal-gap loss
and the fraction of norm-task errors are at least $\delta_{\rm iso}/2$.
For norm estimation, an error means relative error greater than
$\alpha/16$. On the same event, the distribution is universally hard
for the signal-gap task and the chosen norm task, with expected loss
at least $\delta_{\rm iso}$.
\end{corollary}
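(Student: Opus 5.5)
The plan is to split on the outcome of random-coordinate initialization and, in each case, to lower-bound the Bayes risk of the isotropic law for both tasks. Then the usual validation and concentration step turns expected loss into observed errors within one round. By \Cref{cor:random} with $m=40k$, with probability at least $0.95$ either $\xcol=0$ or $\Vinfo(I_m/m)\ge1/(40k)$. If $\xcol=0$, every signal-gap rule has loss $p_c=5/(10+c)$. Otherwise $c\sqrt{\Vinfo(I_m/m)}\ge c/\sqrt{40k}\ge\tau\alpha$ with $\tau:=c/(\sqrt{40}\,C)\le1$, because $\alpha\le C/\sqrt k$. \Cref{lem:large-variance-hardness} then gives signal-gap loss at least $\delta_{\rm hard}(c,\tau)$ for every rule. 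Either way, the isotropic law is universally hard for the signal gap at a level $\delta_1(c,C)>0$.

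Next this hardness has to be transferred to the norm tasks. The obstacle is that the isotropic covariance need not be norm-admissible in the sense of \Cref{lem:norm-to-gap}. We have $\tr(\Sig^2)=1/m$, so $\Var(c^2\|u\|_2^2)=2c^4/m$, and this is comparable to $\alpha^2$ rather than small when $\alpha\approx C/\sqrt k$. So a direct argument is needed. Write $\|v\|_2^2=w^2+c^2\|u\|_2^2$. The key point is that, conditional on the sketch, the responder cannot resolve $w$ to within $O(\alpha)$: the posterior of $w$ given $T_\Sig(y)$ has spread $\gtrsim\tau\alpha$ on each labeled ramp, with constant probability. For the norm quantity I would argue as follows. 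The component of $u$ that is invisible to the sketch, namely the part of $u$ in $\ker\Bmat$, has dimension at least $m-k=39k$, and its squared norm is a $\chi^2$-type variable that is independent of the sketch given $T$. Its conditional fluctuation only \emph{adds} spread to $\|v\|_2^2$ given $y$; it cannot cancel the spread coming from $w$. Concretely, I would lower-bound the probability of $|\widehat r^{\,2}-\|v\|_2^2|>\alpha/4$ by conditioning on $y$. Conditional on $y$, the variable $w^2+c^2\|u\|_2^2$ is a sum of the posterior part from $w$, which has anti-concentration of width about $\tau\alpha$, and an independent part from $u$. Anti-concentration of a sum with one anti-concentrated independent summand is at least that of the summand. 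This would be done via a Lévy concentration-function inequality, using the trapezoidal posterior bound as in the proof of \Cref{lem:large-variance-hardness}. This yields norm-estimation error probability at least some $\delta_2(c,C)>0$ at relative tolerance $\alpha/16$, since squared-norm error $\alpha/4$ at norm $\Theta(1)$ is relative error $\gtrsim\alpha/16$ after choosing constants. For the norm gap, the same bound shows that with constant probability $\|v\|_2^2$ lies, with conditionally comparable masses, both below $L^2$ and above $U^2$; here the gap $U^2-L^2=\alpha$ is at most a constant multiple of the anti-concentration width. Set $\delta_{\rm iso}=\min\{\delta_1,\delta_2\}$. This is the step I expect to be the main obstacle, particularly making the anti-concentration of the $w$-posterior uniform in $y$ over a set of constant probability.

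Finally, the round is handled as in \Cref{lem:adaptive-phase-validation} and \Cref{cor:empirical-norm-transfer}, but using these direct hardness bounds. Every conditional expected loss, given $\mathcal F_{j-1}$, is at least $\delta_{\rm iso}$ for both the signal-gap loss and the norm-task loss. By Azuma--Hoeffding with $r=\Theta(k)$, and more precisely $r\ge C'\delta_{\rm iso}^{-2}$, each empirical average is at least $\delta_{\rm iso}/2$ except with probability $0.02$. A union bound with the initialization failure probability $0.05$ gives probability at least $0.9$. The computational claims are immediate for a single isotropic round.
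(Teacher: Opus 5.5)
Your signal-gap step matches the paper. The transfer to the norm tasks, however, has a genuine gap.

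First, the decomposition you use is not right as stated. Conditional on $y$, the sketch-visible part of the noise is a deterministic function of $w$: with $P$ the projector onto the row space of $\Bmat$, we have $cPu=\Bmat^\dagger(y-w\xcol)$. So $c^2\|u\|_2^2$ is not conditionally independent of $w$; only the $\ker\Bmat$ component is.

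More seriously, a L\'evy concentration-function bound cannot give the norm-gap claim. It bounds the conditional mass of every short interval, but says nothing about where the conditional law of $\|v\|_2^2$ sits relative to the fixed thresholds $L^2,U^2$. A conditional law lying entirely above $U^2$ is as anti-concentrated as you like, yet the answer $+1$ is then always correct. So ``the same bound shows'' that both labels have comparable conditional mass is unsupported. Anti-concentration does give norm-estimation hardness, but that does not imply norm-gap hardness; the implication runs the other way.

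The $w$-posterior also cannot supply the missing location information. The term $c^2(\|u\|_2^2-1)$ fluctuates at scale $\asymp c^2/\sqrt k$, which is much larger than $\alpha$ whenever $\alpha\ll1/\sqrt k$, while $w$ ranges over a window of width $O(\alpha/c)$. In that regime it is the noise, not $w$, that decides on which side of the gap $\|v\|_2$ falls.

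The paper makes the invisible noise the source of hardness, instead of treating it as a harmless addition that ``only adds spread.'' Reveal $(w,Pu)$, which determines $y$, and write
\[
\|v\|_2^2=w^2+\frac{c^2}{m}(Q+X),
\]
where $Q=m\|Pu\|_2^2\sim\chi^2_q$ with $q=\rank(\Bmat)\le k$, and $X\sim\chi^2_d$ with $d=m-q\ge39k$ is independent of everything the responder sees. Let $\mathcal A=\{w\le1,\ q-\sqrt q\le Q\le q\}$, an event of constant probability. On $\mathcal A$:
\begin{itemize}[leftmargin=1.7em]
\item $X\le d$ forces $\|v\|_2^2\le1+c^2\le L^2$, since $q+d=m$;
\item $X\ge d+D\sqrt d$ forces $\|v\|_2\ge U$ for a constant $D=D(c,C)$, using $1-w^2\le20\alpha/c$, $\sqrt q\le\sqrt k$, and $\alpha\sqrt k\le C$.
\end{itemize}
Both $X$-events have probability bounded below uniformly in $d$. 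Hence, conditional on every observation in $\mathcal A$, both labels are likely, and every norm-gap responder has constant expected loss. Thresholding as in \eqref{eq:threshold-response} then transfers this bound to norm estimation at relative tolerance $\alpha/16$. This route needs no posterior analysis and no uniformity in $y$, so the obstacle you flagged disappears. It also holds for every choice of $h$, not only on the initialization event.
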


\begin{proof}
With probability at least $0.95$ over $h$, \Cref{cor:random} gives
$\Vinfo(I_m/m)\ge1/(40k)$, including infinite variance. Thus
$c\sqrt{\Vinfo(I_m/m)}\ge\tau\alpha$ for
$\tau=c/(\sqrt{40}\,C)$. By \Cref{lem:large-variance-hardness},
every response rule has signal-gap loss at least
$\delta_{\rm sg}:=\delta_{\rm hard}(c,\tau)>0$.

For the norm task, fix any $h$. Let $P$ project onto the row space of
$\Bmat$, put $q=\rank(\Bmat)\le k$, and set $d=m-q\ge39k$.
Even revealing $w$ and $Pu$, which together determine the sketch,
leaves an independent Gaussian component in $\ker\Bmat$. Hence
\[
 \|v\|_2^2=w^2+\frac{c^2}{m}(Q+X),\qquad
 Q=m\|Pu\|_2^2\sim\chi_q^2,\quad X\sim\chi_d^2,
\]
with $X$ independent of the revealed observation.
On $\mathcal A:=\{w\le1,\ q-\sqrt q\le Q\le q\}$, the event
$X\le d$ implies $\|v\|_2\le L$, whereas
$X\ge d+D\sqrt d$ implies $\|v\|_2\ge U$, for a constant
$D=D(c,C)$. Indeed, use $1-w^2\le20\alpha/c$,
$q-Q\le\sqrt k$, $\alpha\sqrt k\le C$, and the thresholds
\eqref{eq:norm-gap-thresholds}.
The probabilities of $q-\sqrt q\le Q\le q$, $X\le d$, and
$X\ge d+D\sqrt d$ have positive constant lower bounds uniformly
in $q,d$: the central limit theorem gives positive limits, and each
of the finitely many remaining probabilities is positive
(take $Q=0$ when $q=0$).
Since $\Pr(w\le1)=5/(10+c)$ independently of $Q$, $\mathcal A$
also has constant probability. Conditional on each observation in
$\mathcal A$, both norm labels therefore have positive constant
probability. Every norm-gap responder consequently has expected loss
at least some $\delta_{\rm ng}=\delta_{\rm ng}(c,C)>0$.
Thresholding as in \eqref{eq:threshold-response} maps every norm
estimate with relative error at most $\alpha/16$ to a correct gap
answer, so the same bound holds for norm-estimation errors.

Set $\delta_{\rm iso}=\min\{\delta_{\rm sg},\delta_{\rm ng}\}$.
On successful initialization, both expected-loss bounds hold
conditional on every preceding history. Martingale Hoeffding and a
union bound show that both empirical losses are at least
$\delta_{\rm iso}/2$, except with probability at most
$2\exp(-r\delta_{\rm iso}^2/2)\le0.05$ for a sufficiently large
absolute factor in $r=\Theta(k)$. The overall success probability
is at least $0.95^2>0.9$. No test against $\delta_0$ is needed.
\end{proof}

\section{Implementation and resource bounds}
\label{app:implementation}

We give the eigenvalue formulas and compact covariance representations
used to implement the attacks.  The bounds count arithmetic operations
and words of working memory; they do not include a finite-precision
analysis.

\subsection{The general attack}
\label{sec:general-computation}

The iterates have a compact representation: each update adds at most one
direction to the subspace on which the covariance differs from an
isotropic matrix.

\paragraph{Entropy projection.}
Diagonalize the unprojected iterate as
$\widehat\Sig=U\diag(\widehat s_1,\ldots,\widehat s_m)U^\top$.
The search space is orthogonally invariant.  For fixed eigenvalues of
$\Sig'$, the trace inequality maximizes
$\tr(\Sig'\log\widehat\Sig)$ by aligning its eigenvectors with those of
$\widehat\Sig$ in the same order.  The remaining scalar minimization
of $\sum_i s_i\log(s_i/\widehat s_i)$ gives
\begin{equation}\label{eq:entropy-projection-clipping}
 s_i=\max\{q_0,\tau\widehat s_i\},\qquad
 \sum_i s_i=1.
\end{equation}
No upper clipping is needed: the floor and trace constraint imply
$s_i\le1-(m-1)q_0=\lambda$.
The scale $\tau>0$ can be found by sorting the breakpoints
$q_0/\widehat s_i$ and solving the linear equation on the appropriate
interval. Equal eigenvalues may be grouped with their multiplicities.

\begin{proposition}[Structure of the iterates]\label{prop:iterate-structure}
For every $t\ge0$, $\Sig_t=\beta_tI_m+U_t\Lambda_tU_t^\top$ for some
$\beta_t>0$, an $m\times d_t$ matrix $U_t$ with orthonormal columns,
$d_t\le\min\{t,m\}$, and a diagonal matrix $\Lambda_t$.
\end{proposition}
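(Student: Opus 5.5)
The plan is induction on $t$, tracking the subspace $V_t=\im(U_t)$ on which $\Sig_t$ differs from a multiple of the identity. At $t=0$, $\Sig_0=I_m/m$, so we take $\beta_0=1/m$, $d_0=0$, and an empty $U_0$. For the inductive step, assume $\Sig_t=\beta_tI_m+U_t\Lambda_tU_t^\top$ with $d_t\le\min\{t,m\}$. Every matrix built in round $t$ should then act as a scalar on $V_t^\perp$ after one more direction is added.

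First we show the new measurement enlarges the subspace by at most one dimension. Since $\Sig_t$ commutes with the projector onto $V_t$, so does $R_t=\Sig_t^{1/2}$. Hence $R_t^{-\top}=R_t^{-1}$ equals $\beta_t^{-1/2}$ on $V_t^\perp$ and preserves $V_t$. The vector $a_t=R_t^{-1}\widehat q_t$ is arbitrary, so we set $W=V_t+\operatorname{span}\{a_t\}$, which has $\dim W\le d_t+1$. The matrices $\log\Sig_t$ and $Z_t=q_0a_ta_t^\top$ both preserve $W$ and $W^\perp$. On $W^\perp$ they equal $(\log\beta_t)I$ and $0$, respectively.

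Next we track the update through the exponential and the projection. Because $\log\Sig_t+\eta Z_t$ preserves $W$ and $W^\perp$, so does its exponential, and on $W^\perp$ the exponential is the scalar $\beta_t$. After normalization, $\widehat\Sig_{t+1}$ is a scalar $\widehat\beta$ on $W^\perp$ and some positive definite matrix on $W$. The projection is given by \eqref{eq:entropy-projection-clipping}. It keeps the eigenvectors of $\widehat\Sig_{t+1}$ and maps each eigenvalue through the single monotone function $s\mapsto\max\{q_0,\tau s\}$. It therefore preserves the decomposition, and on $W^\perp$ it is the scalar $\beta_{t+1}:=\max\{q_0,\tau\widehat\beta\}>0$.

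Finally we read off the structure. Diagonalize $\Sig_{t+1}-\beta_{t+1}I$ restricted to $W$, which vanishes on $W^\perp$. Taking its eigenvectors as the columns of $U_{t+1}$ and its eigenvalues as $\Lambda_{t+1}$, which may be negative, gives the claimed form with $d_{t+1}\le d_t+1\le t+1$. The bound $d_{t+1}\le m$ is trivial. If $W^\perp=\{0\}$, any $\beta_{t+1}>0$ works, for example $q_0$, since all of $\Sig_{t+1}-\beta_{t+1}I$ is then absorbed into $U\Lambda U^\top$.

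The main obstacle is justifying that the projection respects the invariant decomposition when eigenvalues are repeated across $W$ and $W^\perp$. This is handled by the spectral-calculus form of \eqref{eq:entropy-projection-clipping}: $\Sig_{t+1}=g(\widehat\Sig_{t+1})$ for the scalar function $g(s)=\max\{q_0,\tau s\}$. A function of a matrix is well defined regardless of multiplicities and commutes with every invariant-subspace decomposition.
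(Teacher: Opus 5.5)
Your proof is correct and follows essentially the same route as the paper: induction on the subspace $\im(U_t)+\operatorname{span}(a_t)$, which together with its orthogonal complement is invariant under both $\log\Sig_t$ and $Z_t$, hence under the exponential step, and then preserved by the projection because \eqref{eq:entropy-projection-clipping} acts as a single scalar function of the eigenvalues. Your explicit handling of repeated eigenvalues through spectral calculus, and of the case $W^\perp=\{0\}$, fills in details that the paper's terse argument leaves implicit.
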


\begin{proof}
Initially $\Sig_0=I_m/m$ and $d_0=0$.  For the inductive step, let
$V=\im(U_t)+\operatorname{span}(a_t)$, of dimension at most $d_t+1$.
Both $\Sig_t$ and $Z_t$ preserve $V$; on $V^\perp$ they act as
$\beta_tI$ and zero.  Hence the exponential update also preserves $V$
and is isotropic on $V^\perp$.  The projection keeps eigenvectors and
maps equal eigenvalues to equal eigenvalues, by
\eqref{eq:entropy-projection-clipping}.  The projected iterate
therefore has the asserted form with $d_{t+1}\le\min\{d_t+1,m\}$.
\end{proof}

\paragraph{Arithmetic and working memory.}
Store $\Sig_t$ as $(\beta_t,U_t,\Lambda_t)$.  Its square root is
\[
 R_t=\sqrt{\beta_t}I_m+
 U_t\bigl((\beta_tI+\Lambda_t)^{1/2}-\sqrt{\beta_t}I\bigr)U_t^\top,
\]
and its inverse square root has the analogous form.  Applying either
matrix to a vector costs $O(m(1+d_t))$ operations.  An update expands
the basis by at most one vector, diagonalizes a matrix of order at most
$d_t+1$, and applies \eqref{eq:entropy-projection-clipping} to its
eigenvalues and the repeated eigenvalue on $V^\perp$.  Using standard
dense linear-algebra costs, this takes
$O(m(1+d_t)^2+(1+d_t)^3)$ operations.
Since $r\ge48m$, the round cost is $O(rm(1+d_t))$.

Writing $Q=rT_{\max}$, the resulting bounds are
\begin{equation}\label{eq:general-arithmetic-cost}
 \begin{aligned}
 &O\bigl(Qm\min\{T_{\max},m\}\bigr)
       &&\text{arithmetic operations},\\
 &O\bigl(m\min\{T_{\max},m\}\bigr)
       &&\text{words of working memory}.
 \end{aligned}
\end{equation}
The working memory includes the current query, the running sums, and
the coordinate support.  A returned round is identified by its query indices in the attacker's
record of generated inputs and returned answers, which can be written
as the interaction proceeds.  This record does not include the
responder's private part of $\mathsf{Tr}$.  If an explicit copy of all round
vectors is required, its $O(rm)$ output storage is additional to the
working-memory bound.  No list of past covariances is stored.

For the norm-estimation parameters, these bounds become
$\widetilde O(k^2+\alpha^4k^4)$ operations and
$\widetilde O(k+\alpha^2k^2)$ working memory.
At $\alpha=\Theta(k^{-1/2})$, the cost per query is
$\widetilde O(k)$, comparable to writing its nonzero coordinates.
At constant $\alpha$, the rank bound permits a full-dimensional
representation.

\subsection{The sufficient-statistic attack}
\label{sec:sufficient-implementation}

\begin{proposition}[Compact iterates]
\label{prop:sufficient-structure}
Every iterate has the form
$\Sig_t=\beta_tI_m+U_t\Lambda_tU_t^\top$, with $\beta_t>0$,
an $m\times d_t$ matrix $U_t$ with orthonormal columns,
$d_t\le\min\{t,m\}$, and diagonal $\Lambda_t\succeq0$.
The update and the top eigenvalue needed for the stopping test
can be computed in a subspace of dimension at most $d_t+1$.
\end{proposition}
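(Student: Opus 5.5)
We argue by induction on $t$, starting from $\Sig_0=I_m/m$, which has the stated form with $\beta_0=1/m$, $d_0=0$ and empty $U_0$.

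For the inductive step, suppose $\Sig_t=\beta_tI_m+U_t\Lambda_tU_t^\top$ with $\Lambda_t\succeq0$ diagonal. Let $V=\im(U_t)+\operatorname{span}(p)$, where
\[
 p=\Sig_t^{1/2}\widehat h_t,\qquad
 \Sig_t^{1/2}=\sqrt{\beta_t}\,I_m+U_t\bigl((\beta_tI+\Lambda_t)^{1/2}-\sqrt{\beta_t}\,I\bigr)U_t^\top .
\]
Then $\dim V\le d_t+1$. The update \eqref{eq:natural-update} gives
\[
 \Sig_{t+1}=\frac{\beta_tI_m+U_t\Lambda_tU_t^\top+pp^\top}{1+\|p\|_2^2}.
\]
The matrix $U_t\Lambda_tU_t^\top+pp^\top$ is positive semidefinite and supported on $V$. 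Choose an orthonormal basis $U'$ of $V$ by one Gram--Schmidt step: append the normalized component of $p$ orthogonal to $\im(U_t)$, if it is nonzero. Express this matrix in that basis as a $(d_t+1)\times(d_t+1)$ PSD matrix and diagonalize it as $W\Lambda'W^\top$. Setting
\[
 \beta_{t+1}=\frac{\beta_t}{1+\|p\|_2^2},\qquad
 U_{t+1}=U'W,\qquad
 \Lambda_{t+1}=\frac{\Lambda'}{1+\|p\|_2^2}
\]
yields the claimed form. Here $\beta_{t+1}>0$ and $\Lambda_{t+1}\succeq0$, and $d_{t+1}\le\min\{d_t+1,m\}$. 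When $V=\R^m$ we simply absorb it, so $d\le m$ always holds. This proves the structural claim.

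For the stopping test, the eigenvalues of $\Sig_{t+1}$ are $\beta_{t+1}$ on $V^\perp$, and $\beta_{t+1}$ plus the diagonal entries of $\Lambda_{t+1}$ on $V$. Thus
\[
 \lambda_{\max}(\Sig_{t+1})=\beta_{t+1}+\max\bigl\{0,\lambda_{\max}(\Lambda_{t+1})\bigr\},
\]
which is read off from the small eigendecomposition already computed. Every other operation also stays in the subspace. Computing $p$, $\|p\|_2$ and the Gram--Schmidt step costs $O(m(1+d_t))$. Forming the small matrix costs $O(m(1+d_t)^2)$, and diagonalizing it costs $O((1+d_t)^3)$.

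The only point needing care is that the new direction $p$ generally lies outside $\im(U_t)$, so $p$ must be mapped through $\Sig_t^{1/2}$ before the basis is extended. Because $\Sig_t^{1/2}$ itself preserves $\im(U_t)$ and acts as a scalar on its complement, the image $p$ lies in $\im(U_t)+\operatorname{span}(\widehat h_t)$. Hence the subspace grows by at most one dimension per update, and this controls the rank bound $d_t\le t$.
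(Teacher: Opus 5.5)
Your proof is correct and follows essentially the same route as the paper. Both argue by induction and note that $\Sig_t^{1/2}$ is isotropic off $\im(U_t)$, so the rank-one term $pp^\top$ enlarges the non-isotropic subspace by at most one dimension. Both then diagonalize on that subspace, with floor $\beta_t/(1+\|p\|_2^2)$, and read off the top eigenvalue there. The only cosmetic difference is that you take $V=\im(U_t)+\operatorname{span}(p)$ rather than the paper's $\im(U_t)+\operatorname{span}(\widehat h_t)$, which makes the dimension count immediate.
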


\begin{proof}
The representation holds initially. For the inductive step, put
$p=\Sig_t^{1/2}\widehat h_t$ and $s=\|p\|_2^2$.
The square root is isotropic on $\im(U_t)^\perp$, so $p$ lies in
$V=\im(U_t)+\operatorname{span}(\widehat h_t)$.
The rank-one update preserves $V$ and remains isotropic on $V^\perp$,
with floor $\beta_t/(1+s)>0$ and a positive semidefinite remainder.
Diagonalizing the restriction to $V$ therefore gives the next compact
representation and its top eigenvalue; eigenvalues outside $V$ equal
the floor.
\end{proof}

Applying the square root to a query costs $O(m(1+d_t))$ operations.
Updating the basis and diagonalizing the restricted covariance costs
$O(m(1+d_t)^2+(1+d_t)^3)$.
Summing over at most $T_{\max}$ updates gives
$O(QmT_{\max}+mT_{\max}^3)$ operations and $O(mT_{\max})$ working memory, using
$d_t\le\min\{t,m\}$.
As in \Cref{sec:general-computation}, the returned round is identified
by its query indices in the attacker's record; storing an additional
copy of all its vectors is output storage.

\end{document}